\documentclass[a4paper,10pt]{article}
\mathcode`\<="4268     
\mathcode`\>="5269     
\mathcode`\:="603A     
\mathchardef\gt="313E  
\mathchardef\lt="313C  
\usepackage{mathbbol}
\usepackage{amsthm}
\usepackage{bussproofs}
\usepackage{cmll}
\usepackage{tikz}
\theoremstyle{definition}

\def\pos{\ensuremath{\xymatrix@C=4ex{\ar@{|->}[r]^{+}&}}}
\def\neg{\ensuremath{\xymatrix@C=4ex{\ar@{|->}[r]^{-}&}}}
\def\rll{\ensuremath{\textup{RLL}}}

\def\aa#1#2{\ensuremath{\catebf{A}_{#1#2}}}

\def\ii#1#2{\ensuremath{\catebf{I}_{#1#2}}}

\def\AA#1#2{\ensuremath{#1\arr#2}}
\def\EE#1#2{\ensuremath{#1\arr\b\leftarrow#2}}
\def\II#1#2{\ensuremath{#1\leftarrow\b\arr#2}}
\def\OO#1#2{\ensuremath{#1\leftarrow\b\arr\b\leftarrow#2}}
\def\Ao#1#2{\ensuremath{#2\leftarrow#1}}
\def\Eo#1#2{\ensuremath{#2\arr\b\leftarrow#1}}
\def\Io#1#2{\ensuremath{#2\leftarrow\b\arr#1}}
\def\Oo#1#2{\ensuremath{#2\arr\b\leftarrow\b\arr#1}}

\newtheorem{theorem}{Theorem}[section]
\newtheorem{lemma}[theorem]{Lemma}
\newtheorem{proposition}[theorem]{Proposition}

\newtheorem{definition}[theorem]{Definition}
\newtheorem{remark}[theorem]{Remark}
\newtheorem{notation}[theorem]{Notation}
\newtheorem{example}[theorem]{Example}
\newtheorem{examples}[theorem]{Examples}
\newtheorem{terminology}[theorem]{Terminology}

\usepackage[english]{babel}
\usepackage[all]{xy}
\usepackage{txfonts}
\usepackage{dsfont}
\usepackage{bbm}
\usepackage{latexsym}

\title{Syllogisms in Rudimentary Linear Logic, Diagrammatically
\footnote{The final publication is available at springerlink.com http://link.springer.com/article/10.1007\%2Fs10849-012-9170-4\#page-1}}

\author{Ruggero Pagnan\\
DISI, University of Genova, Italy\\
\texttt{ruggero.pagnan@disi.unige.it}}
\date{}
\begin{document}
\maketitle
\begin{abstract}
We present a reading of the traditional syllogistics in a fragment of the propositional intuitionistic multiplicative linear logic and prove that with respect to a diagrammatic logical calculus that we introduced in a previous paper, a syllogism is provable in such a fragment if and only if it is diagrammatically provable. We extend this result to syllogistics with complemented terms \`a la De Morgan, with respect to a suitable extension of the diagrammatic reasoning system for the traditional case and a corresponding reading of such De Morgan style syllogistics in the previously referred to fragment of linear logic.\\
{\bf Keywords:} syllogism, linear logic, diagrammatic reasoning, proof-nets.
\end{abstract}

\def\a{\ensuremath{(a)}}
\def\adj#1#2{\ensuremath{\xymatrix@C.2cm{#1\ar@{-|}[r]&#2}}}
%
\def\adjpair#1#2#3#4#5{\ensuremath{\xymatrix{#1\ar@/^/[r]^{#3}_{\hole}="1"&#2\ar@/^/[l]^{#4}_{\hole}="2" \ar@{} "2";"1"|(.3){#5}}}}
%
\def\and{\ensuremath{\wedge}}
%
\def\arr{\ensuremath{\rightarrow}}
%
\def\Arr{\ensuremath{\Rightarrow}}
%
%
%
\def\arrow#1{\ensuremath{\cateb{#1}^{\arr}}}
%
\def\as{\ensuremath{\ast}}
%
%
\def\b{\ensuremath{\bullet}}
%
\def\BC{\ensuremath{(BC)}}
%
\def\BCd{\ensuremath{(BC)_d}}
\def\beps{\ensuremath{\backepsilon}}
%
\def\bicat#1{\ensuremath{\mathcal{#1}}}
%
\def\bf#1{\ensuremath{\mathbf{#1}}}
%
\def\bpararrow#1#2#3#4{\ensuremath{\xymatrix{#1\ar@<.9ex>[rr]^{#3}\ar@<-.9ex>[rr]_{#4}&&#2}}}
%
\def\bsquare#1#2#3#4#5#6#7#8{\xymatrix{#1\ar[dd]_{#8}\ar[rr]^#5&&#2\ar[dd]^{#6}\\
\\
#3\ar[rr]_{#7}&&#4}}
\def\bu{\ensuremath{\bullet}}
\def\btoind#1{#1\index{#1@\textbf{#1}}}
%
\def\card#1{\ensuremath{|#1|}}
\def\cate#1{\ensuremath{\mathcal{#1}}}
\def\cateb#1{\ensuremath{\mathbbm{#1}}}
%
\def\catebf#1{\ensuremath{\mathbf{#1}}}
\def\catepz#1{\ensuremath{\mathpzc{#1}}}
%
\def\cateop#1{\ensuremath{\cate#1^{op}}}
\def\comparrow#1#2#3#4#5#6{\ensuremath{\xymatrix{#6#1\ar[r]#4&#2\ar[r]#5&#3}}}
\def\cons#1{\ensuremath{\newdir{|>}{%
!/4.5pt/@{|}}\xymatrix@1@C=.3cm{\ar@{|=}[r]_{#1}&}}}
\def\D{\ensuremath{(D)}}
\def\enunciato#1#2#3#4#5#6#7{\newtheorem{#4}#5{#1}#6                                              
\begin{#4}#2\label{#7}
#3
\end{#4}}
%
\def\dfn#1#2#3#4#5#6#7{\newtheorem{#4}#5{#1}#6                                              
\begin{#4}#2\label{#7}
\emph{#3}
\end{#4}}
%
\def\epiarrow#1#2#3#4{\ensuremath{\newdir{|>}{%
!/4.5pt/@{|}*:(1,-.2)@^{>}*:(1,+.2)@_{>}}\xymatrix{#3#1\ar@{-|>}[r]#4&#2}}}
\def\epitip{\ensuremath{\newdir{|>}{%
!/4.5pt/@{|}*:(1,-.2)@^{>}*:(1,+.2)@_{>}}}}
\def\eps#1{\ensuremath{\epsilon#1}}
\def\esempio#1#2#3#4#5#6#7{\newtheorem{#4}#5{#1}#6                                              
\begin{#4}#2\label{#7}
\textup{#3}
\begin{flushright}
$\blacksquare$
\end{flushright}
\end{#4}}
%
\def\et#1{\ensuremath{\eta#1}}
\def\Ex#1{\mbox{\boldmath\ensuremath{\exists}}{#1}}
\def\ex#1{\ensuremath{\exists#1}}
\def\F{\ensuremath{(F)}}
\def\farrow#1#2#3{\ensuremath{\xymatrix{#3:#1\ar[r]&#2}}}
\def\Fi{\ensuremath{\Phi}}
%
\def\fib#1#2#3{\ensuremath{
\catebf{#3}:\cateb{#1}\arr\cateb{#2}}}
%
%
\def\fibs#1#2#3#4{\ensuremath{\begin{array}{ll}
\slice{\catebf{#1}}{#2}\\
\,\downarrow^{\catebf{#4}}\\
\catebf{#3}
\end{array}}}
\def\fibss#1#2#3#4#5{\ensuremath{\begin{array}{ll}
\slice{\catebf{#1}}{#2}\\
\,\downarrow^{#5}\\
\slice{\catebf{#3}}{#4}
\end{array}}}
\def\forev#1{\ensuremath{\forall~#1,~\forall~}}
\def\fract#1{\ensuremath{\cateb{#1}[\Sigma^{-1}]}}
%
\def\G#1{\ensuremath{\int{#1}}}
%
%
\def\harrow{\ensuremath{\rightharpoonup}}
%
\def\hook{\ensuremath{\hookrightarrow}}
\def\hset#1#2#3{\ensuremath{\cate{#1}(#2,#3)}} 
\def\Im#1{\ensuremath{\mathit{Im}#1}}
\def\implies{\ensuremath{\Rightarrow}}
\def\indprod#1#2{\ensuremath{\prod#1#2}}
%
\def\infer#1#2{\ensuremath{\left.\begin{array}{cc}
#1\\
\xymatrix{\ar@{-}@<-.3ex>[rrrr]\ar@{-}@<.3ex>[rrrr]&&&&}\\
#2\\
\end{array}\right.}}
%
%
\def\it#1{\ensuremath{\mathit{#1}}}
%
\def\l#1{\ensuremath{#1_{\bot}}}
\def\larrow{\ensuremath{\leftarrow}}
\def\lcomparrow#1#2#3#4#5{\ensuremath{\xymatrix{#1\ar[rr]#4&&#2\ar[rr]#5&&#3}}}
\def\lfract#1{\ensuremath{[\Sigma^{-1}]\cateb{#1}}}
%
\def\lhookarrow#1#2#3{\ensuremath{\xymatrix{#1\ar@{^{(}->}[rr]^{#3}&&#2}}}
\def\mod#1#2{\ensuremath{\cateb{Mod}(#1,#2)}}
\def\Mono#1{\ensuremath{\mathit{Mono}#1}}
\def\monoarrow#1#2#3#4{\ensuremath{\newdir{
>}{{}*!/-5pt/@{>}}\xymatrix{#3#1\ar@{ >->}[r]#4&#2}}}
\def\monotail{\ensuremath{\newdir{
>}{{}*!/-5pt/@{>}}}}
\def\N{\ensuremath{(N)}}
\def\natarrow#1#2#3{\ensuremath{\newdir{|>}{%
!/4.5pt/@{|}*:(1,-.2)@^{>}*:(1,+.2)@_{>}}\xymatrix{#3:#1\ar@{=|>}[r]&#2}}}
\def\natbody{\ensuremath{\newdir{|>}{%
!/4.5pt/@{|}*:(1,-.2)@^{>}*:(1,+.2)@_{>}}}}
\makeatletter
\def\natpararrow#1#2#3#4#5{\@ifnextchar(
 {\Natpararrow{#1}{#2}{#3}{#4}{#5}}
 {\Natpararrow{#1}{#2}{#3}{#4}{#5}(.5)}}
\makeatother
\def\Natpararrow#1#2#3#4#5(#6){\ensuremath{\newdir{|>}{%
!/4.5pt/@{|}*:(1,-.2)@^{>}*:(1,+.2)@_{>}}
\xymatrix{#1\ar@<1.5ex>[rr]^(#6){#3}|(#6){\vrule height0pt depth.7ex
width0pt}="a"\ar@<-1.5ex>[rr]_(#6){#4}|(#6){\vrule height1.5ex width0pt}="b"&&#2
\ar@{=|>} "a";"b"#5}}}
\def\om{\ensuremath{\omega}}
%
\def\ov#1{\ensuremath{\overline{#1}}}
%
\def\ovarr#1{\ensuremath{\overrightarrow{#1}}}
%
\def\pararrow#1#2#3#4{\ensuremath{\xymatrix{#1\ar@<.8ex>[r]^{#3}\ar@<-.8ex>[r]_{#4}&#2}}}
\def\p#1{\ensuremath{\mathds{P}_{\catepz{S}}}(#1)}
\def\P#1{\ensuremath{\mathnormal{P}(#1)}}
\def\pr#1{\textbf{Proof:~}#1 
\begin{flushright}
$\Box$
\end{flushright}
}
%
%
%
%
\def\rel#1#2#3{\ensuremath{\xymatrix{#1:#2\ar[r]
|-{\SelectTips{cm}{}\object@{|}} &#3}}}
%
%
\def\res#1#2{\ensuremath{#1_{\rbag #2}}}
\def\S#1{\ensuremath{\catepz{S}_{\cateb{#1}}}}
\def\sectoind#1#2{#1\index{#2!#1}}
%
\def\sinfer#1#2{\ensuremath{\left.\begin{array}{cc}
#1\\
\xymatrix{\ar@{-}[rrrr]&&&&}\\
#2\\
\end{array}\right.}}
%
\def\slice#1#2{\ensuremath{#1/{\textstyle#2}}}
\def\sslice#1#2{\ensuremath{#1//{\textstyle#2}}}
\def\square#1#2#3#4#5#6#7#8{$$\xymatrix{#1\ar[d]_{#8}\ar[r]^{#5}&#2\ar[d]^{#6}\\
#3\ar[r]_{#7}&#4}$$}
\def\sub#1#2#3{\ensuremath{#1(#2#3)~}}
\def\summ#1{\ensuremath{\sum#1}}
%
\def\ti#1{\ensuremath{\tilde#1}}
%
\def\tens{\ensuremath{\otimes}}
%
\def\teo#1{\ensuremath{\tau#1}}
\def\ter#1{\ensuremath{\tau_{\cate{#1}}}}
\def\teta{\ensuremath{\theta}}
%
%
\def\toind#1{#1\index{#1}}
%
\def\un#1{\underline{#1}}
\def\vcell{\ensuremath{\ar@<-1ex>@{}[r]_{\hole}="a"\ar@<+1ex>@{}[r]^{\hole}="b"}}
%
\def\veps{\ensuremath{\varepsilon}}
%
\def\vfi{\ensuremath{\varphi}}
\def\w{\ensuremath{\wedge}}
\def\wti#1{\ensuremath{\widetilde#1}}
\def\y#1{\ensuremath{y#1}}

\section{Introduction}
In this paper we investigate the existing connections between the diagrammatic calculus for the traditional Aristotelian syllogistic that we introduced in~\cite{Pagnan2012-PAGADC} as a logical formal system $\textup{SYLL}^+$, and a reading of the traditional Aristotelian syllogistic itself in a minimal extension $\rll^{\bot}$ of the propositional multiplicative fragment of intuitionistic linear logic which is \emph{rudimentary linear logic}, see~\cite{MR1157804}. So, this paper represents a further attempt towards the building of a bridge between the frameworks of diagrammatic logical reasoning and of linguistic logical reasoning, in their generality. The specific nature of the logical systems we will be dealing with makes things interesting. On one side, the diagrammatic system at issue is tailored to syllogistics and has remarkable features. On the other side, linear logic proved itself very useful in providing new insight into well established contexts of knowledge. Hopefully, the investigation of the interaction between the two would contribute to the maintainance of a lively interest in the study and deeper understanding of syllogistic reasoning.

In the multiplicative fragment of classical linear logic, proofs in the corresponding sequent calculus are conveniently represented graphically as \emph{proof-nets}, capturing their essential geometric content. In~\cite{Abrusci}, the traditional syllogistic is investigated in the classical multiplicative linear logic through a geometric analysis of the Aristotelian reduction rules, via proof-nets. Our reading of the Aristotelian syllogistics within $\rll^{\bot}$ is based on four propositional formulas that we named \emph{categorical formulas}, which suitably translate the corresponding four Aristotelian categorical propositions. 
In~\cite{Abrusci}, such categorical formulas can also be found but we came to study them independently. Our diagrammatic calculus of syllogisms captures the essential geometric content of the corresponding intuitionistic proofs in the sequent calculus of $\rll^{\bot}$.

Our approach extends to the 19th century De Morgan's syllogistics. For this, an extension of $\textup{SYLL}^+$ has to be considered, which was not in our previous paper, together with suitable \emph{new categorical formulas}.
As far as we know, the new categorical formulas and the linear logic reading of De Morgan's syllogistics by means of them is completely new.

An extended contextualization of the paper with respect to the framework of diagrammatic reasoning is dealt with in section~\ref{context}, which contains an overview of the most prominent diagrammatic reasoning systems, with a focus on those founded on Euler diagrams in particular. An extended motivation of the paper with respect to the contexts of linear logic and some recent contributions in the field of syllogistic reasoning with diagrams is dealt with in section~\ref{motivation}.
\subsection{Structure of the paper}
In section~\ref{syllru} we recall the basics of the traditional syllogistic and connect it to rudimentary linear logic in section~\ref{syllrll}, where we prove that a (strengthened) syllogism is valid if and only if the sequent which corresponds to it is provable in the sequent calculus of $\rll^{\bot}$.
In section~\ref{sec4} the systems $\rll^{\bot}$ and $\textup{SYLL}^{+}$ are put in direct connection and we prove that a (strengthened) syllogism is provable in the first if and only if it is provable in the second.
In section~\ref{redrules}, the Aristotelian reduction rules for the syllogistics are briefly discussed. They were not dealt with in~\cite{Pagnan2012-PAGADC}.
In section~\ref{demorgan}, the modern 19th century De Morgan style syllogistics with complemented terms is considered, which was not in~\cite{Pagnan2012-PAGADC}. A suitable reading of it within $\rll^{\bot}$ is considered too.
Furthermore, an extension $\textup{SYLL}^{+*}$ of $\textup{SYLL}^{+}$ which was not introduced in loc. cit is here introduced. We prove that a De Morgan syllogism is valid if and only if it is diagrammatically provable in $\textup{SYLL}^{+*}$ and that a De Morgan syllogism is provable in $\rll^{\bot}$ if and only if it is diagrammatically provable in $\textup{SYLL}^{+*}$.
In section~\ref{comm}, we compare proof-nets with our diagrammatic calculus.\\
\linebreak
\bf{Acknowledgement:} I thank the anonymous referees for their many valuable comments.

\section{Contextualization}\label{context}
The present paper is a contribution to the investigation of the diagrammatic methods in the framework of the formal logical reasoning, see~\cite{275772}. The investigation of logical reasoning with diagrams has been pursued since a very long time, see~\cite{Gardner}, but commonly we tend to look at that form of reasoning as destined to a heuristic usage only, unsuitable for rigorous and formal deductive logical reasoning although well aware of the great immediacy with which the diagrammatic representations convey the informational contents. The primitive syntactic constituents of a typical logical formal system are linguistic symbols, but often various forms of diagrammatic representation support our deductive activity very usefully. Thus, any attempt of making a diagrammatic deductive activity mathematically rigorous has to be considered as naturally justified. 

The scientific community of mathematicians, and logicians in particular, became aware of the possibility of dealing rigorously with diagrammatic representations towards deductive logical reasoning near the end of last century, see~\cite{hammer1995logic, MR1312613, hammer:evl, MR1271699}, by introducing formal logical systems with diagrammatic primitive syntactic constituents. 

To contextualize this paper in the framework of logical reasoning with diagrams, in this section we recall some of the main features of the diagrammatic formal reasoning systems that extend the expressiveness of Euler diagrams, see~\cite{stapleton:a, Dau:2009:AFD:1560325.1560329}. We go from Euler diagrams to existential graphs, through Venn and Venn-Peirce diagrams, Venn-I and Venn-II diagrams, Euler/Venn diagrams, spider diagrams and constraint diagrams. 

Euler diagrams were introduced to illustrate the syllogistic reasoning, see~\cite{euler1843lettres}. Traditional syllogistic is based on the \emph{categorical propositions}
\[\begin{tabular}{lllll}
Each $A$ is $B$  &&&
Each $A$ is not $B$  \\
\\
Some $A$ is $B$ &&&
Some $A$ is not $B$
\end{tabular}\]
in which the letters $A, B$ are variables for meaningful expressions of the natural language referred to as \emph{terms}. In each categorical proposition, $A$ is 
the \emph{subject} and $B$ is the \emph{predicate}.
The categorical propositions ``Each $A$ is $B$'' and ``Each $A$ is not $B$'' are \emph{universal}, whereas the remaining two are \emph{particular} or \emph{existential}.

In Euler's system, the interior of a circle represents the extension of a term
and the logical relations among the terms in the categorical propositions are represented as inclusion, exclusion or intersection of circles, correspondingly as follows:
\[\begin{tikzpicture}
\draw [scale=.8] (.5,.5) circle (.7) (.5,1.03) node [text=black, scale=.8]{$B$};
\draw [scale =.8] (.5,.5) circle (.35) node (.5,.5) [text=black, scale=.8]{$A$};
\end{tikzpicture}
\qquad
\qquad
\begin{tikzpicture}
\draw [scale=.8] (.9,.5) circle (.7) (.9,.5) node [text=black, scale=.8]{$A$};
\draw [scale=.8] (2.5,.5) circle (.7) (2.5,.5) node [text=black, scale=.8]{$B$};
\end{tikzpicture}\]
\[\begin{tikzpicture}
\draw [scale=.8] (.5,.5) circle (.7) (.9,.5) node [text=black,scale=.8] {$A$};
\draw [scale=.8] (1.3,.5) circle (.7) (1.5,.5) node [text=black, scale=.8]{$B$};
\end{tikzpicture}
\qquad
\qquad
\begin{tikzpicture}
\draw [scale=.8] (.5,.5) circle (.7) (.3,.5) node [text=black, scale=.8]{$A$};  
\draw [scale=.8] (1.3,.5) circle (.7) (1.5,.5) node [text=black, scale=.8]{$B$};
\end{tikzpicture}\]
where it has to be observed how in contrast to the diagrams for the universal categorical propositions, those for the particular ones retain a certain amount of ambiguity. The correctness of the informational content conveyed by them heavily depends on where the letter ``$A$'' is written. As observed in~\cite{MR1312613}, the two diagrams become indistinguishable if the letter ``$A$'' is deleted.
Other drawbacks of Euler's system are discussed in~\cite{MR1312613}, among which the unclearity with which contradiction is expressed in it and the difficulty in unifying the informational content of two diagrams, such as the two diagrams for the premises of a syllogism. Unification of Euler diagrams is carefully formalized in~\cite{Mineshima2012-MINADI}, which we will comment on in section~\ref{motivation}.

A \emph{region}, or \emph{zone}, of an Euler diagram is any connected region of the plane inside some circle and outside the remaining circles, if present. Zones in Euler diagrams carry existential import, that is they are assumed to represent non-empty sets. This assumption seems to be the reason why a simple and unambiguous
correspondence between categorical propositions and Euler diagrams does not subsist.
Euler diagrams can express emptyness of sets through the employment of \emph{missing regions}, see~\cite{stapleton:a}, which are not diagrammatically represented since they have no area.
It is observed in~\cite{stapleton:a} that not all the approaches to Euler diagrammatic reasoning assume existential import in regions: in~\cite{hammer1995logic} one can find a sound and complete reasoning system based on Euler diagrams without existential import in regions.

In Venn diagrams, see~\cite{Venn1880}, the existential import in regions is removed and their emptyness is indicated by shading them. Shading is the syntactic device introduced for the purpose, so that Venn diagrams extend a fragment of the language of Euler diagrams. For instance, the Venn diagram for ``Each $A$ is $B$'' is 
\[\begin{tikzpicture}[fill=black!30, scale=.8]
\scope
\clip (-1,-.5) rectangle (2.8,1.6)
      (1.3,.5) circle (.7);
\fill (.5,.5) circle (.7);
\endscope
\draw (.5,.5) circle (.7) (.5,1.2)  node [text=black,scale=.8,above] {$A$}
      (1.3,.5) circle (.7) (1.4,1.2)  node [text=black,scale=.8,above] {$B$}
      (-1,-.5) rectangle (2.8,1.6);
\end{tikzpicture}\]
which really means ``There is nothing which is $A$ and not $B$''. Moreover, observe the introduction of the syntactic device which is the rectangle, to represent a background set.
Venn diagrams cannot represent existential import nor disjunctive information. Peirce modified them to allow such possibilities by the introduction of the syntactic devices of $x$-sequences to represent existential import and of $o$-sequences to represent the absence of elements. From left to right, the Venn-Peirce diagrams for the categorical propositions ``Some $A$ is $B$'' and ``No $A$ is $B$'' follow:
\[\begin{tikzpicture}[scale=.8]
\draw (.5,.5) circle (.7) (.5,1.2)  node [text=black,scale=.8,above] {$A$}
      (1.3,.5) circle (.7) (1.4,1.2)  node [text=black,scale=.8,above] {$B$}
      (-1,-.5) rectangle (2.8,1.6);
\draw (.9,.5) node [text=black]{$x$};
\end{tikzpicture}
\qquad
\begin{tikzpicture}[scale=.8]
\draw (.5,.5) circle (.7) (.5,1.2)  node [text=black,scale=.8,above] {$A$}
      (1.3,.5) circle (.7) (1.4,1.2)  node [text=black,scale=.8,above] {$B$}
      (-1,-.5) rectangle (2.8,1.6);
\draw (.9,.5) node [text=black]{$o$};
\end{tikzpicture}\]
Disjunctive information is represented by straight lines connecting $x$'s and $o$'s, see~\cite{stapleton:a}.

In~\cite{MR1312613} the system Venn-I is introduced. It is based on a modification of the Venn-Peirce diagrams that returns to shading to represent empty regions
and uses $\otimes$-sequences to indicate existential import in them. See also~\cite{MR1408440},~\cite{MR1271699} and~\cite{MR1408441}. The Venn-I diagrams for the four categorical propositions are 
\[\begin{tikzpicture}[fill=black!30, scale=.8]
\scope
\clip (-1,-.5) rectangle (2.8,1.6)
      (1.3,.5) circle (.7);
\fill (.5,.5) circle (.7);
\endscope
\draw (.5,.5) circle (.7) (.5,1.2)  node [text=black,scale=.8,above] {$A$}
      (1.3,.5) circle (.7) (1.4,1.2)  node [text=black,scale=.8,above] {$B$}
      (-1,-.5) rectangle (2.8,1.6);
\end{tikzpicture}
\qquad
\begin{tikzpicture}[fill=black!30, scale=.8]
\scope
\clip 
      (.5,.5) circle (.7);
\fill (1.3,.5) circle (.7);
\endscope
\draw (.5,.5) circle (.7) (.5,1.2)  node [text=black,scale=.8,above] {$A$}
      (1.3,.5) circle (.7) (1.4,1.2)  node [text=black,scale=.8,above] {$B$}
      (-1,-.5) rectangle (2.8,1.6);
\end{tikzpicture}\]
\[\begin{tikzpicture}[scale=.8]
\draw (.5,.5) circle (.7) (.5,1.2)  node [text=black,scale=.8,above] {$A$}
      (1.3,.5) circle (.7) (1.4,1.2)  node [text=black,scale=.8,above] {$B$}
      (-1,-.5) rectangle (2.8,1.6);
\draw (.9,.5) node {$\otimes$};
\end{tikzpicture}
\qquad
\begin{tikzpicture}[scale=.8]
\draw (.5,.5) circle (.7) (.5,1.2)  node [text=black,scale=.8,above] {$A$}
      (1.3,.5) circle (.7) (1.4,1.2)  node [text=black,scale=.8,above] {$B$}
      (-1,-.5) rectangle (2.8,1.6);
\draw (.3,.5) node {$\otimes$};
\end{tikzpicture}\]
correspondingly. 
An example of a formal diagrammatic proof of a syllogism in Venn-I together with the explicit indication of the rules of inference employed and the omission of rectangles, follows:
\def\tta[#1,#2]#3#4#5{\begin{scope}[scale=.8,xshift=#1,yshift=#2]
#5;
\draw (0,0) circle (20) (0,20) node [scale=.8,above] {$#3$}
(20,0) circle (20) (25,20) node [scale=.8,above] {$#4$};
\end{scope}}
\def\tA[#1,#2]#3#4;{\tta[#1,#2]{#3}{#4}{\begin{scope}
\clip (-30,-30) rectangle (50,45)
(20,0) circle (20);
\fill (0,0) circle (20);
\end{scope}}}
\def\tI[#1,#2]#3#4;{\tta[#1,#2]{#3}{#4}{\node at (10,0) {$\otimes$}}}
\def\tb(#1)#2#3{\begin{scope}[scale=.8,xshift=#1,yshift=0]
#2
\begin{scope}
\clip (-30,-30) rectangle (50,45)
(20,20) circle (20);
\fill (10,0) circle (20);
\end{scope}
\draw (0,20) circle (20) (0,40) node [scale=.8,above] {$S$}
 (20,20) circle (20) (20,40) node [scale=.8,above] {$P$}
 (10,0) circle (20) (10, -20) node [scale=.8,below] {$M$};
#3
\end{scope}}
\def\mt(#1)#2#3{\node at (#1,20) {#2};
\node at (#1,12) {#3};
\node at (#1,0) {$\vdash$}}
\def\mo(#1)#2{\node at (#1,12) {#2};
\node at (#1,0) {$\vdash$}}
\begin{center}
\begin{tikzpicture}[x=1pt,y=1pt,fill=black!30, scale=.8]
\tA[-2,30]MP;
\tI[-2,-30]SM;
\draw (35,10) -- (60,2) ;
\draw (35,-10) -- (60,-2) ;
\mo(66){\footnotesize unif.};
\tb(119){}{\node at (-3,6) {$\otimes$} ;
\node at (10,12) {$\otimes$} ;
\draw (-.5,7) -- (7,11) ;}
\mt(145){\footnotesize erasure}{\footnotesize of link};
\tb(222){}{\node at (10,12) {$\otimes$} ;}
\mo(232){\footnotesize extension};
\end{tikzpicture}
\begin{tikzpicture}[x=1pt,y=1pt,fill=black!30, scale=.8]
\mo(2){\footnotesize extension};
\tb(50){}{\node at (10,28) {$\otimes$} ;
\node at (10,12) {$\otimes$} ;
\draw (10,15) -- (10,25) ;}
\mt(99){\footnotesize erasure of}{\footnotesize closed curve};
\tI[175,0]SP;
\end{tikzpicture}
\end{center}
The unification rule captures the conjuction of the semantic contents in the diagrams it applies. The erasure of link rule allows the erasure of a part of 
an $\otimes$-sequence if that part is in a shaded zone. The erasure of closed curve is an instance of the rule of erasure of a diagrammatic object which is a shading, an $\otimes$-sequence or a closed curve. The extension rule allows the introduction of an extra link to an existing $\otimes$-sequence. To handle contradiction, among the inference rules of Venn-I there is also the \emph{rule of conflicting information}: if a diagram has a zone with both a shading and an entire $\otimes$-sequence, then any diagram follows. 

In order to improve its expressiveness, the system Venn-I has been further extended to the system Venn-II for the diagrammatic representation of the informational content in statements with truth-functional connectives such as ``or'' and ``if\ldots, then\ldots''. In Venn-II, Venn-I diagrams are allowed to be connected by straight lines, see~\cite{MR1312613}.

Euler/Venn diagrams, see~\cite{swoboda:ievrs, Swoboda:2005:HRE:1705531.1705841}, represent a modification of Venn-I diagrams which consists of diagrams with both Euler-like and Venn-like features. Shadings and $\otimes$-sequences are available but named constant sequences are available as well to represent the presence of particular individuals in a set. For instance, the Euler/Venn diagram
\[\begin{tikzpicture}[scale=.8]
\scope
\clip 
      (.5,.5) circle (.7);
\fill [fill=black!30] (1.3,.5) circle (.7);
\endscope
\draw (.5,.5) circle (.7) (0,1.2)  node [text=black,scale=.8,above] {$Mammals$}
      (1.3,.5) circle (.7) (1.9,1.2)  node [text=black,scale=.8,above] {$Insects$}
      (-1,-.5) rectangle (2.8,1.6);
\draw (.2,.5) node [text=black, scale=.8] {$tim$};
\draw (1.6, .5) node [text=black, scale=.8] {$tim$};
\draw (0.44,0.45)--(1.37,0.45);
\end{tikzpicture}\]
which we took from~\cite{stapleton:a}, indicates that no individual is both a mammal and an insect and that there is an individual named ``tim'' which is either a mammal or an insect. Named constant sequences permit a much better correspondence between the diagrams and the information carried by formulas in first order logic. In~\cite{swoboda:udttvevhaevfhroi:02}, an algorithm is given for verifying if an Euler/Venn monadic first order formula is observable from an Euler/Venn diagram. Observable formulas are those which are consequence of the information displayed in the diagram, see~\cite{swoboda:ticoevd}. See also~\cite{swoboda:acsotdaiohrs, swoboda:mhs}.

Spider diagrams, see~\cite{molina2001reasoning, Howse01spiderdiagrams:, Howse99reasoningwith, Howse00onthe, Gil1999a, Howse:2005:SD}, adapt and extend the Venn-II diagrams. In place of $\otimes$-sequences to indicate non-empty regions of diagrams, \emph{spiders} are used. Distinct spiders represent distinct elements, so that they provide finite lower bounds on the cardinalities of the represented sets. Finite upper bounds on the cardinalities of the represented sets are given by shading together with spiders: spiders in a shaded region represent all the elements in the set represented by that region. For instance, we took the leftmost of the spider diagrams
\[\begin{tikzpicture}[scale=.8]
\scope
\clip 
      (.5,.5) circle (.7);
\fill [fill=black!30] (1.3,.5) circle (.7);
\endscope
\draw (.5,.5) circle (.7) (0,1.2)  node [text=black,scale=.8,above] {$Cars$}
      (1.3,.5) circle (.7) (1.9,1.2)  node [text=black,scale=.8,above] {$Vans$}
      (-1,-.5) rectangle (2.8,1.6);
\draw (.1,.5) node [scale=.8]{$\b$};
\draw (1.7,.5) node [scale=.8]{$\b$};
\draw (.1,.3) node [scale=.8]{$\b$};
\draw (0.1,0.5)--(1.7,0.5);
\end{tikzpicture}
\qquad
\begin{tikzpicture}[scale=.8]
\scope
\clip (-1,-.5) rectangle (2.8,1.6)
      (1.3,.5) circle (.7);
\fill [fill=black!30] (.5,.5) circle (.7);
\endscope
\draw (.5,.5) circle (.7) (0,1.2)  node [text=black,scale=.8,above] {$Cars$}
      (1.3,.5) circle (.7) (1.9,1.2)  node [text=black,scale=.8,above] {$Vans$}
      (-1,-.5) rectangle (2.8,1.6);
\draw (.1,.5) node [scale=.8]{$\b$};
\draw (.1,.3) node [scale=.8]{$\b$};
\end{tikzpicture}\]
from~\cite{Stapleton04whatcan}. It expresses that no element is both a car and a van and that there are at least two elements one of which is a car and the other is a car or a van. The rightmost spider diagram expresses that there are exactly two cars which are not vans. The possibility of expressing lower and upper bounds on the cardinalities of sets makes the spider diagrams more expressive than the Venn-II diagrams. In~\cite{Stapleton04whatcan, Stapleton:2004:ESD:1094471.1094501} it is shown that a suitable fragment of monadic first order logic with equality and no function or constant symbols is equivalent, in expressive power, to the language of spider diagrams. Several spider diagrams systems exist, see~\cite{stapleton:a, molina2001reasoning, howse:sasacdrs}. In~\cite{DBLP:conf/iccs/DauF08} spider diagrams are combined with conceptual graphs, see~\cite{Dau2009a, Sowa:1984:CSI:4569}, giving rise to a hybrid diagrammatic notation called \emph{conceptual spider diagrams}.

Constraint diagrams, see~\cite{Kent_97_Constraint, GilHK99, GilHK01, Stapleton03aconstraint, Stapleton04reasoningwith}, were proposed as a notation for specifying constraints in object oriented software modelling. They extend spider diagrams by allowing the representation of  universal quantification through the employment of universal spiders indicated by asterisks, arrows representing properties of binary relations and derived contours that are not labelled, must be the target of an arrow and represent the image of a relation when the domain is restricted to the source, which can be a spider, a contour or a derived contour. For instance, the constraint diagram
\[\begin{tikzpicture}[scale=.8]
\draw (-3,-.5) rectangle (2.8,1.6);
\draw (-2,.7) circle (.5) (-2,1.36) node [text=black,scale=.8]{$Members$};
\draw (1.7,.5) circle (.7) (1.7, 1.36) node [text=black,scale=.8]{$Videos$};
\draw (1.7,.5) circle (.5);
\draw (-1.8,.6) node [scale=.8]{$\ast$};
\draw (1.7,.5) node [scale=.8]{$\b$};
\draw [->, >=latex] (-1.8,.6) ..controls (-0.3,0)..(1.2,.5); 
\draw (-0.3,0) node [text=black,scale=.8]{$past~rentals$};
\end{tikzpicture}\]
which we took from~\cite{stapleton:a}, contains a universal spider which is the source of an arrow representing a relation whose image is a derived countour. The diagram specifies that no members are videos and for each member, all the past rentals, of which there is at least one, are videos.

Conflicting information in the systems of Euler/Venn diagrams, spider diagrams and constraint diagrams is dealt with by letting any diagram follow from any pair of inconsistent diagrams.

Existential graphs, see~\cite{Dau06mathematicallogic, Dau:2009:AFD:1560325.1560329, roberts1973existential, zeman1964graphical, shin2002iconic} were introduced by Peirce as a diagrammatic system for representing and reasoning with relations. The system is divided into three parts named Alpha, Beta and Gamma. Alpha and Beta
respectively correspond to propositional logic and first order logic with predicates and equality without functions or constants, whereas Gamma includes features of higher order logic and modal logic, as well as the possibility of expressing self-reference. Gamma was never finished by Peirce and, as pointed out in~\cite{Dau:2009:AFD:1560325.1560329}, mainly its modal logic fragment has been elaborated to the contemporary mathematical standards. We here focus our interest on Alpha and Beta, which are proved to be sound and complete in~\cite{Dau06mathematicallogic}.

The existential graphs of Alpha consist of the following two syntactic devices: atomic propositions and cuts. The latter are represented by fine-drawn, closed and doublepoint-free curves. Different graphs on the plane express their conjunction whereas the inclusion of a graph in a cut denotes its negation. For instance, we took from~\cite{Dau:2009:AFD:1560325.1560329} the following two existential graphs of Alpha:
\[\begin{tikzpicture}[scale=.8]
\draw [opacity = 0] (0,0) rectangle (2,2);
\draw (1,2) node [scale=.8] {it is stormy};
\draw (-1,1) node [scale=.8] {it rains};
\draw (-1,2) node [scale=.8] {it is cold};
\end{tikzpicture}
\qquad
\qquad
\begin{tikzpicture}[scale=.8]
\draw [rounded corners=2ex] (0,.7) rectangle (5,2.5);
\draw (4,2) node [scale=.8] {it is stormy};
\draw (2,1) node [scale=.8] {it rains};
\draw (1,2) node [scale=.8] {it is cold};
\draw [rounded corners=1ex] (.2,1.7) rectangle (2,2.2);
\end{tikzpicture}\]
in the leftmost, the three propositions ``it is stormy'', ``it rains'' and ``it is cold'' have been written near each other so that in the graph they express the single proposition which is their conjunction, that is `` it is stormy and it rains and it is cold''. The rightmost diagram expresses the fact that it is not true that it rains and it is stormy and it is not cold, namely that ``if it rains and it is stormy, then it is cold''. 

In Beta, predicates of arbitrary arity can be used and the syntactic device which is the \emph{line of identity} is introduced. Lines of identities are represented as heavily drawn lines and indicate both the existence of objects and the identity between them. For instance, the graph
\[\begin{tikzpicture}[scale=.8]
\draw [very thick] (0,0)--(1,0) (-.2,0) node [scale=.8] {cat}  (1.2,0) node [scale=.8] {on};
\draw [very thick] (1.38,0)--(1.94,0) (2.2,0) node [scale=.8] {mat};
\end{tikzpicture}\]
which we took from~\cite{Dau06mathematicallogic}, contains two lines of identity. It denotes two not necessarily different objects. The first object is the unary predicate ``cat'', so it denotes a cat, whereas the second object denotes a mat. Both the objects are linked to the binary predicate ``on''. Therefore, the first object is in the relation ``on'' to the second. The graph means that ``there are a cat and a mat such that the cat is on the mat'', namely ``a cat is on a mat''. Lines of identity can be combined to networks forming \emph{ligatures}. For instance, in the leftmost of the graphs
\[\begin{tikzpicture}[scale=.8]
\draw [very thick] (0,-.15)--(0,-.91) (0,0) node [scale=.8]{cat} (0,-1) node [scale=.8]{cute};
\draw [very thick] (0,-0.5)--(.6,-.5) (1,-.5) node [scale=.8]{young};
\end{tikzpicture}
\qquad
\qquad
\begin{tikzpicture}
\draw [very thick] (0,0)--(1.3, 0) (-.3,0) node[scale=.8]{man} (1.85, 0) node [scale=.8]{will die};
\draw [rounded corners=1ex] (1.1, -.3) rectangle (2.5,.3);
\end{tikzpicture}\]
there is a ligature composed of three lines of identity which denotes one object. Its meaning is ``there is a cat which is cute and young'', whereas the meaning of the rightmost graph above is ``there exists a man who will not die''. The two graphs were taken from~\cite{Dau06mathematicallogic} and~\cite{Dau:2009:AFD:1560325.1560329}, respectively.

\section{Motivation}\label{motivation}
This paper deals with a formal diagrammatic treatment of a reading of the syllogistics in a fragment of the multiplicative intuitionistic linear logic. 
To motivate this paper, in this section we compare our diagrammatic approach to syllogistics with the one developed in some recent contributions in the field and discuss its connections with linear logic.

In~\cite{Mineshima:2008:DRS:1432522.1432547} and~\cite{Mineshima2012-MINADI} the Generalized Diagrammatic Syllogistic inference system GDS is extensively investigated. GDS has two inference rules which are \emph{unification} and \emph{deletion} and is syntactically based on the diagrammatic representation system EUL. The latter is a conceptually new formalization of Euler diagrammatic reasoning  in which diagrams are defined in terms of topological relations between diagrammatic objects, rather than in terms of regions as in the case of Venn, Euler/Venn, spider and constraint diagrams, see section~\ref{context}.
In~\cite{Mineshima:2008:DRS:1432522.1432547} the cognitive aspects of the reasoning with EUL diagrams with emphasis on the deductive syllogistic activity are investigated besides the logical ones. See also~\cite{SatoMT10, SatoCognEff}.

The diagrammatic representation systems can be distinguished with respect to the possibility of expressing negation, existence or disjunction in them, through the employment of suitable syntactic devices or not. We recall from section~\ref{context} that Venn diagrams can express negation by shading regions, whereas they cannot express existence or disjunction. Euler diagrams cannot express disjunction, they use exclusion of circles to express negation, they express existence by assuming existential import in their regions.

The system EUL is based on Euler's circles. It uses topological relations between them to express negative information and ``named points'' to indicate the existence of a particular element. It cannot express disjunctive information.
The EUL diagrams for the four categorical propositions are 
\[\begin{tikzpicture}
\draw [scale=.8] (-.3,-.3) rectangle (1.3,1.3);
\draw [scale=.8] (.5,.5) circle (.7) (.5,1.03) node [text=black, scale=.8]{$B$};
\draw [scale =.8] (.5,.5) circle (.35) node (.5,.5) [text=black, scale=.8]{$A$};
\end{tikzpicture}
\qquad
\begin{tikzpicture}
\draw [scale=.8] (.1,-.3) rectangle (3.3,1.3);
\draw [scale=.8] (.9,.5) circle (.7) (.9,.5) node [text=black, scale=.8]{$A$};
\draw [scale=.8] (2.5,.5) circle (.7) (2.5,.5) node [text=black, scale=.8]{$B$};
\end{tikzpicture}\]
\[\begin{tikzpicture}
\draw [scale=.8] (-.3,-.3) rectangle (2.1,1.3);
\draw [scale=.8] (.5,.5) circle (.7) (.3,.5) node [text=black,scale=.8] {$A$};
\draw [scale=.8] (1.3,.5) circle (.7) (1.5,.5) node [text=black, scale=.8]{$B$};
\draw (.7,.5) node [scale=.8] {$x$};
\end{tikzpicture}
\qquad
\begin{tikzpicture}
\draw [scale=.8] (-.3,-.3) rectangle (2.1,1.3);
\draw [scale=.8] (.5,.5) circle (.7) (.3,.5) node [text=black, scale=.8]{$A$};  
\draw [scale=.8] (1.3,.5) circle (.7) (1.5,.5) node [text=black, scale=.8]{$B$};
\draw (.3,.1) node [scale=.8] {$x$};
\end{tikzpicture}\]
correspondingly. A set-theoretic semantics for EUL diagrams is introduced to formally capture the idea that EUL circles correspond to predicates and named points to constant symbols. As pointed out in~\cite{Mineshima:2008:DRS:1432522.1432547}, contrary to the traditional treatment of syllogistics in the natural language, the EUL diagrams for the categorical propositions do not retain the subject-predicate distinction which is in them. Namely, the EUL diagram for ``Some $A$ is $B$'' represents ``Some $B$ is $A$'' as well, and similarly for ``No $A$ is $B$''. 
 
The unification of diagrams and the extraction of information from them play a central r\^ole in the construction of diagrammatic proofs, of syllogisms in particular. In EUL-style diagrammatic proofs the unification of diagrams introduces disjunctive ambiguities in the placement of named points in regions whereas in Venn-style diagrammatic proofs it does not, since Venn diagrams can be unified essentially by simply superposing shaded regions. On the other hand, EUL diagrams retain the clarity of the representations and inferences of Euler diagrams whereas Venn diagrams easily do not. See~\cite{hammer:evl}.

In~\cite{Mineshima:2008:DRS:1432522.1432547, Mineshima2012-MINADI} a unification rule for the EUL-based diagrammatic inference system GDS which does not generates disjunctive ambiguities is suitably formalized. The unification operation is decomposed into more primitive unification steps and the ensuing schemes of the rule of unification are carefully distinguished on the base of the number and type of the diagrammatic objects shared by the diagrams the rule is applying, and
divided into three groups. The onset of possible disjunctive ambiguities is avoided through the imposition of a \emph{constraint for determinacy}: the unification of any two diagrams is not permitted when the relations between each point and all the circles of the two diagrams are not determined. 
A further \emph{constraint for consistency} is imposed on unification: it is not permitted to unify two inconsistent diagrams.

The diagrammatic proofs in GDS consist essentially of unification and deletion steps only. Suitable restrictions of the unification and deletion rules involving exclusively the previous EUL diagrams permit the identification of a Diagrammatic Syllogistic inference sub-system DS of GDS. Syllogistic proofs in DS can be put in a normal form that consists of alternating and not repeating unification and deletion steps. In~\cite{Mineshima:2008:DRS:1432522.1432547} it is proved that a syllogism is valid if and only if it is provable in DS, whereas in~\cite{Mineshima2012-MINADI} it is proved that GDS is sound and complete. For instance, here is the proof in DS of the syllogism whose proof in Venn-I has been given in section~\ref{context}:
\[\begin{tikzpicture}[scale=.8]
\draw [scale=.8] (-.3,-.3) rectangle (1.3,1.3);
\draw [scale=.8] (.5,.5) circle (.7) (.5,1.01) node [text=black, scale=.8]{$P$};
\draw [scale =.8] (.5,.5) circle (.3) node (.5,.5) [text=black, scale=.8]{$M$};

\draw [scale=.8] (3.3,-.3) rectangle (5.7,1.3);
\draw [scale=.8] (4.1,.5) circle (.7) (3.9,.5) node [text=black,scale=.8] {$S$};
\draw [scale=.8] (4.9,.5) circle (.7) (5.1,.5) node [text=black, scale=.8]{$M$};
\draw (3.6,.5) node [scale=.8] {$x$};

\draw [->, >=latex, scale=.8] (.5,-.4) -- (1.6,-1);
\draw [->, >=latex, scale=.8] (3.6,-.4) -- (1.8,-1);

\draw (1.5,-.4) node [text=black, scale=.8]{unif.};
\draw [scale=.8] (.3,-3) rectangle (2.9,-1.1);
\draw [scale=.8] (1.1,-2) circle (.7) (1,-2) node [text=black,scale=.8] {$S$};

\draw (1.32,-1.6) node [scale=.8, text=black]{$x$}; 

\draw [scale=.8] (1.88,-2) circle (.4) (2,-2) node [text=black,scale=.8]{$M$};
\draw [scale=.8] (2.1,-2) circle (.7) (2.5,-2) node [text=black,scale=.8] {$P$};

\draw [scale=.8] (.3,-5.5) rectangle (2.9,-3.8);
\draw [->, >=latex, scale=.8] (1.32,-3.1) -- (1.32, -3.7);
\draw (1.8, -2.7) node [text=black,scale=.8] {deletion};
\draw [scale=.8] (1.1,-4.7) circle (.7) (.9,-4.7) node [text=black,scale=.8] {$S$};
\draw [scale=.8] (2.1,-4.7) circle (.7) (2.3,-4.7) node [text=black, scale=.8]{$P$};
\draw (1.3,-3.7) node [scale=.8] {$x$};

\end{tikzpicture}\]
Finally it has to be observed that the diagrammatic reasoning in EUL does not have explicit operations with negation and it does not have explicit operations with  subject-predicate distinction.

In~\cite{Pagnan2012-PAGADC} the diagrammatic logical system SYLL
for the traditional syllogistics is introduced. Its syntactic primitives consist of the symbols $\b$, $\arr$, $\leftarrow$, as well as of countably many term-variables $A,B,C,\dots$. A SYLL diagram is a finite list of arrow symbols separated by a single bullet symbol or term-variable, beginning and ending at a term-variable. Thus SYLL diagrams are graphically linear. For instance, the following is a SYLL diagram: $A\arr\b\arr\b\arr B\arr\b\leftarrow C$.

A \emph{part} of a SYLL diagram is a finite list of consecutive components of it. For instance, the following are parts of the previous SYLL diagram: $\arr\b\arr$, $A\arr\b$, $\b\arr B\arr\b$.

SYLL diagrams are not defined in terms of topological relations nor in terms of regions. Nonetheless, they can express negative information, existential import and disjunctive information. 

The \emph{complement} of a term is the term itself preceded by a ``non'' adverb of modality, so that for instance the complement of ``man'' is ``non-man''. A part of a SYLL diagram such as $A\arr\b$ or $\b\leftarrow A$ represents the complement of the term-variable $A$, which is ``non-$A$''.

The SYLL diagrams for the four categorical propositions are the following:
\[\begin{tabular}{lllll}
& \AA{A}{B} &&
& \EE{A}{B}\\
\\
& \II{A}{B} &&
& \OO{A}{B}
\end{tabular}\]
correspondingly. The complement of the term-variable $B$ occurs in the SYLL diagrams for ``Each $A$ is not $B$'' and ``Some $A$ is not $B$'', since $B$ fits in the part $\b\leftarrow B$ of both. Because of this, those categorical propositions should really be read affirmatively as ``Each $A$ is non-$B$'' and ``Some $A$ is non-$B$'', respectively. 

This is the starting point for the extension of SYLL to the diagrammatic reasoning system for the modern nineteenth century syllogistics with complemented terms which we will treat in section~\ref{demorgan}. We cannot see how to extend
the reasoning systems which are based on Euler diagrams 
to syllogistics with complemented terms, as well as more in general how to 
let negation be explicitly represented in them, but see~\cite{Stapleton_incorporatingnegation}.

The SYLL diagrams for the categorical propositions do not retain the subject-predicate distinction in them. This is due to the symmetric look of the SYLL diagrams 
\[\II{A}{B}\qquad\EE{A}{B}\]
as it was the case for the corresponding EUL diagrams.

The so-called ``strengthened'' syllogisms are those that proceed from two universal premises towards an existential conclusion. They are valid if and only if existential import on one of the terms involved is assumed.

Individuals are not foreseen to be representable in SYLL, in accordance with a conception of syllogistic more adherent to the original one, according to which singular names should be avoided. Thus, existential import is not explicitly expressed through the placement of a named point inside the interior of a circle representing the extension of a term but implicitly, referring intensionally to such an extension through the employment of a special SYLL diagram. That is, existential import is carried by a term-variable in virtue of its occurrence in a special SYLL diagram. 
Similar considerations hold for the expressibility of emptyness
and contradiction.
The SYLL diagram that expresses existential import on a term-variable $A$ is
\[\II{A}{A}\]
whereas the one that expresses the ``emptyness'' of $A$ is
\[\EE{A}{A}\]
The SYLL diagram that expresses contradictory informational content on $A$ is
\[\OO{A}{A}\]

Disjunctive information linking two term-variables $A, B$ is expressible in the diagrammatic formalism of SYLL in the form ``Each non-$A$ is $B$'' by the SYLL diagram $A\arr\b\arr B$, which does not equivalently express ``Each non-$B$ is $A$''. It will be put in connection with a non-symmetric multiplicative disjunction in linear logic. See remark~\ref{gh}.

The diagrammatic operation in SYLL that by analogy corresponds with unification in GDS is \emph{concatenation} of SYLL diagrams, not by juxtaposition but by superposition of their extremal term-variables. These serve as actual terms in the sense of the original meaning of the latin word \emph{terminum}, which is boundary. Any two SYLL diagrams with one coincident extremal term-variable can be concatenated without the imposition of any kind of constraint. The SYLL diagram which is the concatenation of two or more SYLL diagrams captures the conjunction of the informational content in each of them without the onset of any sort of ambiguity. For instance, the SYLL diagram 
\[A\arr\b\leftarrow B\arr C\leftarrow\b\leftarrow D\]
is the concatenation of the SYLL diagrams $A\arr\b\leftarrow B$, $B\arr C$, $C\leftarrow\b\leftarrow D$.

Essentially, proofs in SYLL consist of concatenation and deletion steps. In particular, for a syllogism: the SYLL diagrams for its premises are concatenated by superposing the term-variable representing the middle term which, if possible, is deleted towards the obtainment of the SYLL diagram for the conclusion. The deletion of a middle term-variable $M$ in a proof in SYLL is allowed if and only if it occurs in a part such as $\arr M\arr$ or $\leftarrow M\leftarrow$ of the SYLL diagram which is the concatenation of the SYLL diagrams for the premises, by substituting that part with just one accordingly oriented arrow symbol $\arr$ or $\leftarrow$, respectively. For instance, here is the proof in SYLL of the syllogism whose proof in DS has been shown above:
\[\AxiomC{$S\leftarrow\b\arr M$}
\AxiomC{$M\arr P$}
\BinaryInfC{$S\leftarrow\b\arr M\arr P$}
\UnaryInfC{$S\leftarrow\b\arr P$}
\DisplayProof\]
The implementation of a proof in SYLL is algorithmic in a naive sense. Proofs in SYLL are carried out in accordance with the rules of inference displayed in definition~\ref{rules}. The conjunction of the informational content of two SYLL diagrams through the operation of concatenation does not require any articulated spatial rearrangemet of the diagrammatic objects involved. Subsequently, the deletion of a middle term-variable is subject to a criterion which is just ``look at the orientation of the arrow symbols surrounding the middle term-variable''. Nothing peculiar of an intelligent being enters in the execution of concatenation and deletion.

In~\cite{Pagnan2012-PAGADC} we proved that a (strengthened) syllogism is valid if and only if it is provable in SYLL. We were able to achieve this result also by pointing out a diagrammatic criterion for the rejection of the invalid schemes of syllogism. Such a criterion is based on the following observation: in the implementation of a SYLL proof as many bullet symbols are in the SYLL diagram for its conclusion as in the SYLL diagrams for its premises. 
For instance, on the base of this criterion we are able to immediately establish that no existential conclusion can follow from the premises $S\arr M$, $M\arr P$, because the SYLL diagram for any one such conclusion contains at least one bullet symbol but, by adding existential import on the term-variable $S$ as a further hypothesis, namely through the SYLL diagram $S\leftarrow\b\arr S$, we have 
\[\AxiomC{}
\UnaryInfC{$S\leftarrow\b\arr S$}
\AxiomC{$S\arr M$}
\AxiomC{$M\arr P$}
\BinaryInfC{$S\arr M\arr P$}
\UnaryInfC{$S\arr P$}
\BinaryInfC{$S\leftarrow \b\arr S\arr P$}
\UnaryInfC{$S\leftarrow\b\arr P$}
\DisplayProof\]
which is a proof in SYLL of the valid strengthened syllogism $\aa MP,\aa SM, \ii SS\therefore\ii SP$ occurring in the lower half of the first column of the table~(\ref{questa}) in section~\ref{syllru}, whose medieval name is 
bArbArI.

Returning on the handling of inconsistency, we have observed that in GDS an imposed constraint for consistency does not permit the unification of two inconsistent EUL diagrams. In other systems a linguistic sign such as $\bot$ is introduced for the purpose, see~\cite{swoboda:udttvevhaevfhroi:02, MR1408441, Howse:2005:SD}. In SYLL, the inconsistency of two diagrams such as $A\leftarrow\b\arr B$, $A\arr\b\leftarrow B$, for instance, is provable:
\[\AxiomC{$A\leftarrow\b\arr B$}
\AxiomC{$A\arr\b\leftarrow B$}
\UnaryInfC{$B\arr\b\leftarrow A$} 
\BinaryInfC{$A\leftarrow\b\arr B\arr\b\leftarrow A$}
\UnaryInfC{$A\leftarrow\b\arr\b\leftarrow A$}
\DisplayProof\]

SYLL extends to syllogistic reasoning with an arbitrarily fixed positive integer number $n$ of term-variables, see~\cite{Meredith, MR0107598}. An $n$-term (strengthened) syllogism is valid if and only if it is provable in SYLL. This result seems to be peculiar of the diagrammatic inference systems based on linear diagrams, see~\cite{Smyth}. See also~\cite{MR1149957, MR1776228}.

In~\cite{Lukasiewicz}, an axiomatic systematization of Aristotle's syllogistic within propositional logic is carried out. In~\cite{MR0317886,MR0497848, Smiley1973-SMIWIA} it is made apparent that such an axiomatic approach is unsuitable because, closer to Aristotle's original conception, syllogistics is shown to be a sound and complete deductive system which seems to be independent of propositional or first order logic. Thus, the investigation of it in a well-known fragment of a logical system, as adherently as possible to his original conception, seems to be desirable. This is our case with respect to a fragment of linear logic that we will be dealing with in more detail in the subsequent sections. Linear logic gave new insight into basic notions that seemed to be established forever. The investigation of syllogistics within it could reveal itself unexpectedly fruitful.

Linear logic, see~\cite{MR899269,MR1356006}, 
is an extension of both classical and intuitionistic logic which is obtained by the introduction of new connectives. 
Roughly speaking, such connectives are divided into \emph{multiplicative} and \emph{additive}. In the present paper we will deal mainly with two multiplicative connectives: the \emph{multiplicative conjunction} $\otimes$ and the \emph{linear implication} $\multimap$. 

A distinctive feature of linear logic with respect to classical and intuitionistic logic is the following: whereas the latter deal with stable truths, the former deals with evolving truths. In more concrete terms, given $A$ and $A\Arr B$, in classical or intuitionistic logic one is able to infer $B$, but $A$ is still available. That is, the sequent $A, A\Arr B\vdash A\and B$ is provable.

Linear implication is conceived as a causal one and because of this the information which is at its antecedent is not iteratively reusable in general. It corresponds to the availability of a resource whose employment means its exhaustion. 
Concretely, in linear logic the sequent $A, A\multimap B\vdash B$ is provable, whereas $A,A\multimap B\vdash A\otimes B$ is not. Multiplicative conjunction is conceived as the availability of two actions both of which will be done.  The sequent $A\multimap B, A\multimap C\vdash A\otimes A\multimap B\otimes C$ is provable, whereas the sequent $A\multimap B, A\multimap C\vdash A\multimap B\otimes C$ is not. 

Towards our reading of syllogistics within intuitionistic linear logic, we assume the availability of a distinguished atomic formula $\bot$ and, for every formula $A$, write $A^{\bot}$ as an abbreviation for $A\multimap\bot$.

Our reading of syllogistics within linear logic is founded on the \emph{categorical formulas}:
\[\begin{tabular}{llllll}
& $A\multimap B$ &&&
& $A\multimap B^{\bot}$\\
\\
& $A\otimes B$ &&&
& $A\otimes B^{\bot}$
\end{tabular}\]
correspondingly. They translate the Aristotelian categorical propositions more faithfully than any translation of them in the usual propositional or first order logical calculus. Explicitly, $A\multimap B$ can be read as ``every object $A$, used once and consumed, produces an object $B$'', whereas $A\otimes B$ can be read as ``some object $A$ coexists with some object $B$''. Similarly, $A\multimap B^{\bot}$ can be read as ``every object $A$, used once and consumed, produces the absence of $B$'', whereas $A\otimes B^{\bot}$ can be read as ``some object $A$ coexists with the absence of some object $B$''. 

The previous considerations can be found in~\cite{Abrusci}, together with the previous formulas translating the categorical propositions but we independently found them. 
Theorem~\ref{perquesta} will mathematically justify the admissibility of the categorical formulas as suitable translations of the categorical propositions.

The SYLL diagrams for the categorical propositions represent the previous readings of the categorical formulas very closely.

In~\cite{Abrusci}, the Aristotelian syllogistics is investigated in the multiplicative fragment of classical linear logic where the fundamental linear logic connective $(~)^{\bot}$ which is \emph{linear negation} is available since the beginning. The atomic formulas are the literals $A$ or $A^{\bot}$. Linear negation is involutive, in the sense that $A^{\bot\bot}$ is identifiable with $A$, whereas it is not in our framework which is intuitionistic. Moreover, our framework extends to De Morgan's syllogistics, see section~\ref{demorgan}

In~\cite{Abrusci}, the traditional syllogistics is investigated from the point of view of the internal dualities generated by the interrelation of the contradictory categorical propositions. Linear negation allows the explicit handling of such dualities, and makes apparent their involvement in the graphical treatment via proof-nets of the reduction rules of the syllogistics, see section~\ref{redrules}. We will deal with proof-nets in section~\ref{comm}.

\section{Preliminaries on syllogistics}\label{syllru}
We recall some basics of the syllogistics from the standpoint of
its traditional medieval systematization and its nineteenth century modern systematization, with reference in particular to syllogistic with complemented terms to be discussed more extensively in section~\ref{demorgan}. See~\cite{ADeMorgan, de1911formal, MR2464674}.

Syllogistics in its original form dates back to Aristotle, who formalized it as a logical system in the Prior Analytics. It is not the intention of the present paper to investigate the deep nature of the original Aristotelian conception of the syllogistics. More refined approaches that treat it formally in a natural deduction style exist: see~\cite{MR0317886} and~\cite{Smiley1973-SMIWIA}.

Aristotle's syllogistics is based on the following four \emph{categorical propositions}:
\[\begin{tabular}{lllll}
&Each $A$ is $B$ &&Each $A$ is not $B$\\
\\
\\
&Some $A$ is $B$ &&Some $A$ is not $B$
\end{tabular}\]
in which the upper case letters $A$ and $B$ are variables for \emph{terms}, that is meaningful expressions of the natural language. More precisely, in each categorical proposition the letter $A$ denotes the term which is its \emph{subject} whereas $B$ denotes the term which its \emph{predicate}.

The \emph{complement} of a term is the term itself preceded by a ``non'' adverb of modality. Thus, the complement of the term ``man'' is ``non-man'', for instance. 
\begin{notation}
The complement of a term symbolically represented by a variable $A$ will be symbolically represented by the corresponding lower case letter $a$ to mean non-$A$, in accordance with De Morgan's notation. 
\end{notation}
Upper case and lower case letters respectively denoting terms and their complements will be henceforth referred to as \emph{term-variables}.

Traditional syllogistic requires the subject of a categorical proposition to be 
a term which is not a complement, whereas it allows the predicate of a categorical proposition to be a term which is a complement. Thus, the modern reading of the previous traditional categorical propositions is the following:
\[\begin{tabular}{lllll}
&\aa AB: Each $A$ is $B$ && \aa Ab:  Each $A$ is non-$B$\\
\\
\\
&\ii AB: Some $A$ is $B$ && \ii Ab:  Some $A$ is non-$B$
\end{tabular}\]
all of which are now affirmative. In fact, the boldface letters \textbf{A}, \textbf{I} refer to the corresponding vowels in the latin word ``adfirmo''.
\begin{notation}
Arguments in the natural language will be written $P_1,\ldots,P_n\therefore C$, where the $P_i$'s are the premises and $C$ is the conclusion.
\end{notation}
A \emph{valid argument} is one whose conclusion necessarily follows from its premises, independently from their meaning. 
\begin{definition}
A \emph{syllogism} is an argument $P_1,P_2\therefore C$ where $P_1,P_2,C$ are categorical propositions that are distinguished in \emph{first premise}, \emph{second premise} and \emph{conclusion}, from left to right. 
Moreover, in them three term-variables among $S$, $M$, $P$, $s$, $m$, $p$ occur as follows: $M$ (resp. $m$) occurs in both the premises and does not occur in the conclusion whereas, according to the traditional way of writing syllogisms, $P$ (resp. $p$) occurs in the first premise and $S$ (resp. $s$) occurs in the second premise. The term-variables $S$ (resp. $s$) and $P$ (resp. $p$) occur as the subject and predicate of the conclusion, respectively, and are referred to as \emph{minor term} and \emph{major term} of the syllogism, whereas $M$ (resp. $m$) is referred to as \emph{middle term}. 
\end{definition}
The table
\begin{eqnarray}\label{questa}{\textrm{\footnotesize
\begin{tabular}{|l|l|l|l|}
\hline
$\aa MP,\aa SM\therefore\aa SP$&$\aa Pm,\aa SM\therefore
\aa Sp$
&$\ii MP,\aa MS\therefore\ii SP$&$\aa PM,\aa Ms\therefore
\aa Sp$\\
$\aa Mp,\aa SM\therefore\aa Sp$&$\aa PM,\aa Sm\therefore
\aa Sp$
&$\aa MP,\ii MS\therefore\ii SP$&
$\ii PM,\aa MS\therefore\ii SP$\\
$\aa MP,\ii SM\therefore\ii SP$&$\aa Pm,\ii SM\therefore
\ii Sp$
&$\ii Mp,\aa MS\therefore\ii Sp$&
$\aa Pm,\ii MS\therefore\ii Sp$\\
$\aa Mp,\ii SM\therefore\ii Sp$&$\aa PM,\ii Sm\therefore
\ii Sp$
&$\aa Mp,\ii MS\therefore\ii Sp$&\\
\hline
$\aa MP,\aa SM, \ii SS\therefore\ii SP$&$\aa PM,\aa Sm, \ii SS\therefore\ii Sp$&$\aa MP,\aa MS,\ii MM\therefore\ii SP$
&$\aa PM,\aa Ms, \ii SS\therefore\ii Sp$\\
$\aa Mp,\aa SM,\ii SS\therefore\ii Sp$&$\aa Pm,\aa SM,\ii SS\therefore\ii Sp$&$\aa Mp,\aa MS,\ii MM\therefore\ii Sp$
&$\aa Pm,\aa MS,\ii MM\therefore\ii Sp$\\ 
&&&$\aa PM,\aa MS,\ii PP\therefore\ii SP$\\
\hline
\end{tabular}}}
\end{eqnarray}
lists the syllogisms that are known to be valid since Aristotle. The syllogisms in the lower half of the table are the \emph{strengthened} ones, namely those valid under \emph{existential import}, which is the explicit assumption of existence of some $S$, $M$, $P$, $s$, $m$, $p$. For every term-variable $A$ (resp. $a$), the existential import on $A$ (resp. $a$) is explicitly indicated by the occurrence of the categorical proposition $\ii AA$ (resp. $\ii aa$). From the Aristotelian and traditional point of view, existential import on a term-variable $A$ (resp. $a$) was taken for granted and not even mentioned. 

\begin{terminology}
The syllogisms in the upper half of the table~(\ref{questa}) will be henceforth simply referred to as \emph{syllogisms}, whereas those in its lower half will be henceforth referred to as \emph{strengthened syllogisms}.
\end{terminology}

\begin{remark}\label{intanto}
The columns of the table~(\ref{questa}) correspond to the four \emph{figures} into which the valid syllogisms have been divided. From left to right, a syllogism occurring in the first, second, third or fourth column of the table~(\ref{questa}) is said to be in the first, second, third or fourth figure. We will return on the syllogistic figures in section~\ref{redrules}.

The syllogisms listed in the table~(\ref{questa}) are also divided on the base of their \emph{mood}, which is nothing but the order of occurrence of the letters \catebf{A}, \catebf{I} in them. For instance, the mood of the syllogism $\ii MP,\aa MS\therefore\ii SP$ in the third figure, is \catebf{IAI}. 
The mood of the strengthened syllogism $\aa MP, \aa SM, \ii SS\therefore\ii SP$ in the first figure, is \catebf{AAII}. 
\end{remark}

The categorical proposition $\ii{A}{a}$ expresses contradiction. 
The categorical propositions occupy the vertices of the 
\emph{square of opposition}
\begin{eqnarray}\label{sq}
\xymatrix{\aa{A}{B}\ar@{--}[drrr]|{\textrm{contradiction}}
\ar@{-->}[d]_{\textrm{subalternation}}\ar@{--}[rrr]^{\textrm{contrariety}}&&&
\aa{A}{b}\ar@{-->}[d]^{\textrm{subalternation}}\\
\ii{A}{B}\ar@{--}[urrr]|{\hole}
\ar@{--}[rrr]_{\textrm{subcontrariety}}&&&\ii{A}{b}}
\end{eqnarray}
in which:
\begin{itemize}
\item[-] \aa{A}{B} and \ii{A}{b}, as well as \aa{A}{b} and
\ii{A}{B}, are \emph{contradictory} because the arguments $\aa{A}{B},\ii{A}{b}\therefore\ii{A}{a}$ and $\aa{A}{b},\ii{A}{B}\therefore\ii{A}{a}$ are valid.
\item[-] under existential import on $A$, \ii{A}{B} and \ii{A}{b} are \emph{subaltern} to \aa{A}{B} and \aa{A}{b}, respectively, because the arguments $\aa{A}{B},\ii{A}{A}\therefore\ii{A}{B}$ and $\aa{A}{b},\ii{A}{A}\therefore\ii{A}{b}$ are valid.
\item[-] under existential import on $A$, \aa{A}{B} and \aa{A}{b} are \emph{contraries} since the contradictory of each of them follows from the other,  because the arguments $\aa{A}{b},\ii{A}{A}\therefore\ii{A}{b}$ and $\aa{A}{B},\ii{A}{A}\therefore\ii{A}{B}$ are valid.
\item[-] under existential import on $A$, \ii{A}{B} and \ii{A}{b} are \emph{subcontraries} since each of them follows from the contradictory of the other, because the arguments $\aa{A}{B},\ii{A}{A}\therefore\ii{A}{B}$ and $\aa{A}{b}, \ii{A}{A}\therefore\ii{A}{b}$ are valid.
\end{itemize}

\begin{definition}\label{laws}
The previously described valid arguments will be henceforth correspondingly referred to as \emph{laws of contradiction}, \emph{laws of subalternation}, \emph{laws of contrariety}, \emph{laws of subcontrariety}, whereas in their entirety as \emph{laws of the square of opposition}.
\end{definition}

\section{Syllogistics in Rudimentary Linear Logic}\label{syllrll}
We here present a translation of the traditional
syllogistics in a minimal extension of rudimentary linear logic, see~\cite{MR1157804}.
\begin{definition}\label{infrule}
The system \rll, for intuitionistic Rudimentary Linear Logic, has formulas inductively constructed from atomic formulas $\alpha,\beta,\gamma,\ldots$ in accordance with the grammar $A\mathrel{\mathop :}=\alpha\,|\,A\otimes A\,|\, A\multimap A$, where $\otimes$ is \emph{multiplicative conjunction} and $\multimap$ is \emph{linear implication}. A \emph{sequent} in \rll~is a formal expression $\Gamma\vdash A$ where $\Gamma$ is a finite multiset of formulas and $A$ is a formula. The rules of \rll~are 
\[\AxiomC{}
\RightLabel{(Id)}
\UnaryInfC{$A\vdash A$}
\DisplayProof\]
\vspace{.3cm}
\[\AxiomC{$\Gamma,A,B\vdash C$}
\RightLabel{($\otimes$L)}
\UnaryInfC{$\Gamma,A\otimes B\vdash C$}
\DisplayProof
\hspace{3cm}
\AxiomC{$\Gamma\vdash A$}
\AxiomC{$\Delta\vdash B$}
\RightLabel{($\otimes$R)}
\BinaryInfC{$\Gamma,\Delta\vdash A\otimes B$}
\DisplayProof\]
\vspace{.3cm}
\[\AxiomC{$\Gamma\vdash A$}
\AxiomC{$\Delta,B\vdash C$}
\RightLabel{($\multimap$L)}
\BinaryInfC{$\Gamma, \Delta, A\multimap B \vdash C$}
\DisplayProof
\hspace{3cm}
\AxiomC{$\Gamma,A\vdash B$}
\RightLabel{($\multimap$R)}
\UnaryInfC{$\Gamma\vdash A\multimap B$}
\DisplayProof\]
A \emph{proof} in \rll~of a sequent $\Gamma\vdash A$ is a finite tree whose vertices are sequents in \rll, such that leaves are instances of the rule (Id), the root is $\Gamma\vdash A$, and each branching is an instance of a rule of inference. A sequent of \rll~is \emph{provable} if there is a proof for it. Two formulas $A,B$ are \emph{equivalent} if the sequents $A\vdash B$ and $B\vdash A$ are both provable, in which case we write $A\equiv B$. The system $\rll^{\bot}$ is the system that is obtained from \rll~by the addition of the new symbol $\bot$ among the atomic formulas, coherently extending the notion of sequent, proof, and provability, maintaining the same rules.
\end{definition}

\begin{remark}
In~\cite{MR1157804}, the system \rll~was equipped with a cut rule as well. 
We have not included it among the rules in the previous definition because we are not going to use it anywhere in the paper and also because in loc. cit. 
it is proved to be eliminable.
\end{remark}

\begin{notation}\label{botbot}
For every formula $A$ of $\rll^{\bot}$, $A^{\bot}$ is an abbreviation for  $A\multimap\bot$. Despite the symbol $\bot$ there is no reference to \emph{ex falso quodlibet}.
\end{notation}

\begin{definition}\label{dopobot}
The \emph{complement} of a formula $A$ of $\rll^{\bot}$ is the formula $A^{\bot}$.
\end{definition}

\begin{proposition}\label{dopodopo}
For every formulas $A,B,C$ of $\rll^{\bot}$, the following hold:
\begin{itemize}
\item[(i)] the sequent $A\vdash A^{\bot\bot}$ is provable, whereas its converse is not.
\item[(ii)] if $A\vdash B$ is provable, then $B^{\bot}\vdash A^{\bot}$ is provable.
\item[(iii)] the sequent $A\multimap B\vdash B^{\bot}\multimap A^{\bot}$ is provable, wheras its converse is not.
\item[(iv)] the equivalences $A^{\bot}\equiv A^{\bot\bot\bot}$, $A\multimap B^{\bot}\equiv B\multimap A^{\bot}$ and $A\otimes B\equiv B\otimes A$ are provable.
\item[(v)] the sequent $A\otimes A^{\bot}\vdash\bot$ is provable.
\end{itemize}
\end{proposition}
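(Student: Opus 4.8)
The plan is to separate the five items by polarity. The provability statements---all of (ii) and (v), the forward directions of (i) and (iii), and all of (iv)---will be settled by exhibiting explicit cut-free derivations built only from the axiom (Id) and the logical rules; the two non-provability statements (the converses claimed in (i) and (iii)) will be settled by a root-first analysis of cut-free proofs, appealing to cut-elimination, which holds for \rll~by the remark above and hence for $\rll^{\bot}$ as well, since the latter has exactly the same rules. For each non-provability claim it suffices to instantiate the formulas involved at pairwise distinct atoms different from $\bot$ and to check that no cut-free derivation of the resulting sequent exists.

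For the positive halves I would argue as follows. For (i), the rule ($\multimap$R) reduces $A\vdash A^{\bot\bot}$ to $A, A\multimap\bot\vdash\bot$, which is ($\multimap$L) applied to the axioms $A\vdash A$ and $\bot\vdash\bot$. For (ii), given a proof of $A\vdash B$, apply ($\multimap$L) against $\bot\vdash\bot$ to obtain $A, B\multimap\bot\vdash\bot$ and then ($\multimap$R) to get $B^{\bot}\vdash A^{\bot}$. For (iii), two uses of ($\multimap$R) reduce $A\multimap B\vdash B^{\bot}\multimap A^{\bot}$ to $A\multimap B,\, B\multimap\bot,\, A\vdash\bot$; then ($\multimap$L) on $A\multimap B$, with $A\vdash A$ as the left premise, leaves $B\multimap\bot,\, B\vdash\bot$, one further ($\multimap$L) step away from axioms. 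For (v), ($\otimes$L) reduces $A\otimes A^{\bot}\vdash\bot$ to $A, A\multimap\bot\vdash\bot$, handled as in (i). In (iv): $A^{\bot}\vdash A^{\bot\bot\bot}$ is (i) applied to the formula $A^{\bot}$, while $A^{\bot\bot\bot}\vdash A^{\bot}$ follows by applying (ii) to the instance $A\vdash A^{\bot\bot}$ of (i); $A\otimes B\vdash B\otimes A$ comes from ($\otimes$L) followed by ($\otimes$R), splitting the resulting unordered context $A, B$ as $B$ (for $B\vdash B$) and $A$ (for $A\vdash A$), and the reverse direction is its mirror image; and $A\multimap B^{\bot}\vdash B\multimap A^{\bot}$ is obtained by reducing via ($\multimap$R) to $A\multimap(B\multimap\bot),\, B,\, A\vdash\bot$ and then peeling off the two implications with ($\multimap$L) against axioms, the converse again being the same derivation with $A$ and $B$ interchanged.

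The substantive part, and the only place I expect real friction, is the two non-provability claims. For the converse of (i), take $\alpha\neq\bot$ atomic and consider $(\alpha\multimap\bot)\multimap\bot\vdash\alpha$. In a cut-free proof the last rule is neither a right rule (the succedent $\alpha$ is atomic) nor ($\otimes$L)/($\otimes$R) (no $\otimes$ occurs anywhere) nor (Id) (the antecedent is compound), so it must be ($\multimap$L) on the sole antecedent formula with empty side contexts, which forces the premises $\vdash\alpha\multimap\bot$ and $\bot\vdash\alpha$; the latter is unprovable since $\bot$ and $\alpha$ are distinct atoms, and the former reduces by ($\multimap$R) to $\alpha\vdash\bot$, unprovable for the same reason. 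For the converse of (iii), take $\alpha,\beta$ distinct atoms different from $\bot$ and consider $(\beta\multimap\bot)\multimap(\alpha\multimap\bot)\vdash\alpha\multimap\beta$; the last rule is either ($\multimap$R), reducing to $(\beta\multimap\bot)\multimap(\alpha\multimap\bot),\, \alpha\vdash\beta$, whose only possible last rule is ($\multimap$L) on the compound antecedent and where each admissible way of splitting the context $\{\alpha\}$ demands a left premise $\vdash\beta\multimap\bot$ or $\alpha\vdash\beta\multimap\bot$, i.e.\ ultimately $\beta\vdash\bot$ after ($\multimap$R), which fails; or the last rule is ($\multimap$L) on the whole antecedent, which again demands $\vdash\beta\multimap\bot$, hence $\beta\vdash\bot$, which fails. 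So no cut-free proof exists, and by cut-elimination the sequent is not provable. The one recurring subtlety throughout is the bookkeeping of the split contexts in the binary rules ($\multimap$L) and ($\otimes$R): in the linear setting the context is partitioned among the premises rather than duplicated, so making the dead-end analysis airtight requires enumerating the finitely many ways the context can be distributed, which is routine but must actually be carried out.
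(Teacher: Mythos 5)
Your proposal is correct, and it is essentially the argument the paper leaves implicit: the paper's proof is just ``Straightforward,'' and your explicit cut-free derivations for the positive items together with the root-first proof-search analysis at atoms distinct from $\bot$ (noting that the system as given has no cut rule, so every proof is already cut-free) is exactly the routine verification intended. The only nitpick is cosmetic: in the converse of (iii), the split with $\Gamma=\{\alpha\}$ leads to $\alpha,\beta\vdash\bot$ rather than $\beta\vdash\bot$, but that sequent is equally obviously unprovable, so the conclusion stands.
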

\begin{proof}
Straightforward.
\end{proof}
\begin{definition}\label{cateform}
For every formulas $A$ and $B$ of $\rll^{\bot}$, the formulas
\[\begin{tabular}{llllll}
& $A\multimap B$ &&&
& $A\multimap B^{\bot}$\\
\\
& $A\otimes B$ &&&
& $A\otimes B^{\bot}$
\end{tabular}\]
will be henceforth referred to as \emph{categorical formulas}. In each of them, the subformulas $A$ and $B$ will be henceforth referred to as \emph{subject} and \emph{predicate}, respectively. A \emph{syllogism} in $\rll^{\bot}$ is a sequent $F_1,F_2\vdash C$ where $F_1$, $F_2$ and $C$ are categorical formulas built on three formulas $S$, $M$ and $P$, in the following way: $M$ occurs in both $F_1$, $F_2$ and does not occur in $C$, $S$ occurs in $F_2$ and in $C$, as the subject of the latter, $P$ occurs in $F_1$ and in $C$, as the predicate of the latter. For every formula $A$ of $\rll^{\bot}$, the formula $A\otimes A$ expresses the condition of \emph{existential import} on $A$. For $A\in\{S,M,P\}$, a \emph{strengthened syllogism} in $\rll^{\bot}$ is a sequent $F_1,F_2, A\otimes A\vdash C$ where $F_1,F_2\vdash C$ is a syllogism in $\rll^{\bot}$. 
\end{definition}

\begin{example}
The sequent $M\multimap P, S\multimap M\vdash S\multimap P$ is a syllogism in $\rll^{\bot}$. The sequent $M\multimap P, S\multimap M^{\bot}, M\otimes M\vdash S\multimap P$ is a strengthened syllogism in $\rll^{\bot}$.
\end{example}

\begin{notation}
Whenever confusion is not likely to arise, we will employ the notations
\[\begin{tabular}{llllll}
\aa{A}{B} & &&&
\aa{A}{b} & 
\\
\\
\ii{A}{B} & &&&
\ii{A}{b} & 
\end{tabular}\]
for the corresponding categorical formulas in definition~\ref{cateform}. The sign $\vdash$ in place of the sign $\therefore$ will distinguish a syllogism in $\rll^{\bot}$ from a syllogism as an argument in the natural language. For example, $\aa{P}{M},\ii{S}{M}\vdash\ii{S}{P}$ is a different but equivalent way of writing the syllogism $P\multimap M, S\otimes M\vdash S\otimes P$ in $\rll^{\bot}$.
\end{notation}

\begin{remark}
As already observed in section~\ref{motivation}, the categorical formulas above can be found in~\cite{Abrusci} as well, where they are considered as formulas in the multiplicative fragment of the classical linear logic. We discovered them independently. In loc. cit. the admissibility of the categorical formulas as suitable translations of the corresponding categorical propositions was justified on the base of an intended meaning of the linear logic connectives $\multimap$ and $\otimes$, hinted at in section~\ref{motivation}.
\end{remark}
The next theorem justifies the admissibility of the categorical formulas as a suitable starting point for a reading of the traditional syllogistics within intuitionisitc linear logic in precise mathematical terms.
\begin{theorem}\label{perquesta}
A (strengthened) syllogism is valid if and only if it is provable in $\rll^{\bot}$.
\end{theorem}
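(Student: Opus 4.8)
The plan is to prove both directions by finite case analysis over the valid schemes in table~(\ref{questa}), exploiting the fact that there are only finitely many syllogistic moods and figures to check. For the ``only if'' direction, I would argue contrapositively: if a syllogism (or strengthened syllogism) is \emph{not} among those in table~(\ref{questa}), one must exhibit a counter-model showing that the corresponding sequent is not provable in $\rll^{\bot}$. Since $\rll^{\bot}$ is intuitionistic linear logic with a distinguished atom $\bot$, and since cut is eliminable, non-provability can be witnessed either by a phase-semantic model or, more economically here, by a concrete interpretation of the atoms $S,M,P,\bot$ into a small commutative monoid (e.g.\ a free commutative monoid on a few generators, or a two/three-element one) for which the categorical formulas of the premises are valid but the conclusion is not. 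The key point to record is that the invalidating interpretations used in ordinary first-order model theory for rejecting bad syllogisms lift to the linear setting: reading $A\multimap B$ as ``every resource $A$ yields a $B$'' and $A\otimes B$ as ``an $A$ and a $B$ coexist'' already suggests the monoid interpretations, and Proposition~\ref{dopodopo}(i)--(v) tells us which familiar classical equivalences (double-negation elimination, contraposition in full) are \emph{not} available, so that the rejected moods cannot sneak back in through a stronger logic.

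For the ``if'' direction I would produce, once and for all, an $\rll^{\bot}$-derivation for each mood. The clean way is to prove a handful of ``building-block'' lemmas in $\rll^{\bot}$ and then assemble every valid scheme from them. The basic blocks are: (a) \emph{cut-like composition} for the $\multimap$ cases, i.e.\ from $A\multimap B$ and $B\multimap C$ derive $A\multimap C$ (immediate from $(\multimap\mathrm{L})$ and $(\multimap\mathrm{R})$); (b) the \emph{mixed} composition from $A\multimap B$ and $B\otimes C$ (or $C\otimes B$) to $A\otimes C$ — here one uses $(\otimes\mathrm{L})$ to split the tensor, feeds the $A\multimap B$ premise, and reassembles with $(\otimes\mathrm{R})$, together with commutativity of $\otimes$ from Proposition~\ref{dopodopo}(iv); (c) the \emph{contraposition step} $A\multimap B^{\bot}\equiv B\multimap A^{\bot}$, again Proposition~\ref{dopodopo}(iv), which is exactly what converts between figures; and (d) for the strengthened syllogisms, the derivation that turns a universal premise plus an existential-import hypothesis $X\otimes X$ into an existential conclusion — concretely, from $A\multimap B$ and $A\otimes A$ one derives $A\otimes B$ by $(\otimes\mathrm{L})$ then one copy of $A$ fed to $A\multimap B$ and recombined by $(\otimes\mathrm{R})$. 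Every entry of table~(\ref{questa}) is then obtained by choosing the right block, possibly preceded by a contraposition (c) to normalize the figure, so this direction is genuinely routine once the blocks are in hand.

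The main obstacle is bookkeeping rather than depth: one must verify that \emph{exactly} the schemes in table~(\ref{questa}) go through and no others, which means the counter-model stage of the ``only if'' direction has to cover all remaining moods in all four figures. The subtle cases are those where classical reasoning would validate a syllogism that intuitionistic linear logic must reject — precisely the ones whose classical proof relies on $A^{\bot\bot}\vdash A$ or on the full contrapositive $B^{\bot}\multimap A^{\bot}\vdash A\multimap B$, both unavailable by Proposition~\ref{dopodopo}(i),(iii). I would single these out and check them by hand against the chosen monoid interpretation, and I would also double-check the strengthened cases: one must confirm that the existential conclusion is \emph{not} derivable without the $X\otimes X$ hypothesis (so that strengthened syllogisms are genuinely strengthened), which is again a small counter-model argument — e.g.\ interpret everything in a monoid whose relevant ``predicate'' denotations are empty. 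Assembling these observations, together with the positive derivations from the building blocks, yields the equivalence in both directions.
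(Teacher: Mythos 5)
Your positive direction (assembling an $\rll^{\bot}$ derivation for every entry of table~(\ref{questa}) from a few reusable building blocks) is essentially the paper's first half, which simply verifies that all table entries are provable; that part is fine, apart from the incidental remark that for the theorem you do \emph{not} need to check that the existential conclusion is underivable without the import hypothesis $X\otimes X$. The divergence is in the converse direction, provable implies valid. The paper handles it purely syntactically: given that $F_1,F_2\vdash C$ (or $F_1,F_2,A\otimes A\vdash C$) is provable and that all formulas have the restricted shapes of definition~\ref{cateform}, it analyses, for each possible shape of $C$, which shapes of $F_1,F_2$ the rules of definition~\ref{infrule} can possibly have used, and finds that the surviving sequents are exactly those corresponding to table~(\ref{questa}). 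You instead propose to refute every candidate not in the table by a countermodel. That is a legitimate alternative in principle, but as written it has two genuine gaps.

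First, it presupposes a semantics for $\rll^{\bot}$ together with a soundness theorem, neither of which is developed in the paper or in your proposal; until you fix the interpretation of $\otimes$, $\multimap$ and the atom $\bot$ (which has no special rules) and verify that every rule of definition~\ref{infrule} preserves validity, an interpretation in which the premises hold and the conclusion fails shows nothing about unprovability. Second, your guiding claim that the interpretations used in ordinary model theory to reject bad syllogisms ``lift to the linear setting'' fails for a whole family of cases, and your diagnostic list (the missing classical principles of proposition~\ref{dopodopo}) does not cover them: for example $M\otimes P,\,S\otimes M\vdash S\otimes P$ (two particular premises, an invalid mood) is valid under any reading of $\otimes$ as conjunction and $\multimap$ as implication, so no truth-functional or first-order style countermodel exists; it must be refuted by a resource-sensitive model (multiplicities in a free commutative monoid, phase semantics), because what blocks it is the absence of contraction and weakening, not of double negation or contraposition. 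Moreover these countermodels would have to be exhibited for each of the several hundred premise/conclusion combinations (more for the strengthened case), which your text defers rather than carries out. So what you have is a plausible programme rather than a proof of this direction, whereas the paper's rule-driven case analysis reaches the conclusion without introducing any semantics at all.
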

\begin{proof}
It can be verified that all the sequents in $\rll^{\bot}$ that correspond to the valid (strengthened) syllogisms in the table~(\ref{questa}) are provable (strengthened) syllogisms in $\rll^{\bot}$.

Conversely we proceed by cases. If $F_1, F_2\vdash C$ is a provable syllogism in $\rll^{\bot}$, then $C$ is a categorical formula among $S\otimes P$, $S\otimes P^{\bot}$, $S\multimap P$,  $S\multimap P^{\bot}$.

If $C$ is $S\otimes P$, then on the base of the inference rules in definition~\ref{infrule}, as a first possibility $F_1$ must be $M\otimes P$ or $P\otimes M$, so that $F_2$ has to be $M\multimap S$. As a second possibility $F_1$ must be $M\multimap P$, so that $F_2$ has to be $S\otimes M$ or $M\otimes S$. Thus, a sequent $F_1,F_2\vdash S\otimes P$ which is a provable syllogism in $\rll^{\bot}$ must validate one of the syllogisms
\[\left.\begin{array}{cccc}
\ii MP, \aa MS\therefore \ii SP&
\ii PM, \aa MS\therefore \ii SP&
\aa MP,\ii SM\therefore\ii SP&
\aa MP,\ii MS\therefore \ii SP
\end{array}\right.\]
all of which occur in table~(\ref{questa}). Similarly, if $C$ is $S\otimes P^{\bot}$.

If $C$ is $S\multimap P$, then on the base of the rules of inference in definition~\ref{infrule}, $F_1$ must be $M\multimap P$ and $F_2$ must be $S\multimap M$, so that a sequent $F_1,F_2\vdash S\multimap P$ which is a provable syllogism must validate the syllogism $\aa{M}{P},\aa{S}{M}\therefore\aa{S}{P}$, which occurs in table~(\ref{questa}). Similarly, if $C$ is $S\multimap P^{\bot}$.

If $F_1,F_2, A\otimes A\vdash C$ is a provable strengthened syllogism in $\rll^{\bot}$, then $C$ is a categorical formula among $S\otimes P$ or $S\otimes P^{\bot}$, because the previous part of the present proof excludes the possibility of having $S\multimap P$ or $S\multimap P^{\bot}$ as conclusions. Thus, as a first case let $C$ be $S\otimes P$ and $A\otimes A$ be $S\otimes S$. The formulas $F_1$ and $F_2$ must be $M\multimap P$ and $S\multimap M$, respectively. Thus a sequent $F_1, F_2, S\otimes S\vdash S\otimes P$ which is a provable strengthened syllogism must validate the syllogism $\aa{M}{P},\aa{S}{M},\ii{S}{S}\therefore\ii{S}{P}$, which occurs in table~(\ref{questa}). The remaining cases are similar.
\end{proof}

The categorical formulas in definition~\ref{cateform} are interrelated by the sequents in $\rll^{\bot}$ which correspond to the laws of the square of opposition, see definition~\ref{laws}. This is the content of 
\begin{proposition}\label{sqseq} 
The following sequents are provable in $\rll^{\bot}$. 
\begin{itemize}
\item[-] laws of contradiction: $A\multimap B,A\otimes B^{\bot}\vdash A\otimes A^{\bot}$ and $A\multimap B^{\bot},A\otimes B\vdash A\otimes A^{\bot}$.
\item[-] laws of subalternation, contrariety and subcontrariety: $A\multimap B, A\otimes A\vdash A\otimes B$ and $A\multimap B^{\bot}, A\otimes A\vdash A\otimes B^{\bot}$.
\end{itemize}
\end{proposition}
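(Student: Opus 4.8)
The plan is to exhibit, for each of the four sequents, an explicit cut-free derivation in the sequent calculus of Definition~\ref{infrule}, read from the root upwards. In every case the first move is an application of ($\otimes$L) that breaks the left-hand categorical formula $A\otimes(-)$ into its two components, so that the antecedent becomes a multiset containing a copy of $A$ together with the remaining formulas; the second move is ($\otimes$R) applied to the right-hand categorical formula $A\otimes(-)$, which splits that multiset by sending one distinguished copy of $A$ to the axiom $A\vdash A$ and routing the rest into a strictly simpler sequent. Everything then reduces to proving those residual sequents, which involve only linear implications and the symbol $\bot$.

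For the laws of subalternation, contrariety and subcontrariety the residual sequent after the two steps above is linear modus ponens, $A, A\multimap B\vdash B$, obtained by one application of ($\multimap$L) to the axioms $A\vdash A$ and $B\vdash B$; the variant $A\multimap B^{\bot}, A\otimes A\vdash A\otimes B^{\bot}$ is literally the same derivation with every occurrence of $B$ replaced by $B^{\bot}$. For the laws of contradiction the residual sequent is $A\multimap B, B^{\bot}\vdash A^{\bot}$ (respectively $A\multimap B^{\bot}, B\vdash A^{\bot}$ for the second law). I would isolate the small lemma that the ``linear composition'' sequent $A\multimap B, B\multimap C\vdash A\multimap C$ is provable --- immediate by ($\multimap$R) followed by two applications of ($\multimap$L) on axioms --- and specialise $C:=\bot$, recalling from Notation~\ref{botbot} that $X^{\bot}$ abbreviates $X\multimap\bot$; this yields $A\multimap B, B^{\bot}\vdash A^{\bot}$ at once. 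For the second law one either repeats this symmetrically or invokes the equivalence $A\multimap B^{\bot}\equiv B\multimap A^{\bot}$ of Proposition~\ref{dopodopo}(iv) to reduce it to the first. Unwinding all of this, each full derivation is a short tree whose leaves are instances of (Id), among them the identity axiom $\bot\vdash\bot$.

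I expect no real conceptual obstacle, since the statement is of the kind the paper treats as routine; the only point that needs attention is the bookkeeping of the antecedent multisets. When ($\otimes$R) and ($\multimap$L) partition the context, one must check that the components created by the initial ($\otimes$L) step are directed into the intended halves, and one must remember that here $\bot$ carries no inference rule of its own --- there is no \emph{ex falso}, cf. Notation~\ref{botbot} --- so an occurrence of $\bot$ on the right must ultimately be discharged by the axiom $\bot\vdash\bot$ rather than by any falsum-elimination principle. Finally, matching the four derived sequents against the corresponding laws of the square of opposition in Definition~\ref{laws} completes the argument.
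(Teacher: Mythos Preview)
Your proposal is correct and is exactly the kind of routine verification the paper has in mind: the paper's own proof is the single word ``Straightforward'', and your explicit cut-free derivations --- ($\otimes$L) to unpack the existential premise, ($\otimes$R) to split off an axiom $A\vdash A$, then modus ponens or linear composition to finish --- are precisely what that word is covering. One small caveat: since cut is not among the rules of $\rll^{\bot}$ here, the alternative of ``invoking the equivalence $A\multimap B^{\bot}\equiv B\multimap A^{\bot}$'' is not literally a cut-free step, so for a fully self-contained argument you should take the direct symmetric route you already sketched for the second law of contradiction.
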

\begin{proof}
Straightforward.
\end{proof}
\begin{remark}
The way the laws of contradiction are stated in proposition~\ref{sqseq} amounts to the following: ``conjunctively affirming a pair of categorical formulas corresponding to a pair of contradictory opponents in the square of opposition~(\ref{sq}) leads to a contradicition''. This is so in virtue of point (v) in proposition~\ref{dopodopo}. 
\end{remark}

Another way of expressing a contradictory interrelation between two categorical formulas is the following: ``affirming a categorical formula entails the affirmation of the complement of the categorical formula corresponding to its contradictory opponent in the square of opposition~(\ref{sq}).''. This is the content of 
\begin{proposition}\label{xxx}
The following sequents are provable in $\rll^{\bot}$:
\[\left.\begin{array}{ll}
A\multimap B\vdash (A\otimes B^{\bot})^{\bot}\\
\\
A\otimes B\vdash (A\multimap B^{\bot})^{\bot}
\end{array}\right.
\qquad
\qquad
\left.\begin{array}{ll}
A\multimap B^{\bot}\vdash (A\otimes B)^{\bot}\\
\\
A\otimes B^{\bot}\vdash (A\multimap B)^{\bot}
\end{array}\right.\]
\end{proposition}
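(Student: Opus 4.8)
The plan is to unfold the abbreviation $(X)^{\bot}=X\multimap\bot$ of Notation~\ref{botbot}, which rewrites the right-hand side of each of the four target sequents as a linear implication. For instance $A\multimap B\vdash (A\otimes B^{\bot})^{\bot}$ becomes $A\multimap B\vdash (A\otimes B^{\bot})\multimap\bot$, so a single application of ($\multimap$R) reduces it to proving $A\multimap B, A\otimes B^{\bot}\vdash\bot$. The other three reduce in the same way to $A\otimes B, A\multimap B^{\bot}\vdash\bot$, to $A\multimap B^{\bot}, A\otimes B\vdash\bot$, and to $A\otimes B^{\bot}, A\multimap B\vdash\bot$; note that the second and third of these are the same sequent up to the order of the multiset, as are the first and the fourth, so there are really only two derivations to produce.

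Next I would strip the tensor on the left with ($\otimes$L), and unfold one further occurrence of $(~)^{\bot}$, leaving purely implicational sequents such as $A\multimap B, A, B\multimap\bot\vdash\bot$ and $A, B, A\multimap(B\multimap\bot)\vdash\bot$. Each of these is derived by two nested applications of ($\multimap$L): for $A\multimap B, A, B\multimap\bot\vdash\bot$, apply ($\multimap$L) to $A\multimap B$ with left premise the axiom $A\vdash A$ and right premise $B\multimap\bot, B\vdash\bot$, and then ($\multimap$L) again to $B\multimap\bot$ with left premise $B\vdash B$ and right premise $\bot\vdash\bot$. Here the only point worth stating is that $\bot\vdash\bot$ is a legitimate instance of (Id), since $\bot$ was adjoined to the atomic formulas of $\rll^{\bot}$ in Definition~\ref{infrule}; and since $\rll^{\bot}$ has neither weakening nor contraction one must keep the multiset bookkeeping exact, but it does work out on the nose.

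Alternatively, and more conceptually, each of the four sequents is obtained by applying ($\multimap$R) to a cut — on $A\otimes A^{\bot}$ in two of the cases, on the relevant instance in the others — of a law of contradiction from Proposition~\ref{sqseq} against the sequent $A\otimes A^{\bot}\vdash\bot$ of Proposition~\ref{dopodopo}(v); since cut is admissible in $\rll^{\bot}$ (see the remark following Definition~\ref{infrule}), this suffices. I do not expect a genuine obstacle: the statement is routine, and the only things to watch are that no structural rule is used implicitly and that $\bot$ is treated as the atom it is, so that $\bot\vdash\bot$ may be invoked freely.
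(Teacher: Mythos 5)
Your direct derivation is correct and is exactly the routine argument the paper intends: its own proof of Proposition~\ref{xxx} is simply ``Straightforward,'' and unfolding $(~)^{\bot}$, applying ($\multimap$R), ($\otimes$L), and two nested ($\multimap$L) steps ending in the axiom $\bot\vdash\bot$ is the expected verification, with the multiset bookkeeping working out as you say. The alternative cut-based route via Propositions~\ref{sqseq} and~\ref{dopodopo}(v) is also sound (cut elimination for \rll\ carries over to $\rll^{\bot}$ since $\bot$ is just an added atom), but it is not needed.
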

\begin{proof}
Straightforward.
\end{proof}

\section{A diagrammatic calculus of syllogisms}\label{sec4}
We here recall some of the features of the diagrammatic formal system SYLL for the traditional syllogistics that we introduced in~\cite{Pagnan2012-PAGADC}, and point out its connection with the reading of the traditional syllogistics in intuitionistic linear logic that we described in section~\ref{syllrll}. 

In this section, the material which is strictly related to SYLL is essentially from~\cite{Pagnan2012-PAGADC}, to which we refer the reader for further details.

We prove that a (strengthened) syllogism in $\rll^{\bot}$ is provable there if and only if it is diagrammatically provable in (an extension of) SYLL. This is the main novelty in this section.

\begin{definition}\label{synt}
The \emph{diagrammatic primitives} of SYLL are the arrow and bullet symbols $\arr$, $\leftarrow$, $\b$. The \emph{linguistic primitives} of SYLL consist of  term-variables $A,B,C,\ldots$. The \emph{syntactic primitives} of SYLL are the diagrammatic or linguistic primitives. A \emph{diagram} of SYLL, or SYLL \emph{diagram}, is a finite list of arrow symbols separated by a single bullet symbol or term-variable, beginning and ending at a term-variable. The \emph{reversal} of a SYLL diagram is the diagram obtained by specular symmetry. A \emph{part} of a SYLL diagram is a finite list of consecutive components of a diagram. 
\end{definition}

\begin{example}
The lists $A\arr X$, $A\leftarrow A$, $A\arr\b\arr B$, $X\arr Y\arr\b\leftarrow X$ are examples of SYLL diagrams. Their reversals are $X\leftarrow A$, $A\arr A$, $B\leftarrow\b\leftarrow A$ and $X\arr\b\leftarrow Y\leftarrow X$, respectively.
Every SYLL diagram is a part of itself and in general, a part of a SYLL diagram need not be a SYLL diagram. The lists $A\arr$, $\leftarrow$, $\b\arr B$, $Y\arr\b$ are not SYLL diagrams. Nonetheless, they are parts of the SYLL diagrams above.
The lists $\b\b$, $A\b\leftarrow$, $A\b X$ are neither SYLL diagrams nor parts of SYLL diagrams.
\end{example}

\begin{notation}\label{parts}
Parts of diagrams will be henceforth denoted by calligraphic upper case letters such as $\cate{D},\cate{E}$, etc. In order to distinguish explicitly a part with respect to a whole diagram, we adopt a heterogeneous notation mixing calligraphic upper case letters and syntactic primitives of SYLL. For example and future reference, see definition~\ref{speriamo}, the writing $\cate{D}\b\leftarrow A\arr\b\cate{E}$ refers to a diagram in which the part $\b\leftarrow A\arr\b$ has been distinguished with respect to the remaining parts \cate{D} and \cate{E}. Thus, it may be the case that the whole diagram looks like $X\leftarrow\b\leftarrow A\arr\b\arr Y$ or $S\arr\b\arr\b\leftarrow A\arr\b\leftarrow S$, so that the part \cate{D} would be $X\leftarrow$ or $S\arr\b\arr$, whereas the part \cate{E} would be $\arr Y$ or $\leftarrow S$, in each respective case.
\end{notation}

\begin{definition}\label{compl}
Let \cate{D} be a SYLL diagram.  A part \cate{A} of \cate{D} occurs in 
\emph{complemented mode} in \cate{D}, if the list $\cate{A}\arr\b$ (resp. $\b\leftarrow\cate{A}$) is a part of \cate{D}. 
\end{definition}
\begin{remark}
A term-variable $A$ in \cate{D} occurs in \emph{complemented mode} in \cate{D}, if the list $A\arr\b$ (resp. $\b\leftarrow A$) is a part of \cate{D}.
\end{remark}

\begin{example}\label{pxx}
A term-variable $M$ occurring in a diagram \cate{D} like $\cate{A}\b\leftarrow M\arr\b\cate{B}$ occurs in complemented mode twice, because the lists $M\arr\b$ and $\b\leftarrow M$ are both parts of \cate{D}. This state of things will turn out to be in connection with the occurrence of $M$ as a middle term-variable in complemented mode in both the premises of a valid syllogism with complements of terms, see remark~\ref{uupp}. Moreover, the part $\cate{A}\b\leftarrow M$ occurs in complemented mode in \cate{D}, because the list $\cate{A}\b\leftarrow M\arr\b$ is a part of \cate{D}.
In a SYLL diagram like $A\arr\b\arr\b\leftarrow B$ the term-variables $A$ and $B$ both occur in complemented mode. In the same diagram, the part $A\arr\b$ occurs in complemented mode. One would be tempted to consider the occurrence of $A$ in the previous diagram as $A$ in complemented mode twice, in connection with the complement of a formula $A^{\bot}$, that is $A^{\bot\bot}$, see definition~\ref{dopobot}.  We will be able to make sense of this in remark~\ref{primaopoi} and in remark~\ref{eco}. Similarly for the occurrence of $A$ in a part such as $\b\leftarrow\b\leftarrow A$.
\end{example}

\begin{definition}\label{sylldiagr}
The SYLL diagrams
\[\begin{tabular}{lllll}
& \AA{A}{B} &&
& \EE{A}{B}\\
\\
& \II{A}{B} &&
& \OO{A}{B}
\end{tabular}\]
for the corresponding categorical propositions will be henceforth referred to as \emph{syllogistic diagrams}. Their reversals are
\[\begin{tabular}{llll}
\Ao{A}{B} && \Eo{A}{B}\\
\\
\Io{A}{B} && \Oo{A}{B}
\end{tabular}\]
respectively.
\end{definition}

\begin{remark}\label{ascanso}
In every syllogistic diagram the term-variable $A$, which is intended as the subject of the corresponding categorical proposition, does not occur in complemented mode.
Similarly for the term-variables $B$ occurring in the syllogistic diagrams for $\aa AB$ and $\ii AB$. In the syllogistic diagram for \ii{A}{b} the term-variable $B$ occurs in complemented mode, because of its occurrence in the part $\b\leftarrow B$. Similarly for the term-variable $B$ in the diagram for \aa Ab. The reversals of the syllogistic diagrams for \aa{A}{b} and \ii{A}{B} are those for \aa{B}{a}, in which $A$ occurs in complemented mode, and \ii{B}{A}. The correspondence between them 
and the categorical formulas $B\multimap A^{\bot}$ and $B\otimes A$, respectively, is apparent and coherent with their symmetric look and the provable equivalences $A\multimap B^{\bot}\equiv B\multimap A^{\bot}$ and $A\otimes B\equiv B\otimes A$, see point (iv) of proposition~\ref{dopodopo}. The reversal of the syllogistic diagram for \ii{A}{b} corresponds to the formula $B^{\bot}\otimes A$.
\end{remark}

\begin{definition}
A \emph{concatenable pair} of SYLL diagrams is a pair of SYLL diagrams $(\cate{D}A,A\cate{E})$ or $(A\cate{E},\cate{D}A)$ whose \emph{concatenation} is, in both cases, the SYLL diagram $\cate{D}A\cate{E}$ which is obtained by overlapping its components on the common extremal term-variable $A$. A \emph{composable pair} of SYLL diagrams is a concatenable pair $(\cate{D}\arr A,A\arr\cate{E})$ or $(A\arr\cate{E},\cate{D}\arr A)$, $(\cate{D}\leftarrow A,A\leftarrow\cate{E})$ or $(A\leftarrow\cate{E},\cate{D}\leftarrow A)$. In the first two cases, a composable pair gives rise to a \emph{composite} SYLL diagram
$\cate{D}\arr\cate{E}$ obtained by substituting the part $\arr A\arr$ in the concatenation $\cate{D}\arr A\arr \cate{E}$ with the sole, accordingly oriented, arrow symbol $\arr$. Analogously, in the second two cases, a composable pair gives rise to a composite SYLL diagram $\cate{D}\leftarrow\cate{E}$. 
\end{definition}

\begin{examples}
For every term-variable $A$, $(A,A)$ is a concatenable pair whose concatenation is the diagram $A$. It is not a composable pair since no arrow symbols occur. The pair $(A\leftarrow B, X\arr B)$ is not concatenable, thus not composable, whereas the pair $(A\leftarrow B,B\leftarrow X)$ is concatenable and composable, with concatenation $A\leftarrow B\leftarrow X$ and 
composite $A\leftarrow X$. The pair $(X\arr A, A\leftarrow B)$ is concatenable but not composable.
The pair $(X\leftarrow B,B\leftarrow X)$ is concatenable in two different ways by overlapping its components either on $B$ or on $X$. Also, it is composable in two different ways giving rise to either the composite $X\leftarrow X$ or the composite $B\leftarrow B$, respectively. 
\end{examples}

\begin{definition}\label{wfd}
A \emph{well-formed diagram} of SYLL is defined inductively as follows:
\begin{itemize}
\item[(i)] a syllogistic diagram is a well-formed diagram.
\item[(ii)] the reversal of a syllogistic diagram is a well-formed diagram.
\item[(iii)] a diagram which is the concatenation of a concatenable pair whose components are 
well-formed diagrams is a well-formed diagram.
\end{itemize}
\end{definition}
The informal usage of the syllogistic diagrams and their reversals for the verification of the validity of syllogisms is as follows: given a syllogism, one considers the three syllogistic diagrams or their reversals to represent its first premise, second premise and conclusion. These involve three distinguished term-variables, usually denoted $S$, $P$ and $M$, so that $M$ occurs in both the diagrams for the premises, to let them be concatenable, and does not in the conclusion, whereas $S$ and $P$ occur in the conclusion as well as in the premises. Verifying the validity of a syllogism consists in calculating the composite diagram of the concatenation of its premises, if these form a composable pair, and compare it with the diagram for the conclusion. For example, the validity of the syllogism $\aa{P}{M},\aa{S}{m}\therefore\aa{S}{p}$ is suggestively verified through a drawing such as
\[\xymatrix@C=7ex{S\ar@(r,l)[dr]&&M\ar@(l,r)[dl]&P\ar[l]\ar@(d,r)[dll]\\
&\b&&}\]
in which, the curved arrow from $P$ to $\b$ has been obtained by ``composing'' the accordingly oriented arrows through $M$, so to obtain in the end a direct connection between $S$ and $P$ through the syllogistic diagram for \aa{S}{p}, which is the one for the conclusion of the syllogism in consideration. More succintly, such an informal procedure will be henceforth written as a single line of inference like 
\begin{eqnarray}\label{sylloline}
\AxiomC{$S\arr\b\leftarrow M\leftarrow P$}
\UnaryInfC{\EE{S}{P}}
\DisplayProof
\end{eqnarray}
The invalidity of the syllogism $\ii{P}{m},\aa{M}{s}\therefore\ii{S}{P}$ is confirmed by the fact that the pair $(P\leftarrow\b\arr\b\leftarrow M,M\arr\b\leftarrow S)$ although concatenable is not composable. In other words,
the line of inference
\[\AxiomC{$S\arr\b\leftarrow M\arr\b\leftarrow\b\arr P$}
\UnaryInfC{}
\DisplayProof\]
displays no conclusion.

\begin{remark}\label{doppo}
In calculating the composite of a composable pair of diagrams no bullet symbol is deleted, so that the composite contains as many bullets as in the concatenation of the diagrams in the given pair. This
is useful, when one also takes into account the orientation of the involved arrow symbols, for rejecting an invalid form of syllogism. For instance, the syllogism $\ii{P}{m},\aa{M}{s}\therefore\ii{S}{P}$ is invalid since a single bullet symbol occurs in the syllogistic diagram for the conclusion, whereas three of them occur in those for the premises. The syllogism $\aa{P}{M},\ii{S}{M}\therefore \aa{S}{p}$ is invalid since the syllogistic diagram for the conclusion contains a single bullet and a pair of arrows converging to it, whereas a single bullet and a pair of arrows diverging from it are contained in the syllogistic diagram for the second premise.
\end{remark}
For every term-variable $A$, interesting instances of syllogistic diagrams are:
\[\begin{tabular}{llllll}
& \AA{A}{A} && & \EE{A}{A}\\
\\
& \II{A}{A} && & \OO{A}{A}
\end{tabular}\]
where the diagrams for 
$\catebf{A}_{AA}$ and $\catebf{I}_{AA}$ correspond to the \emph{laws of identity}, see~\cite{Lukasiewicz}. In particular, the diagram for $\catebf{I}_{AA}$ expresses existential import on the term-variable $A$ in SYLL. 
For example, the validity of the strengthened syllogism $\aa MP, \aa SM, \ii SS\therefore \ii SP$ is informally expressed by the line of inference
\[\AxiomC{$S\leftarrow\b\arr S\arr M\arr P$}
\UnaryInfC{$S\leftarrow\b\arr P$}
\DisplayProof\]
which displays the deletion of both the term variables $S$ and $M$. 

The diagram for $\aa Aa$ expresses the ``emptyness'' of the term-variable $A$, whereas the diagram for $\catebf{I}_{Aa}$ is an expression of contradiction. 
\begin{remark}\label{diagsq}
All the laws of the square of opposition~(\ref{sq}) are easily calculated through the employment of the syllogistic diagrams and their reversals. For example, the contradiction between \aa{A}{B} and \ii{A}{b}, that is $\aa{A}{B},\ii{A}{b}\therefore\ii{A}{a}$, is informally verified through
\[\AxiomC{$A\leftarrow\b\arr\b\leftarrow B\leftarrow A$}
\UnaryInfC{$A\leftarrow\b\arr\b\leftarrow A$} 
\DisplayProof\]
whereas the contrariety of \aa{A}{B} and \aa{A}{b}, one of which is $\aa{A}{B},\ii{A}{A}\therefore\ii{A}{B}$, is verified through
\[\AxiomC{$A\leftarrow\b\arr A\arr B$}
\UnaryInfC{$A\leftarrow\b\arr B$}
\DisplayProof\]
We leave the easy verification of the remaining laws as an exercise for the reader.
\end{remark}
\begin{definition}\label{rules}
The rules of inference of SYLL are the following:
\[\begin{tabular}{ccccc}
\AxiomC{\AA{A}{B}}\doubleLine
\UnaryInfC{\Ao{A}{B}}\DisplayProof
&&\AxiomC{\EE{A}{B}}\doubleLine
\UnaryInfC{\Eo{A}{B}}\DisplayProof&
\end{tabular}\]
\[\begin{tabular}{ccccc}
\AxiomC{\II{A}{B}}\doubleLine
\UnaryInfC{\Io{A}{B}}\DisplayProof
&&\AxiomC{\OO{A}{B}}\doubleLine
\UnaryInfC{\Oo{A}{B}}\DisplayProof&
\end{tabular}\]
\vspace{.5cm}
\[\begin{tabular}{ccccc}
\AxiomC{\cate{D}A}
\AxiomC{A\cate{E}}
\BinaryInfC{\cate{D}A\cate{E}}
\DisplayProof&&
\AxiomC{A\cate{E}}
\AxiomC{\cate{D}A}
\BinaryInfC{\cate{D}A\cate{E}}
\DisplayProof&
\end{tabular}\]
\[\begin{tabular}{ccccc}
\AxiomC{$\cate{D}\arr A\arr\cate{E}$}
\UnaryInfC{$\cate{D}\arr\cate{E}$}
\DisplayProof&&
\AxiomC{$\cate{D}\leftarrow A\leftarrow\cate{E}$}
\UnaryInfC{$\cate{D}\leftarrow\cate{E}$}
\DisplayProof&
\end{tabular}\]
where the double line means that the rule can be used top-down as well as bottom-up. A \emph{sequent} of SYLL is a formal expression $\cate{G}_1,\ldots,\cate{G}_n\models\cate{A}$ where the $\cate{G}_i$'s are well-formed
diagrams of SYLL any two consecutive of which, or their reversals, are concatenable and \cate{A} is a well-formed diagram of SYLL. A \emph{syllogism} in SYLL is a sequent $\cate{D}_1,\cate{D}_2\models\cate{C}$, where $\cate{D}_1$, $\cate{D}_2$ and \cate{C} are syllogistic diagrams in which exactly three term-variables $S$, $M$ and $P$ occur, in the following way: $M$ occurs in both the diagrams for the premises, and does not in \cate{C}, $S$ occurs in $\cate{D}_2$, $P$ occurs in $\cate{D}_1$ and both occur in \cate{C}. A \emph{proof} in SYLL of a sequent $\cate{G}_1,\ldots,\cate{G}_n\models\cate{A}$ is a finite tree where each node is a well-formed diagram of SYLL, the root is \cate{A}, the leaves are the $\cate{G}_i$'s and each branching is an instance of a rule of inference. A sequent is \emph{provable} in SYLL if there is a proof for it.
\end{definition}

\begin{example}
The sequent $\II XA, \AA XY, \EE YB\models \OO AB$ is provable in SYLL, since
\[\AxiomC{\II XA}
\UnaryInfC{\II AX}
\AxiomC{\AA XY}
\AxiomC{\EE YB}
\BinaryInfC{$X\arr Y\arr\b\leftarrow B$}
\UnaryInfC{\EE XB}
\BinaryInfC{$A\leftarrow\b\arr X\arr\b\leftarrow B$}
\UnaryInfC{\OO AB}
\DisplayProof\]
is a proof in SYLL for it.
\end{example}
\begin{notation}
Whenever confusion is not likely to arise, we will employ the notations
\[\begin{tabular}{llllll}
\aa{A}{B} & &&&
\aa{A}{b} & 
\\
\\
\ii{A}{B} & &&&
\ii{A}{b} & 
\end{tabular}\]
for the corresponding syllogistic diagrams in definition~\ref{sylldiagr}. The sign $\models$ in place of the sign $\therefore$ will distinguish a syllogism in SYLL from a syllogism as an argument in the natural language. For example, $\aa{P}{M},\ii{S}{M}\models\ii{S}{P}$ is a different but equivalent way of writing the syllogism $\AA PM, \II SM \models \II SP$ in SYLL.
\end{notation}

\begin{example}
The syllogism $\aa{P}{M},\aa{S}{m}\models\aa{S}{p}$ is provable in SYLL, since 
\[\AxiomC{\EE{S}{M}}
\AxiomC{\AA{P}{M}}
\UnaryInfC{\Ao{P}{M}}
\BinaryInfC{$S\arr\b\leftarrow M\leftarrow P$}
\UnaryInfC{\EE{S}{P}}
\DisplayProof\]
is a proof in SYLL for it.
\end{example}

\begin{remark}
Because of the last couple of rules in definition~\ref{rules}, proving sequents of SYLL is a resource-sensitive activity, in the sense that the salient branchings in a proof of a sequent consist of the elimination of exactly one mediating term-variable, intuitively corresponding to the exhaustion of an available resource. This is in line with the intended interpretation of linear logic as a resource-sensitive logic, hinted at in section~\ref{motivation}. See~\cite{MR1356006}.
\end{remark}

\begin{theorem}\label{propopropo}
A syllogism is valid if and only if it is provable in SYLL.
\end{theorem}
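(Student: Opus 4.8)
The plan is to take \emph{valid} to mean, as is classical and as the text takes for granted, ``occurring in the upper half of the table~(\ref{questa})'', so that the statement becomes: a syllogism occurs in the upper half of table~(\ref{questa}) if and only if the corresponding sequent $\cate D_1,\cate D_2\models\cate C$ of syllogistic diagrams is provable in SYLL. I would prove the two implications separately; alternatively, Theorem~\ref{perquesta} already reduces validity to provability in $\rll^{\bot}$, and one could instead transfer proofs across that correspondence, but a direct diagrammatic argument is more transparent and is what the calculus was designed for.

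For the validity $\Rightarrow$ provability direction I would exhibit, for each entry of the upper half of table~(\ref{questa}), an explicit SYLL proof following a single uniform pattern: using the double-line (reversal) rules of definition~\ref{rules} one first orients the two premise diagrams so that the middle term-variable $M$ sits at a shared extremity, then applies a concatenation rule to form the diagram of the concatenation along $M$, then applies one of the last two rules of definition~\ref{rules} to delete $M$ by composition, and finally --- if needed --- applies a reversal rule to bring the result into the form of the diagram for $\cate C$. Concretely this is the single-line inference device illustrated in~(\ref{sylloline}) and used in remark~\ref{diagsq}. Since there are only finitely many schemes and they fall into the four figures, it suffices to carry this out on one representative per figure and mood, the remaining cases being literally identical after relabelling; that every such scheme does admit a composable arrangement of its premises is exactly what the table records.

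For the converse, provability in SYLL $\Rightarrow$ validity, I would analyse the shape of a SYLL proof of a syllogism $\cate D_1,\cate D_2\models\cate C$. Its leaves are the premise diagrams $\cate D_1,\cate D_2$ and its root $\cate C$ contains no occurrence of $M$; since the reversal and concatenation rules never remove a term-variable, the disappearance of $M$ must happen at a composition step, and such a step is available only when $M$ occurs in a part of the form $\arr M\arr$ or $\leftarrow M\leftarrow$. Hence, up to reversals, the proof must concatenate $\cate D_1$ and $\cate D_2$ along $M$ into a diagram containing such a part and then compose. This pins down the orientation of the arrow incident to $M$ on each side inside each premise, which restricts $\cate D_1$ and $\cate D_2$ to a short list of syllogistic diagrams; computing the composite in each surviving case, comparing it (up to reversal) with $\cate C$, and keeping track of the convention that $S$ is the subject and $P$ the predicate of the conclusion while $P$ occurs in the first premise and $S$ in the second, one checks that the admissible triples $(\cate D_1,\cate D_2,\cate C)$ are precisely the entries of the upper half of table~(\ref{questa}). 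The bullet-count invariant of remark~\ref{doppo} --- composition deletes no bullet, so the conclusion diagram carries exactly as many bullets as the two premise diagrams together --- provides an immediate a priori filter that discards most non-entries (for instance it rules out any existential conclusion from two universal premises), leaving only a handful of combinations to compute explicitly.

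The main obstacle is making the case analysis in the converse genuinely exhaustive: one must rule out ``detour'' derivations --- composing out $S$ or $P$ first, interposing superfluous concatenations with auxiliary diagrams, or reversing in the middle of the tree --- that might conceivably yield an invalid scheme. In other words one needs a rigidity/normal-form statement to the effect that any SYLL proof of a syllogism can be rearranged into ``reverse the premises as needed, concatenate along $M$, compose out $M$, reverse the conclusion''; once that is in place, the remaining verifications --- the arithmetic of composites and the bookkeeping of subject/predicate positions --- are routine.
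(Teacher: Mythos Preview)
The paper does not actually prove this theorem: its entire proof is the single line ``See~\cite{Pagnan2012-PAGADC}'', deferring to the author's earlier article where the system SYLL was introduced. So there is no in-paper argument to compare your proposal against.

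That said, your plan is essentially the natural one and almost certainly coincides with what the cited paper does: verify the finitely many entries of the upper half of table~(\ref{questa}) by exhibiting SYLL derivations (the single-line pattern of~(\ref{sylloline})), and for the converse run a case analysis constrained by the bullet-count invariant of remark~\ref{doppo} and by the requirement that $M$ be eliminated via a composition step. Two remarks may sharpen the converse and dissolve some of the ``main obstacle'' you flag. First, by definition~\ref{rules} the leaves of a SYLL proof of $\cate D_1,\cate D_2\models\cate C$ are \emph{exactly} $\cate D_1$ and $\cate D_2$, so no auxiliary diagrams can be smuggled in; the worry about ``interposing superfluous concatenations'' therefore does not arise. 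Second, the reversal rules in definition~\ref{rules} are stated only for the four syllogistic shapes, not for arbitrary well-formed diagrams, so once the two premises have been concatenated no further reversals are available until a syllogistic shape is reached again; combined with the fact that composing out $S$ or $P$ would remove a term-variable required in the root, this forces the derivation into the normal form you describe without needing a separate rigidity lemma. With these observations the remaining bookkeeping is the routine finite check you outline.
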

\begin{proof}
See~\cite{Pagnan2012-PAGADC}.
\end{proof}

\begin{definition}\label{syllp}
Let $\textup{SYLL}^+$ denote the formal system which is obtained from SYLL by the addition of the rule
\[\AxiomC{}
\UnaryInfC{\II{A}{A}}
\DisplayProof\]
to the rules in definition~\ref{rules}, with suitable extension of the remaining notions introduced there.
\end{definition}

\begin{theorem}\label{propos}
A strengthened syllogism is valid if and only if it is provable in $\textup{SYLL}^+$.
\end{theorem}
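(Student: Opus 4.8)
The plan is to prove the two implications separately, in the spirit of Theorem~\ref{propopropo}, keeping track of the only new ingredient of $\textup{SYLL}^{+}$, the axiom $\vdash\ii{A}{A}$. For the direction ``valid $\Rightarrow$ provable in $\textup{SYLL}^{+}$'', since the valid strengthened syllogisms are exactly those in the lower half of table~(\ref{questa}), it suffices to produce a proof in $\textup{SYLL}^{+}$ of each, and these all conform to a single scheme: one concatenates the two syllogistic diagrams for the premises with the diagram $\ii{X}{X}$ supplied by the new axiom for the term-variable $X\in\{S,M,P\}$ on which the existential import bears, and then carries out the two resulting composition (deletion) steps, together with a reversal where needed. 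The proof of \textup{bArbArI} displayed earlier in this section is the prototype; for the remaining first-, second- and fourth-figure cases whose import is on $S$ or $P$ one may alternatively invoke Theorem~\ref{propopropo} to first obtain a $\textup{SYLL}$ proof of the companion syllogism with \emph{universal} conclusion and then subalternate it by one concatenation with $\ii{X}{X}$ followed by one composition. In the third figure, and in the $\ii{M}{M}$ case of the fourth figure, the diagram $\ii{M}{M}$ is instead threaded between the two premise diagrams: for instance \textup{dArAptI}, $\aa{M}{P},\aa{M}{S},\ii{M}{M}\therefore\ii{S}{P}$, is obtained by forming the concatenation $S\leftarrow M\leftarrow\b\arr M\arr P$ of $\Ao{M}{S}$, $\II{M}{M}$ and $\AA{M}{P}$ and deleting both occurrences of $M$, which leaves $\II{S}{P}$. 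The remaining cases are entirely analogous, so this direction is a routine finite check.

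For the direction ``provable in $\textup{SYLL}^{+}$ $\Rightarrow$ valid'', the key observations are that the new axiom can introduce only diagrams of the form $\ii{A}{A}$ and that, as in remark~\ref{doppo}, neither concatenation nor composition ever removes a bullet symbol; hence in every $\textup{SYLL}^{+}$ proof the number of bullets in the root equals the total number of bullets in the leaves, each axiom leaf contributing exactly one. Since the two premise diagrams of a strengthened syllogism are universal and its conclusion diagram is existential, this count determines the number of uses of the axiom, and a brief inspection --- using that a proof with no use of the axiom would be a $\textup{SYLL}$ proof of an existential conclusion from two universal premises, which is impossible by Theorem~\ref{propopropo} --- forces the axiom to be used exactly once, as $\vdash\ii{X}{X}$. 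Substituting a premise leaf $\ii{X}{X}$ for that instance turns the proof into a $\textup{SYLL}$ proof of $\cate{D}_{1},\cate{D}_{2},\ii{X}{X}\models\cate{C}$, and from here validity follows in two equivalent ways: by appeal to the strengthened form of the completeness theorem of~\cite{Pagnan2012-PAGADC}; or, self-containedly, by translating this $\textup{SYLL}$ proof into an $\rll^{\bot}$ proof, with each leaf $\cate{D}$ becoming the identity axiom on the categorical formula it codes --- so in particular $\ii{X}{X}$ becomes $X\otimes X\vdash X\otimes X$ --- the reversal rules justified by the equivalences of proposition~\ref{dopodopo}(iv), each concatenation realized by a left rule, and each composition realized by a $(\multimap\mathrm{L})$ inference against the relevant identity followed by $(\multimap\mathrm{R})$; the antecedent of the resulting $\rll^{\bot}$ proof is then $F_{1},F_{2},X\otimes X$, so that Theorem~\ref{perquesta} applies.

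The step I expect to be the main obstacle is this control of the new axiom in the second direction: the bullet count is the essential device, but turning it into a conclusive argument requires a careful inspection across the figures and moods in order to exclude proofs that invoke $\vdash\ii{A}{A}$ spuriously or more than once; and the faithful translation of $\textup{SYLL}$ derivations into $\rll^{\bot}$ derivations --- in particular checking that the arrow-deletion step is matched by a genuine $(\multimap\mathrm{L})$ inference against an identity axiom and that the multiset bookkeeping of $\rll^{\bot}$ causes no difficulty --- is where the care is concentrated. The first direction, by contrast, is entirely routine.
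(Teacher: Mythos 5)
First, note what you are comparing against: the paper offers no argument for Theorem~\ref{propos} at all --- its proof is literally the citation ``See~\cite{Pagnan2012-PAGADC}''. So your first ``equivalent way'' of finishing (appealing to the strengthened completeness theorem of~\cite{Pagnan2012-PAGADC}) is exactly the paper's move, not a proof; and your soundness direction must therefore stand on its self-contained alternative. Your first direction (validity $\Rightarrow$ provability) is fine: it is a finite check over the lower half of table~(\ref{questa}), your concatenation/deletion schemes (including threading $\ii{M}{M}$ between the premises in the third-figure cases) do produce the required derivations, and this matches the examples the paper itself displays.

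The genuine gaps are in the converse. (1) The bullet count of remark~\ref{doppo} does not by itself ``determine the number of uses of the axiom'': universal premises of type $\aa{A}{b}$ contribute bullets and existential conclusions may contain one or two, so ruling out spurious or repeated instances of $\vdash\ii{A}{A}$, and pinning the instance to the right term-variable, is not a counting corollary but precisely the case analysis on the shapes of the leaf diagrams that constitutes the proof (compare lemma~\ref{aai} and the proof of theorem~\ref{uuu}, which is the in-paper model for how such a classification is actually carried out). Calling it ``a brief inspection'' leaves the core of the soundness direction undone. (2) Your translation of $\textup{SYLL}$ derivations into $\rll^{\bot}$ derivations is asserted, not established. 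The nodes of a SYLL proof are arbitrary well-formed diagrams --- concatenations such as $S\leftarrow\b\arr S\arr M\arr P$ --- which are not syllogistic diagrams and are not assigned any formula or sequent of $\rll^{\bot}$ anywhere; so ``each concatenation realized by a left rule, each composition by $(\multimap\mathrm{L})$'' presupposes an interpretation of general diagrams and an admissibility check for every rule of definition~\ref{rules} that you would have to define and prove as a separate lemma. Nothing in the paper supplies this (theorem~\ref{ddd} goes the other way round, composing theorems~\ref{perquesta},~\ref{propopropo} and~\ref{propos}, so invoking such a translation here is also uncomfortably close to what \ref{ddd} is later derived from). A corrected write-up should either carry out the classification of premise combinations directly, as in lemmas~\ref{aaa}--\ref{aai}, or actually construct the diagram-to-sequent interpretation; as it stands the second direction is a plan rather than a proof.
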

\begin{proof}
See~\cite{Pagnan2012-PAGADC}.
\end{proof}
\begin{example}
The strengthend syllogism $\aa Mp,\aa MS, \ii MM\models\ii Sp$ is provable in $\textup{SYLL}^+$, since
\[\AxiomC{\AA{M}{S}}
\UnaryInfC{\Ao{M}{S}}
\AxiomC{}
\UnaryInfC{\II{M}{M}}
\AxiomC{\EE{M}{P}}
\BinaryInfC{$M\leftarrow\b\arr M\arr\b\leftarrow P$}
\UnaryInfC{\OO{M}{P}}
\BinaryInfC{$S\leftarrow M\leftarrow\b\arr\b\leftarrow P$}
\UnaryInfC{\OO{S}{P}}
\DisplayProof\]
is a proof in SYLL for it.
\end{example}

\begin{theorem}\label{ddd}
A (strengthened) syllogism is provable in $\rll^{\bot}$ if and only if it is provable in $\textup{SYLL}^{+}$.
\end{theorem}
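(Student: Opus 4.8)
The plan is to obtain Theorem~\ref{ddd} by routing both notions of provability through the notion of validity, which has already been pinned down on each side. On the linear side, Theorem~\ref{perquesta} gives that a (strengthened) syllogism is valid if and only if the corresponding sequent is provable in $\rll^{\bot}$. On the diagrammatic side, Theorem~\ref{propopropo} gives that an (ordinary) syllogism is valid if and only if it is provable in SYLL, hence \emph{a fortiori} in $\textup{SYLL}^{+}$, and Theorem~\ref{propos} gives that a strengthened syllogism is valid if and only if it is provable in $\textup{SYLL}^{+}$. Composing the two biconditionals yields ``provable in $\rll^{\bot}$ $\Leftrightarrow$ valid $\Leftrightarrow$ provable in $\textup{SYLL}^{+}$'', which is exactly the assertion. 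The only fine print is that $\textup{SYLL}^{+}$ must not prove more \emph{ordinary} syllogism-sequents than SYLL already does — which is immediate, since a proof of an ordinary syllogism-sequent $\cate{D}_{1},\cate{D}_{2}\models\cate{C}$ has its leaves confined to $\cate{D}_{1},\cate{D}_{2}$, so the extra axiom of Definition~\ref{syllp} cannot be invoked — while for strengthened syllogisms the axiom $\II{A}{A}$ plays precisely the role of the existential-import hypothesis $A\otimes A$ of Definition~\ref{cateform}.

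Before chaining, I would fix the identification that makes ``the same (strengthened) syllogism'' meaningful in both systems: a syllogism $F_{1},F_{2}\vdash C$ of $\rll^{\bot}$ and the syllogism $\cate{D}_{1},\cate{D}_{2}\models\cate{C}$ of $\textup{SYLL}^{+}$ obtained by the term-by-term dictionary $A\multimap B\mapsto\AA{A}{B}$, $A\multimap B^{\bot}\mapsto\EE{A}{B}$, $A\otimes B\mapsto\II{A}{B}$, $A\otimes B^{\bot}\mapsto\OO{A}{B}$ are declared to be the same, and likewise for strengthened ones with $A\otimes A$ corresponding to $\II{A}{A}$. Under this dictionary the abbreviations $\aa{A}{B},\ii{A}{B},\dots$ read uniformly on both sides, and the table~(\ref{questa}) enumerates the valid cases in either reading; with this in place, strictly speaking, nothing is left to prove beyond invoking the three theorems, so the main ``obstacle'' is to realise that no new argument is needed.

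If one instead wanted a proof internal to the two calculi — in the spirit of ``putting $\rll^{\bot}$ and $\textup{SYLL}^{+}$ in direct connection'' — the plan would be to simulate derivations of syllogism-sequents in both directions: the rules (Id), ($\otimes$L), ($\otimes$R), ($\multimap$L), ($\multimap$R) of $\rll^{\bot}$ by reversal, concatenation and composition in $\textup{SYLL}^{+}$ (with the axiom $\II{A}{A}$ matched by a trivial $\otimes$-step), and conversely. There the genuinely delicate point is the one already dispatched in the proof of Theorem~\ref{perquesta}: a derivation in $\rll^{\bot}$ whose end-sequent is a syllogism-sequent is forced, by the shape of the rules, to have its three categorical formulas among the recognised ones on $S,M,P$ and to follow one of a small number of patterns, and it is this structural constraint — rather than the rule-by-rule simulation — that would carry the weight.
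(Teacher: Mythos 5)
Your proposal is correct and matches the paper's own argument, which proves Theorem~\ref{ddd} precisely by chaining Theorems~\ref{perquesta}, \ref{propopropo} and \ref{propos} through the notion of validity. The extra remarks you add (the dictionary identifying syllogisms across the two systems, and the observation that $\textup{SYLL}^{+}$ is conservative over SYLL for ordinary syllogisms) are sensible fine print that the paper leaves implicit, but they do not change the route.
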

\begin{proof}
The result follows from the theorems~\ref{perquesta},~\ref{propopropo},~\ref{propos}.
\end{proof}

\section{The reduction rules of the syllogistics}\label{redrules}
In this section we comment on the so-called \emph{reduction rules} for the syllogistics in connection with the diagrammatic calculus we presented in section~\ref{sec4}. The material which is in this section is not contained in~\cite{Pagnan2012-PAGADC}.

The reduction rules for the syllogistics were introduced by Aristotle to bring back
the valid syllogisms in the second, third and fourth figures, see remark~\ref{intanto}, to the valid syllogisms in the first figure, to let these acquire the privileged status of \emph{perfect syllogisms} as named by Aristotle himself. Reducing a syllogism is a way of justifying it as an accepatable valid inference schema. The reduction rules are the following:
\begin{itemize}
\item[-] \emph{simple conversion}:  apply the valid arguments $\aa{A}{b}\therefore\aa{B}{a}$ and $\ii{A}{B}\therefore\ii{B}{A}$ to a premise or to the conclusion of a syllogism.
\item[-] \emph{conversion per accidens}: apply the arguments $\aa{B}{A}, \ii BB\therefore\ii{A}{B}$ and $\aa{B}{a}, \ii AA\therefore\ii{A}{b}$ to a premise or to the conclusion of a syllogism. 
\item[-] \emph{subalternation}: apply the arguments $\aa{A}{B}, \ii AA\therefore\ii{A}{B}$ and $\aa{A}{b}, \ii AA\therefore\ii{A}{b}$ to a premise or to the conclusion of a syllogism. 
\item[-] \emph{exchange of premises}: it consists in exchanging the order of the premises in a syllogism, simultaneously letting the minor term become the major and viceversa.
\item[-] \emph{contradiction}: assume the premises and the contradictory of the conclusion of a valid syllogism in the second, third or fourth figure and obtain the contradictory of one of the premises by means of a syllogism in the first figure.
\end{itemize}
The previous rules are connected with the valid $1$-term and $2$-term syllogisms
whose description follows. The general $n$-term syllogisms are discussed in~\cite{Lukasiewicz, Meredith, MR0107598}. The valid $1$-term syllogisms are exactly two, that is $\vdash\catebf{A}_{AA}$ and $\vdash\catebf{I}_{AA}$. These are the laws of identity we previously hinted at. The valid $2$-term syllogisms are ten in total:
\begin{description}
\item[] $\catebf{A}_{AB}\vdash\catebf{A}_{AB}$
\item[] $\catebf{A}_{Ab}\vdash\catebf{A}_{Ab}$
\item[] $\catebf{I}_{AB}\vdash\catebf{I}_{AB}$
\item[] $\catebf{I}_{Ab}\vdash\catebf{I}_{Ab}$
\item[] the \emph{laws of subalternation}:
\begin{description}
\item[] $\catebf{A}_{AB}, \ii AA\vdash\catebf{I}_{AB}$
\item[] $\catebf{A}_{Ab},\ii AA\vdash\catebf{I}_{Ab}$.
\end{description}
\item[] the \emph{laws of simple conversion}:
\begin{description}
\item[] $\catebf{A}_{Ab}\vdash\catebf{A}_{Ba}$
\item[] $\catebf{I}_{AB}\vdash\catebf{I}_{BA}$ 
\end{description}
\item[] the \emph{laws of conversion per accidens}:
\begin{description}
\item[] $\catebf{A}_{BA}, \ii BB\vdash\catebf{I}_{AB}$
\item[] $\catebf{A}_{Ba}, \ii AA\vdash\catebf{I}_{Ab}$
\end{description}
\end{description}

\begin{definition}
Let $\textup{SYLL}^{++}$ be the formal system which is obtained from $\textup{SYLL}^{+}$ by the addition of the rule
\[\AxiomC{}
\UnaryInfC{\AA{A}{A}}
\DisplayProof\]
together with suitable extension of the remaining notions introduced in definition~\ref{syllp}.
\end{definition}
In $\textup{SYLL}^{++}$ it is possible to prove all the valid $1$-term and $2$-term syllogisms. Most of them are immediate, such as for example the laws of identity. Here is the proofs of the laws of conversion per accidens:
\begin{itemize}
\item[-] $\catebf{A}_{BA}, \ii BB\vdash\catebf{I}_{AB}$
\[\AxiomC{}
\UnaryInfC{\II{B}{B}}
\AxiomC{\AA{B}{A}}
\BinaryInfC{$B\leftarrow\b\arr B\arr A$}
\UnaryInfC{\II{B}{A}}
\UnaryInfC{\II{A}{B}}
\DisplayProof\]
\item[-] $\catebf{A}_{Ba}, \ii AA\vdash\catebf{I}_{Ab}$
\[\AxiomC{}
\UnaryInfC{\II{A}{A}}
\AxiomC{\EE{B}{A}}
\UnaryInfC{\EE{A}{B}}
\BinaryInfC{$A\leftarrow\b\arr A\arr\b\leftarrow B$}
\UnaryInfC{\OO{A}{B}}
\DisplayProof\]
\end{itemize}
In $\textup{SYLL}^{++}$, $\aa{P}{m},\ii{M}{S}\vdash\ii{S}{p}$ reduces to $\aa{M}{p},\aa{S}{M}\vdash\aa{S}{p}$ by contradiction and simple conversion:
\[\AxiomC{\AA{S}{P}}
\AxiomC{\EE{P}{M}}
\BinaryInfC{$S\arr P\arr\b\leftarrow M$}
\UnaryInfC{\EE{S}{M}}
\UnaryInfC{\EE{M}{S}}
\DisplayProof\]
In fact, from the leaves to the diagram \EE SM, for \aa{S}{m}, the previous proof tree is an instance of the second syllogism, which is in the first figure, with premises \aa{S}{P}, which is the contradictory of \ii{S}{p}, and \aa{P}{m}, which is the first premise of the first syllogism. The root is obtained by simple conversion on \aa{S}{m} towards \aa{M}{s}, which is the contradictory of the second premise of the first syllogism. Similarly, the syllogism $\aa PM, \ii Sm \vdash \ii Sp$ reduces to the syllogism $\aa PM, \aa SM \vdash \aa SP$ by contradiction:
\[\AxiomC{\AA SP}
\AxiomC{\AA PM}
\BinaryInfC{\AA SM}
\DisplayProof\]
The strengthened syllogism $\aa{P}{M},\aa{M}{s}, \ii SS\vdash\ii{S}{p}$ reduces to the strengthend syllogism $\aa{M}{p},\aa{S}{M}, \ii SS\vdash\ii{S}{p}$ by exchanging the premises, and suitably renaming the term-variables $s$ and $P$. The strengthened syllogism $\aa{M}{P},\aa{S}{M}, \ii SS\vdash\ii{S}{P}$ reduces to the syllogism $\aa{M}{P},\aa{S}{M}\vdash\aa{S}{P}$ by subalternation:
\[\AxiomC{}
\UnaryInfC{\II{S}{S}}
\AxiomC{\AA{S}{M}}
\AxiomC{\AA{M}{P}}
\BinaryInfC{$S\arr M\arr P$}
\UnaryInfC{\AA{S}{P}}
\BinaryInfC{$S\leftarrow\b\arr S\arr P$}
\UnaryInfC{\II{S}{P}}
\DisplayProof\]

\section{Extension to De Morgan's syllogistics}\label{demorgan}
In connection with~\cite{MR2464674} and~\cite{ADeMorgan}, we mention that the investigation toward the possibility of considering categorical propositions with complemented subjects, see section~\ref{syllru}, lead to an extension of traditional syllogistic. In this section we extend the diagrammatic system SYLL for the purpose, together with theorems~\ref{propopropo} and~\ref{propos}. All of this is not in~\cite{Pagnan2012-PAGADC}. Moreover, we extend theorem~\ref{perquesta} and theorem~\ref{ddd} as well.

The starting point are the four \emph{new categorical propositions}
\[\begin{tabular}{lllll}
&\aa{a}{B}: Each non-$A$ is $B$ &&\aa{a}{b}: Each non-$A$ is non-$B$\\
\\
\\
&\ii{a}{B}: Some non-$A$ is $B$ &&\ii{a}{b}: Some non-$A$ is non-$B$
\end{tabular}\]
all of which are affirmative, universal or particular. A prerogative of De Morgan's approach to syllogistics was in fact that of making the affirmative mode of predication the fundamental one. Moreover, De Morgan introduced the so-called \emph{spicular notation} for a symbolic treatment of syllogistics and on the base of it and the available eight categorical propositions, he was able to find thirty-two valid syllogisms, eight in the mood \textbf{AAA}, eight in the mood \textbf{AII}, eight in the mood \textbf{IAI} and eight strengthened ones in the mood \textbf{AAII}. A comparison between the diagrammatic system SYLL and the spicular notation is treated in~\cite{Pagnan}.
\begin{definition}
A \emph{De Morgan syllogism} is an argument $P_1,P_2\therefore C$ where $P_1,P_2,C$ are (new) categorical propositions, in which three term-variables among $S,M,P,s,m,p$ occur as follows: $M$ (resp. $m$) occurs in both the premises and does not occur in the conclusion whereas $P$ (resp. $p$) occurs in the first premise and $S$ (resp. $s$) occurs in the second premise. The term-variables $S$ (resp. $s$) and $P$ (resp. $p$) occur as the subject and predicate of the conclusion, respectively, and are referred to as \emph{minor term} and \emph{major term} of the syllogism, whereas $M$ (resp. $m$) is referred to as \emph{middle term}. A \emph{strengthened De Morgan syllogism} is an argument $P_1,P_2,\ii XX\therefore C$ or $P_1,P_2,\ii xx\therefore C$, where $X\in\{S,M,P\}$, $x\in\{s,m,p\}$ and $P_1,P_2\therefore C$ is a De Morgan syllogism.
\end{definition}
The table 
\begin{eqnarray}\label{questat}{\textrm{\footnotesize
\begin{tabular}{|l|l|l|l|}
\hline
$\aa MP,\aa SM\therefore \aa SP $& $\aa MP,\ii SM\therefore \ii SP $& $\ii MP,\aa MS\therefore\ii SP$ & $\aa MP,\aa SM,\ii SS\therefore \ii SP$\\
$\aa mP, \aa Sm\therefore\aa SP$ & $\aa mP, \ii Sm\therefore\ii SP$ & $\ii mP,\aa mS\therefore \ii SP$ & $\aa mP,\aa Sm, \ii SS\therefore \ii SP$\\
$\aa Mp, \aa SM\therefore \aa Sp$&$\aa Mp, \ii SM\therefore \ii Sp$ & $\ii MP,\aa Ms\therefore \ii sP$ & $\aa Mp,\aa SM,\ii SS\therefore \ii Sp$\\
$\aa mp,\aa Sm\therefore \aa Sp$&$\aa mp,\ii Sm\therefore\ii Sp$ &$\ii mP,\aa ms\therefore\ii sP$&$\aa mp,\aa Sm,\ii SS\therefore \ii Sp$\\
$\aa MP, \aa sM\therefore\aa sP$ &$\aa MP,\ii sM\therefore \ii sP$&$\ii Mp,\aa MS\therefore\ii Sp$&$\aa MP,\aa sM,\ii ss\therefore \ii sP$\\
$\aa mP,\aa sm\therefore\aa sP$&$\aa mP,\ii sm\therefore\ii sP$&$\ii mp,\aa mS\therefore\ii Sp$&$\aa mP,\aa sm,\ii ss\therefore \ii sP$\\
$\aa Mp,\aa sM \therefore \aa sp$&$\aa Mp,\ii sM\therefore\ii sp$&$\ii Mp,\aa Ms\therefore \ii sp$&$\aa Mp,\aa sM,\ii ss\therefore \ii sp$\\
$\aa mp,\aa sm\therefore \aa sp$&$\aa mp,\ii sm\therefore\ii sp$&$\ii mp, \aa ms\therefore\ii sp$&$\aa mp,\aa sm,\ii ss\therefore \ii sp$\\
\hline
\end{tabular}}}
\end{eqnarray}
lists exactly the valid De Morgan syllogisms. It was filled in on the base of an analogous one contained in~\cite{MR2464674}.

\begin{definition}
For every formulas $A$ and $B$ in $\rll^{\bot}$, the formulas 
\[\begin{tabular}{llllll}
$A^{\bot}\multimap B$&&&&
$A^{\bot}\multimap B^{\bot}$\\
\\
$A^{\bot}\otimes B$&&&&
$A^{\bot}\otimes B^{\bot}$
\end{tabular}\]
will be henceforth referred to as \emph{new categorical formulas}. Correspondingly, the SYLL diagrams
\[\begin{tabular}{llllll}
&$A\arr\b\arr B$&&&
&$A\arr\b\arr\b\leftarrow B$\\
\\
&$A\arr\b\leftarrow\b\arr B$&&&
&$A\arr\b\leftarrow\b\arr\b\leftarrow B$
\end{tabular}\]
will be henceforth referred to as \emph{new syllogistic diagrams}. Their reversals are
\[\begin{tabular}{llllll}
&$B\leftarrow\b\leftarrow A$&&&
&$B\arr\b\leftarrow\b\leftarrow A$\\
\\
&$B\leftarrow\b\arr\b\leftarrow A$&&&
&$B\arr\b\leftarrow\b\arr\b\leftarrow A$
\end{tabular}\]
respectively.
\end{definition}
\begin{notation}
When confusion is not likely to arise we will employ the notations
\[\begin{tabular}{llllll}
\aa{a}{B}&  &&&
\aa{a}{b} &\\
\\
\ii{a}{B} &  &&&
\ii{a}{b} & 
\end{tabular}\]
for the corresponding new categorical formulas and new syllogistic diagrams. 
\end{notation}
The next result extends proposition~\ref{sqseq}.
\begin{proposition}\label{nsq}
The following sequents are provable in $\rll^{\bot}$:
\begin{itemize}
\item[-] (new) laws of contradiction: $A^{\bot}\multimap B,A^{\bot}\otimes B^{\bot}\vdash B\otimes B^{\bot}$ and $A^{\bot}\multimap B^{\bot},A^{\bot}\otimes B\vdash B\otimes B^{\bot}$.
\item[-] (new) laws of subalternation, contrariety and subcontrariety: $A^{\bot}\multimap B, A^{\bot}\otimes A^{\bot}\vdash A^{\bot}\otimes B$ and $A^{\bot}\multimap B^{\bot},A^{\bot}\otimes A^{\bot}\vdash A^{\bot}\otimes B^{\bot}$.
\end{itemize}
\end{proposition}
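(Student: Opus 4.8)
The plan is to exhibit explicit proof trees in the sequent calculus of $\rll^{\bot}$ introduced in Definition~\ref{infrule}, exactly in the spirit of the (straightforward) proof of Proposition~\ref{sqseq} which this statement extends. Since the only available rules are $(\mathrm{Id})$, $(\otimes\mathrm{L})$, $(\otimes\mathrm{R})$, $(\multimap\mathrm{L})$ and $(\multimap\mathrm{R})$, and contexts are multisets, each derivation is essentially forced once one reads the principal connective of the conclusion and of the hypotheses.

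For the first new law of contradiction, $A^{\bot}\multimap B,\,A^{\bot}\otimes B^{\bot}\vdash B\otimes B^{\bot}$, I would first apply $(\otimes\mathrm{L})$ to the hypothesis $A^{\bot}\otimes B^{\bot}$, reducing the goal to $A^{\bot}\multimap B,\,A^{\bot},\,B^{\bot}\vdash B\otimes B^{\bot}$. Then apply $(\multimap\mathrm{L})$ to $A^{\bot}\multimap B$, splitting the context so that the left premise is the axiom $A^{\bot}\vdash A^{\bot}$ and the right premise is $B^{\bot},\,B\vdash B\otimes B^{\bot}$; the latter follows by $(\otimes\mathrm{R})$ from the axioms $B\vdash B$ and $B^{\bot}\vdash B^{\bot}$. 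The second new law of contradiction, $A^{\bot}\multimap B^{\bot},\,A^{\bot}\otimes B\vdash B\otimes B^{\bot}$, is handled symmetrically: apply $(\otimes\mathrm{L})$ to $A^{\bot}\otimes B$, then $(\multimap\mathrm{L})$ to $A^{\bot}\multimap B^{\bot}$ with left premise $A^{\bot}\vdash A^{\bot}$ and right premise $B,\,B^{\bot}\vdash B\otimes B^{\bot}$, the latter again by $(\otimes\mathrm{R})$ (recall that $\otimes$ is commutative by point (iv) of Proposition~\ref{dopodopo}, but here it is not even needed since contexts are multisets).

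For the two remaining sequents, which package the laws of subalternation, contrariety and subcontrariety, consider $A^{\bot}\multimap B,\,A^{\bot}\otimes A^{\bot}\vdash A^{\bot}\otimes B$, the case with $B^{\bot}$ in place of $B$ being word-for-word identical. I would apply $(\otimes\mathrm{L})$ to $A^{\bot}\otimes A^{\bot}$, obtaining $A^{\bot}\multimap B,\,A^{\bot},\,A^{\bot}\vdash A^{\bot}\otimes B$, and then $(\multimap\mathrm{L})$ on $A^{\bot}\multimap B$ with left premise $A^{\bot}\vdash A^{\bot}$ and right premise $A^{\bot},\,B\vdash A^{\bot}\otimes B$, the latter deduced by $(\otimes\mathrm{R})$ from $A^{\bot}\vdash A^{\bot}$ and $B\vdash B$.

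There is no genuine obstacle: the derivations are purely syntactic and a few lines long. The only conceptual point worth recording — and it is precisely why the conclusions are stated as $B\otimes B^{\bot}$ and $A^{\bot}\otimes B$ from the hypothesis $A^{\bot}\otimes A^{\bot}$, rather than with a bare $A^{\bot}$ — is that $\rll^{\bot}$ has no contraction rule, so the formula $A^{\bot}$ cannot be duplicated: the two occurrences released by decomposing $A^{\bot}\otimes A^{\bot}$ via $(\otimes\mathrm{L})$ are exactly what the derivation consumes, one to feed the antecedent of the linear implication through $(\multimap\mathrm{L})$ and one to rebuild a conjunct of the conclusion through $(\otimes\mathrm{R})$; likewise $B^{\bot}$ obtained from $A^{\bot}\otimes B^{\bot}$ is never consumed and therefore survives into the conclusion $B\otimes B^{\bot}$. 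This is the linear-logical counterpart of the remark following Proposition~\ref{sqseq}.
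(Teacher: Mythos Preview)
Your proposal is correct and is precisely the kind of routine sequent-calculus derivation that the paper's one-word proof (``Straightforward.'') leaves to the reader; there is no difference in approach, only in the level of detail you supply.
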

\begin{proof}
Straightforward.
\end{proof}
The next result extends proposition~\ref{xxx}
\begin{proposition}
The following sequents are provable in $\rll^{\bot}$:
\[\left.\begin{array}{ll}
A^{\bot}\multimap B\vdash (A^{\bot}\otimes B^{\bot})^{\bot}\\
\\
A^{\bot}\otimes B\vdash (A^{\bot}\multimap B^{\bot})^{\bot}
\end{array}\right.
\qquad
\qquad
\left.\begin{array}{ll}
A^{\bot}\multimap B^{\bot}\vdash (A^{\bot}\otimes B)^{\bot}\\
\\
A^{\bot}\otimes B^{\bot}\vdash (A^{\bot}\multimap B)^{\bot}
\end{array}\right.\]
\end{proposition}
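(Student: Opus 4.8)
The plan is to notice that the four sequents displayed are exactly the four sequents of Proposition~\ref{xxx} with the formula $A$ replaced uniformly by the formula $A^{\bot}$. Since the sequents in Proposition~\ref{xxx} are provable for arbitrary formulas $A,B$ of $\rll^{\bot}$ and $A^{\bot}=A\multimap\bot$ is again such a formula, the substitution $A\mapsto A^{\bot}$ is legitimate; it sends $A\multimap B\vdash(A\otimes B^{\bot})^{\bot}$ to $A^{\bot}\multimap B\vdash(A^{\bot}\otimes B^{\bot})^{\bot}$, it sends $A\otimes B\vdash(A\multimap B^{\bot})^{\bot}$ to $A^{\bot}\otimes B\vdash(A^{\bot}\multimap B^{\bot})^{\bot}$, it sends $A\multimap B^{\bot}\vdash(A\otimes B)^{\bot}$ to $A^{\bot}\multimap B^{\bot}\vdash(A^{\bot}\otimes B)^{\bot}$, and it sends $A\otimes B^{\bot}\vdash(A\multimap B)^{\bot}$ to $A^{\bot}\otimes B^{\bot}\vdash(A^{\bot}\multimap B)^{\bot}$. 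These are precisely the four sequents to be proved, and no collapsing of iterated occurrences of $(\,\cdot\,)^{\bot}$ is needed, so the proposition is a one-line consequence of Proposition~\ref{xxx}.

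Should a self-contained derivation be preferred, I would instead build the cut-free proof trees directly, repeating the pattern that establishes Proposition~\ref{xxx}. For the first sequent, unfolding $(\,\cdot\,)^{\bot}$ according to Notation~\ref{botbot} turns the goal into $A^{\bot}\multimap B\vdash(A^{\bot}\otimes B^{\bot})\multimap\bot$; one application of ($\multimap$R) reduces it to $A^{\bot}\multimap B,\,A^{\bot}\otimes B^{\bot}\vdash\bot$, and ($\otimes$L) reduces that to $A^{\bot}\multimap B,\,A^{\bot},\,B^{\bot}\vdash\bot$. Reading $B^{\bot}$ as $B\multimap\bot$, one use of ($\multimap$L) splits this into $A^{\bot}\multimap B,\,A^{\bot}\vdash B$ and $\bot\vdash\bot$; the latter is an instance of (Id) because $\bot$ is an atomic formula of $\rll^{\bot}$, and the former follows by a further ($\multimap$L) from the axioms $A^{\bot}\vdash A^{\bot}$ and $B\vdash B$. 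The remaining three sequents are obtained by the same alternation of ($\multimap$R), ($\otimes$L) and ($\multimap$L) steps ending at instances of (Id); the cases differ only in the order in which the implications sitting in the antecedent are decomposed.

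I expect no genuine obstacle here: the statement is a routine specialization of an already-proved proposition, and the only things requiring a little attention are the bookkeeping with the abbreviation $A^{\bot}=A\multimap\bot$ and the fact that in $\rll^{\bot}$ the constant $\bot$ is itself an atom, so that $\bot\vdash\bot$ is available as an axiom.
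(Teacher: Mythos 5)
Your proposal is correct: the four sequents are exactly the instances of Proposition~\ref{xxx} under the substitution of $A^{\bot}$ for $A$, which is legitimate since that proposition is schematic in arbitrary formulas, and your back-up sequent derivations (alternating ($\multimap$R), ($\otimes$L), ($\multimap$L) down to instances of (Id), with $\bot\vdash\bot$ available as an axiom) are exactly the routine verification the paper compresses into the word ``straightforward''. Nothing is missing; the only care needed is the bookkeeping with the abbreviation $A^{\bot}=A\multimap\bot$, which you handle correctly.
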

\begin{proof}
Straightforward.
\end{proof}

\begin{remark}\label{gh}
In the multiplicative fragment of classical linear logic the 
new categorical formula $A^{\bot}\multimap B$ would properly correspond to the multiplicative disjunction of $A$ and $B$, $A\parr B$, in the sense that one would be allowed to define it as $A^{\bot}\multimap B$ and then be able to prove
$A\parr B\equiv B\parr A$, using that in classical linear logic $X\equiv X^{\bot\bot}$, for every formula $X$. In the present framework such equivalence is not provable, point (i) of proposition~\ref{dopodopo},  and the sequent $A^{\bot}\multimap B\vdash B^{\bot}\multimap A$ is not provable. Thus, we look at $A^{\bot}\multimap B$ as to a non-symmetric multiplicative disjunction, coherently with the non-symmetric look of the corresponding new syllogistic diagram $A\arr\b\arr B$. 
\end{remark}
\begin{remark}\label{primaopoi}
The SYLL diagram $B\arr\b\leftarrow\b\leftarrow A$, which is the reversal of the new syllogistic diagram for \aa{a}{b}, corresponds to the formula $B\multimap A^{\bot\bot}$. This makes sense because $A^{\bot}\multimap B^{\bot}\equiv B\multimap A^{\bot\bot}$. So, the occurrence of $A$ in the part $\b\leftarrow\b\leftarrow A$ can be actually considered as $A$ occurring in complemented mode twice, see example~\ref{pxx}. Similarly, the reversals of the new syllogistic diagrams for \ii{a}{B} and \ii{a}{b} are those for \ii{B}{a} and \ii{b}{a}.
\end{remark}
The next result extends theorem~\ref{perquesta}.
\begin{theorem}\label{pertutto}
A (strengthened) De Morgan syllogism is valid if and only if it is provable in $\rll^{\bot}$.
\end{theorem}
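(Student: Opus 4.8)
The plan is to follow the two-part structure of the proof of Theorem~\ref{perquesta}. For the implication from validity to provability, the idea is to avoid checking the thirty-two entries of table~(\ref{questat}) one by one and instead to observe that they are organised by substitution: each of its four columns is exactly the family of the $2^3$ sequents obtained from one of the four traditional syllogisms $\aa MP,\aa SM\vdash\aa SP$ (Barbara), $\aa MP,\ii SM\vdash\ii SP$ (Darii), $\ii MP,\aa MS\vdash\ii SP$ (Datisi), $\aa MP,\aa SM,\ii SS\vdash\ii SP$ (Barbari) by independently replacing $S$, $M$, $P$ either by themselves or by their complements $S^\bot$, $M^\bot$, $P^\bot$; one checks column by column that the middle-term occurrence conditions and the minor/major positions are respected under these replacements. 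Since $\rll^\bot$-provability is preserved under substitution of arbitrary $\rll^\bot$-formulas for atomic formulas — just apply the substitution to every sequent in a given proof, leaving $\bot$ fixed — and the four base syllogisms are provable in $\rll^\bot$ by Theorem~\ref{perquesta}, every entry of table~(\ref{questat}), the strengthened ones of the fourth column included, corresponds to a provable sequent of $\rll^\bot$, and by the statement preceding the theorem these are all the valid De Morgan syllogisms.

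For the converse I would argue as in the second half of the proof of Theorem~\ref{perquesta}, by a root-first inspection of a cut-free proof of a provable De Morgan syllogism $F_1,F_2\vdash C$ with a case distinction on the shape of $C$. The conclusion $C$ is a (new) categorical formula with subject in $\{S,S^\bot\}$ and predicate in $\{P,P^\bot\}$, giving eight cases; in each, the rules $(\otimes\mathrm{L})$, $(\otimes\mathrm{R})$, $(\multimap\mathrm{L})$, $(\multimap\mathrm{R})$ — the last two invertible — force $F_1$ and $F_2$ into a short list of patterns, namely the patterns of the proof of Theorem~\ref{perquesta} with complements inserted, up to the commutativity $A\otimes B\equiv B\otimes A$ of proposition~\ref{dopodopo}(iv); one then reads off that each such pattern occurs in the corresponding column of table~(\ref{questat}). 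The strengthened case goes exactly as there: if $F_1,F_2,Y\otimes Y\vdash C$ is a provable strengthened De Morgan syllogism then, since $\rll^\bot$ has no weakening rule, $C$ cannot be a linear implication — a cut-free proof of the sequent $F_1,F_2\vdash U\multimap V$ forced by the non-strengthened analysis (with $U\in\{S,S^\bot\}$, $V\in\{P,P^\bot\}$) leaves no hypothesis free to absorb $Y\otimes Y$ — so $C$ is a $\otimes$-formula, the import clause is pinned to $S\otimes S$ or $S^\bot\otimes S^\bot$, and one lands in the fourth column of table~(\ref{questat}).

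The hard part will be the converse, concretely the proliferation of cases in the proof-search analysis. In Theorem~\ref{perquesta} one may use that $S$, $M$, $P$ are atoms and therefore indecomposable in a cut-free proof; here the building blocks $S^\bot=S\multimap\bot$, $M^\bot$, $P^\bot$ are compound and can be opened by $(\multimap\mathrm{L})$ or $(\multimap\mathrm{R})$. What controls this is that opening $Z^\bot$ on the left produces a premise of the form $\vdash Z$, provable only if $Z$ itself is available in that branch since $Z$ is atomic, together with a sequent carrying the atom $\bot$ on the left, while opening $Z^\bot$ on the right produces a sequent carrying $\bot$ on the right; as $\bot$ has no rules of its own, such occurrences are inert and can be discharged only by an axiom $\bot\vdash\bot$, and tracking these constraints shows that each $Z^\bot$ may be treated, for the case analysis, like a fresh atom, so the combinatorics reduces to that already settled for Theorem~\ref{perquesta} replicated over the independent choices of complementing $S$, $M$, $P$. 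One might hope to shortcut the converse by showing that $\rll^\bot$-provability of a De Morgan syllogism is reflected by erasing all superscripts $(\,\cdot\,)^\bot$ on $S$, $M$, $P$, reducing matters to Theorem~\ref{perquesta}; but this erasure is not a substitution, hence not automatically provability-preserving, and justifying it appears to require essentially the same proof-search argument.
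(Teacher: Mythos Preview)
The paper's own proof is a single sentence: ``completely similar to the one for theorem~\ref{perquesta}''. Your proposal is therefore considerably more detailed than what the paper offers, and is in line with that indication, so there is no conflict of approach.

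Your forward direction is a genuine improvement over the brute verification implicit in ``similar to theorem~\ref{perquesta}'': the observation that each column of table~(\ref{questat}) is precisely the orbit of one of Barbara, Darii, Datisi, Barbari under the eight independent substitutions $S\mapsto S$ or $S^{\bot}$, $M\mapsto M$ or $M^{\bot}$, $P\mapsto P$ or $P^{\bot}$ is correct, and since the (Id) rule of $\rll^{\bot}$ is stated for arbitrary formulas, uniform substitution of formulas for atoms does preserve proofs. This reduces thirty-two verifications to four already done in theorem~\ref{perquesta}. The paper does not make this observation.

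For the converse you correctly isolate the one point where ``similar to theorem~\ref{perquesta}'' is not automatic: in the De~Morgan setting the building blocks $S^{\bot},M^{\bot},P^{\bot}$ are the compound formulas $S\multimap\bot$, $M\multimap\bot$, $P\multimap\bot$ and a cut-free proof may open them. Your remedy --- that $\bot$ is a plain atom with no dedicated rule in $\rll^{\bot}$, so any $\bot$ produced on one side must eventually meet a $\bot$ on the other side in an axiom, forcing each opened $Z^{\bot}$ to be re-closed --- is the right idea and suffices; formalising it cleanly (e.g.\ by induction on cut-free derivations, or by replacing every subformula occurrence of $Z^{\bot}$ by a fresh atom $Z'$ and checking that each rule instance survives) is routine. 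The paper gives no hint of this extra step, so your identification of it as the ``hard part'' is well placed.

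One small caution: the phrase ``up to the commutativity $A\otimes B\equiv B\otimes A$'' in your converse sketch should be handled with care, since the paper's table~(\ref{questat}) fixes a specific orientation of subject and predicate in each premise; make sure the final matching with the table does not silently rely on reorderings that the paper's definition of De~Morgan syllogism does not license.
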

\begin{proof}
The proof is completely similar to the one for theorem~\ref{perquesta}.
\end{proof}
The next definition is given with respect to a suitable extension of definition~\ref{wfd} including the new syllogistic diagrams and their reversals among the available well-formed diagrams. The rule $(\ast)_A$ below is written in accordance with notation~\ref{parts}.
\begin{definition}\label{speriamo}
Let $\textup{SYLL}^{+*}$ denote the formal system which is obtained from $\textup{SYLL}^{+}$ by the addition of the rules
\[\begin{tabular}{ccccc}
\AxiomC{$A\arr\b\arr B$}\doubleLine\UnaryInfC{$B\leftarrow\b\leftarrow A$}\DisplayProof
&&\AxiomC{$A\arr\b\arr\b\leftarrow B$}\doubleLine\UnaryInfC{$B\arr \b\leftarrow \b\leftarrow A$}\DisplayProof&
\end{tabular}\]
\[\begin{tabular}{ccccc}
\AxiomC{$A\arr\b\leftarrow\b\arr B$}\doubleLine
\UnaryInfC{$B\leftarrow\b\arr\b\leftarrow A$}\DisplayProof
&&\AxiomC{$A\arr\b\leftarrow\b\arr\b\leftarrow B$}\doubleLine
\UnaryInfC{$B\arr\b\leftarrow\b\arr\b\leftarrow A$}\DisplayProof&
\end{tabular}\]
\vspace{.5cm}
\[\AxiomC{$\cate{D}\b\leftarrow A\arr\b\cate{E}$}
\RightLabel{$(\ast)_A$}
\UnaryInfC{\cate{D}A\cate{E}}
\DisplayProof\]
with suitable extension of the remaining notions introduced in definition~\ref{syllp}.
\end{definition}

\begin{remark}\label{uupp}
The rule $(\ast)_A$ in definition~\ref{speriamo} plays a r\^ole in the proofs 
of the (new) laws of the square of opposition described in proposition~\ref{nsq}, as sequents of $\textup{SYLL}^{+*}$. As an example, here immediately follows the abbreviated form of the proof of the (new) law of contradiction $A^{\bot}\multimap B, A^{\bot}\otimes B^{\bot}\vdash B\otimes B^{\bot}$, leaving the calculation of the remaining proofs to the reader: 
\[\AxiomC{$B\leftarrow\b\leftarrow A\arr\b\leftarrow\b\arr\b\leftarrow B$}
\RightLabel{$(\ast)_A$}
\UnaryInfC{$B\leftarrow A\leftarrow\b\arr\b\leftarrow B$}
\UnaryInfC{$B\leftarrow\b\arr\b\leftarrow B$}
\DisplayProof\]
More in general, the rule $(\ast)_A$ allows to cope with the valid De Morgan syllogisms whose middle term $M$ occurs in complemented in both the premises,
see table~(\ref{questat}), and then in a part like $\b\leftarrow M\arr\b$ in the diagram which is the concatenation of the diagrams for them, see definition~\ref{compl}. Consider $\aa{m}{P},\ii{S}{m}\therefore\ii{S}{P}$, for example. It corresponds to the sequent 
\begin{equation}\label{fi}
M\arr\b\arr P, S\leftarrow\b\arr\b\leftarrow M\models S\leftarrow\b\arr P
\end{equation}
in $\textup{SYLL}^{+*}$, a proof of which is
\[\AxiomC{$S\leftarrow\b\arr\b\leftarrow M$}
\AxiomC{$M\arr\b\arr P$}
\BinaryInfC{$S\leftarrow\b\arr\b\leftarrow M\arr\b\arr P$}
\RightLabel{$(\ast)_M$}
\UnaryInfC{$S\leftarrow\b\arr M\arr P$}
\UnaryInfC{\II{S}{P}}
\DisplayProof\]
or consider the strengthened De Morgan syllogism 
$\aa{m}{p},\aa{s}{m}, \ii{s}{s}\therefore\ii{s}{p}$. A proof of the corresponding sequent $\aa{m}{p},\aa{s}{m}, \ii{s}{s}\models\ii{s}{p}$ in $\textup{SYLL}^{+*}$
is
\[\AxiomC{}
\UnaryInfC{$S\arr\b\leftarrow\b\arr\b\leftarrow S$}
\AxiomC{$S\arr\b\arr\b\leftarrow M$}
\AxiomC{$M\arr\b\arr\b\leftarrow P$}
\BinaryInfC{$S\arr\b\arr\b\leftarrow M\arr\b\arr\b\leftarrow M$}
\BinaryInfC{$S\arr\b\leftarrow\b\arr\b\leftarrow S\arr\b\arr\b\leftarrow M\arr\b\arr\b\leftarrow P$}
\RightLabel{$(\ast)_S$}
\UnaryInfC{$S\arr\b\leftarrow\b\arr S\arr\b\leftarrow M\arr\b\arr\b\leftarrow P$}
\UnaryInfC{$S\arr\b\leftarrow\b\arr\b\leftarrow M\arr\b\arr\b\leftarrow P$}
\RightLabel{$(\ast)_M$}
\UnaryInfC{$S\arr\b\leftarrow\b\arr M\arr\b\leftarrow P$}
\UnaryInfC{$S\arr\b\leftarrow\b\arr\b\leftarrow P$}
\DisplayProof\]
\end{remark}
\begin{remark}\label{eco}
We observe that the rule $(\ast)_A$ involves the deletion of two bullet symbols. Thus, as it stands the rejection criterion described in remark~\ref{doppo} does not apply to the extended diagrammatic calculus in $\textup{SYLL}^{+*}$, since now it may be the case that the number of bullet symbols occurring in a (new) syllogistic diagram for a syllogistic conclusion is strictly less than their total number in the diagrams for the premises. For instance, on the base of such criterion, the syllogism $\aa{m}{P},\aa{S}{m}\models\aa{S}{P}$ would be wrongly rejected, despite the fact that it is provable in $\textup{SYLL}^{+*}$:
\[\AxiomC{$S\arr\b\leftarrow M$}
\AxiomC{$M\arr\b\arr P$}
\BinaryInfC{$S\arr\b\leftarrow M\arr\b\arr P$}
\RightLabel{$(\ast)_M$}
\UnaryInfC{$S\arr M\arr P$}
\UnaryInfC{$S\arr P$}
\DisplayProof\]
So, in $\textup{SYLL}^{+*}$ the rejection criterion described in remark~\ref{doppo} is still usable, provided considering possible deletions of bullet symbols due to the application of instances of the rule $(\ast)_A$.
\end{remark}

\begin{remark}
Despite the connection of the rule $(\ast)_A$ with the deletion of term-variables occurring in diagrams in complemented mode twice, as pointed out in remark~\ref{uupp}, we do not think that it should be related to an involutive notion of negation, which would be taken care by rules such as 
\[\begin{tabular}{ccccc}
&\AxiomC{$\cate{D}A\arr\b\arr\b\cate{E}$}
\UnaryInfC{$\cate{D}A\cate{E}$}
\DisplayProof&&
\AxiomC{$\cate{D}\b\leftarrow\b\leftarrow A\cate{E}$}
\UnaryInfC{$\cate{D}A\cate{E}$}
\DisplayProof&
\end{tabular}\]
The fact that the sequent $M^{\bot}\multimap P,S\otimes M^{\bot}\vdash S\otimes P$ in $\rll^{\bot}$, corresponding to the sequent~(\ref{fi}), is provable without any appeal to such a notion is in support of this.
\end{remark}
In order to extend theorems~\ref{propopropo} and~\ref{propos} to syllogistic with complemented terms, we proceed through the classification of the combinations of diagrammatic premises toward the conclusions of the valid syllogisms in table~(\ref{questat}), in the moods \catebf{AAA}, \catebf{AII}, \catebf{IAI}, \catebf{AAI}, correspondingly dividing the classification into four lemmas.
The pairs of premises listed in each lemma have been further divided accordingly to the occurrence of the middle term-variable $M$ in complemented mode in both or in neither of them, in accordance with the table~(\ref{questat}). 
For instance, this listing of the premises prevents the obtainment of the universal conclusions $S\arr\b\arr P$ and $S\arr\b\arr\b\leftarrow P$ from the pairs $(S\arr M, M\arr\b\arr P)$ and $(S\arr M, M\arr\b\arr\b\leftarrow P)$, respectively. We reject them on the base of the fact that in both, the term-variable $M$ occurs in complemented mode in their second components only. If possible, they would correspond to the (invalid) syllogisms $\aa{m}{P},\aa{S}{M}\therefore\aa{s}{P}$ and $\aa{m}{p},\aa{S}{M}\therefore\aa{s}{p}$ respectivley, and in turn to the (unprovable) sequents $S\multimap M, M^{\bot}\multimap P\vdash S^{\bot}\multimap P$ and $S\multimap M, M^{\bot}\multimap P^{\bot}\vdash S^{\bot}\multimap P^{\bot}$.
More to the point, we observe for example that the pairs of diagrammatic premises listed in the statement of lemma~\ref{aaa} below alternate so that in both their components the term-variable $M$ occurs in complemented mode in the even points (ii), (iv), (vi), (viii) only. A similar listing occurs in each of the remaining lemmas~\ref{aii},~\ref{iai} and~\ref{aai}.

The next lemma classifies the pairs of universal premises leading to a universal conclusion, namely those for the valid De Morgan syllogisms in the mood \catebf{AAA}.
\begin{lemma}\label{aaa}
The pairs of universal premises toward a universal conclusion, are exactly the following.
\begin{itemize}
\item[(i)] $(S\arr M, M\arr P)$
\item[(ii)] $(S\arr\b\leftarrow M,M\arr\b\arr P)$
\item[(iii)] $(S\arr M, M\arr\b\leftarrow P)$
\item[(iv)] $(S\arr\b\leftarrow M, M\arr\b\arr\b\leftarrow P)$
\item[(v)] $(S\arr\b\arr M,M\arr P)$
\item[(vi)] $(S\arr\b\arr\b\leftarrow M,M\arr\b\arr P)$
\item[(vii)] $(S\arr\b\arr M,M\arr\b\leftarrow P)$
\item[(viii)] $(S\arr\b\arr\b\leftarrow M, M\arr\b\arr\b\leftarrow P)$
\end{itemize}
\end{lemma}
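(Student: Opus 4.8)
The plan is to prove the asserted set equality by establishing the two inclusions separately.

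For ``$\supseteq$'' I would exhibit, for each of the eight listed pairs, a short derivation in $\textup{SYLL}^{+*}$ of a universal syllogistic diagram with extremal term-variables $S$ and $P$: concatenate the two premise diagrams along $M$; in the even-numbered cases (ii), (iv), (vi), (viii), where $M$ then lies in a part $\b\leftarrow M\arr\b$, apply the rule $(\ast)_M$, which puts $M$ into a part $\arr M\arr$, whereas in the odd-numbered cases (i), (iii), (v), (vii) it already lies in such a part; one composition step now deletes $M$ and leaves one of the four diagrams $S\arr P$, $S\arr\b\leftarrow P$, $S\arr\b\arr P$, $S\arr\b\arr\b\leftarrow P$. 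The eight derivations so obtained are precisely the diagrammatic forms of the eight valid syllogisms of mood \catebf{AAA} in the first column of table~(\ref{questat}).

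For ``$\subseteq$'' I would proceed by a finite case analysis. Let $(\cate{D}_2,\cate{D}_1)$ be a pair of universal syllogistic diagrams forming the premise-diagrams of a De~Morgan syllogism --- $S,M$ the extremal term-variables of $\cate{D}_2$, $M,P$ those of $\cate{D}_1$ --- whose concatenation reduces in $\textup{SYLL}^{+*}$ to a universal diagram having $S$ as subject and $P$ as predicate. Three successive restrictions cut the candidates down. First: of the seven possible shapes of a universal syllogistic diagram with left endpoint $S$ and right endpoint $M$, only the five beginning $S\arr$ --- namely $S\arr M$, $S\arr\b\leftarrow M$, $S\arr\b\arr M$, $S\arr\b\arr\b\leftarrow M$, $S\arr\b\leftarrow\b\leftarrow M$ --- can occur as $\cate{D}_2$, since the composition rules and $(\ast)_M$ leave the component adjacent to $S$ unchanged (as $M\neq S$) while the conclusion begins $S\arr$; symmetrically the right end of $\cate{D}_1$ is constrained. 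Second: legitimacy of the pair as De~Morgan premises means the two occurrences of $M$ denote the same term, so that $\b\leftarrow M$ is a part of $\cate{D}_2$ exactly when $M\arr\b$ is a part of $\cate{D}_1$ --- these being the ``even'' cases, where $M$ occurs in complemented mode in both premises, and the complementary ``neither'' cases where it occurs so in neither --- and moreover the shape $S\arr\b\leftarrow\b\leftarrow M$, in which $M$ would occur in complemented mode twice, is matched by no universal $\cate{D}_1$ and is hence excluded. Third: $M$ must be deletable and must leave behind an admissible conclusion-diagram, i.e. one with at most two bullet symbols, beginning $S\arr$ and ending in $P$, $\b\leftarrow P$, $\b\arr P$ or $\b\arr\b\leftarrow P$; in the ``neither'' cases this forces $M$ to sit in $\arr M\arr$ already, hence $\cate{D}_2\in\{S\arr M,\,S\arr\b\arr M\}$ and $\cate{D}_1$ to begin $M\arr$, while in the ``even'' cases $(\ast)_M$ is applied and the remaining occurrence of $M$ must sit in $\arr M\arr$ or $\leftarrow M\leftarrow$. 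Running through the finitely many surviving combinations then leaves exactly the eight pairs (i)--(viii).

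The hard part is ``$\subseteq$'', and within it the bookkeeping: one must admit only those pairs of universal diagrams that genuinely are the premise-diagrams of a De~Morgan syllogism --- which is where the constraint that $M$ occur in complemented mode in both premises or in neither (and not only on one side, nor doubly on one side) does its work --- and then verify that among all deletions the rules allow, only the eight listed ones land on one of the four admissible universal conclusion-diagrams, and not on, say, $S\arr\b\leftarrow\b\leftarrow P$ or a three-bullet diagram, which is the conclusion-diagram of no syllogism.
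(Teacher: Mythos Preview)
Your proposal is correct and rests on the same ingredients as the paper's proof: a finite case analysis, the bullet-counting observation of remark~\ref{doppo} (suitably amended for the two-bullet deletion effected by $(\ast)_M$, as in remark~\ref{eco}), and the De~Morgan constraint that the middle term-variable occur in complemented mode in both premises or in neither. The organisation differs: the paper argues the ``$\subseteq$'' direction by fixing each of the four admissible universal conclusions $S\arr P$, $S\arr\b\leftarrow P$, $S\arr\b\arr P$, $S\arr\b\arr\b\leftarrow P$ in turn and, for each, counting the bullets that must appear in a pair of universal premises (with and without one application of $(\ast)_M$) to recover exactly the two listed pairs; you instead fix the premise shapes first, prune by the left-end constraint $S\arr\cdots$, then by the complemented-mode constraint on $M$, and finally by deletability. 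Both routes arrive at the same eight pairs; the paper's conclusion-first organisation makes the bullet arithmetic slightly more transparent, while your premise-first organisation makes the role of the De~Morgan middle-term constraint more explicit.
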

\begin{proof}
It can be verified that each pair of universal premises in the statement leads to a universal conclusion in $\textup{SYLL}^{+*}$, through the employment of the rule $(\ast)_A$ in the cases (ii), (iv), (vi) and (viii). Conversely, we proceed by cases considering for the possible universal conclusions $S\arr P$, $S\arr\b\leftarrow P$, $S\arr\b\arr P$, $S\arr\b\arr\b\leftarrow P$ the ways to obtain each of them from a pair of universal premises with or without the intervention of the rule $(\ast)_A$.
\begin{itemize}
\item[(a)] on the base of remark~\ref{doppo}, the only way to obtain $S\arr P$ with no deletion of bullet symbols and the elimination of $M$ in non-complemented mode in both the premises is by (i), since in this case no bullet symbol is allowed to occur in the premises. The only way to obtain $S\arr P$ through the employment of the rule $(\ast)_A$ is by (ii), since in this case exactly two bullet symbols must occur in a pair of universal premises, one bullet per premise, considering the accordingly oriented arrow symbols in the parts $S\arr$, $\arr P$ of the first, resp. second, premise.
\item[(b)] on the base of remark~\ref{doppo}, the only way to obtain $S\arr\b\leftarrow P$ with no deletion of bullet symbols and the elimination of $M$ in non-complemented mode in both the premises is by (iii). The only way to obtain $S\arr\b\leftarrow P$ through the employment of the rule $(\ast)_A$ is by (iv), since in this case exactly three bullet symbols must occur in a pair of universal premises, two of them in the right-hand side premise, toward a universal conclusion with exactly one bullet symbol occurring, with two arrow symbols converging to it, that is by also considering the oppositely oriented arrow symbols in the parts $S\arr$, $\b\leftarrow P$ of the first, resp. second, premise.
\item[(c)] on the base of remark~\ref{doppo}, the only way to obtain $S\arr\b\arr P$ with no deletion of bullet symbols and the elimination of $M$ in non-complemented mode in both the premises is by (v). The only way to obtain $S\arr\b\arr P$ through the employment of the rule $(\ast)_A$ is by (vi), since in this case exactly three bullet symbols must occur in a pair of universal premises, two of them in the left-hand side premise, toward a universal conclusion with exactly one bullet symbol, together with an arrow symbol converging to it and an arrow symbol diverging from it, that is by also considering the accordingly arrow symbols in the parts $S\arr\b$, $\arr P$ in the first, resp. second, premise.
\item[(d)] on the base of remark~\ref{doppo}, the only way to obtain 
$S\arr\b\arr\b\leftarrow P$ with no deletion of bullet symbols and the elimination of $M$ in non-complemented mode in both the premises is by (vii). The only way to obtain $S\arr\b\arr\b\leftarrow P$ through the employment of the rule $(\ast)_A$ is by (viii), since in this case exactly four bullet symbols must occur in a pair of universal premises, two per premise, toward a universal conclusion with exactly two bullet symbols occurring, separating three arrow symbols as in the wanted conclusion, that is by also considering the parts $S\arr\b\arr$, $\arr\b\arr P$ in the first, resp. second, premise.
\end{itemize}
\end{proof}
The next lemma classifies the pairs of premises with particular first component and universal second component
leading to a particular conclusion, namely those for the valid De Morgan syllogisms in the mood \catebf{AII}.
\begin{lemma}~\label{aii}
The pairs of premises toward a particular conclusion whose first, resp. second, component is particular, resp. universal, are exactly the following:
\begin{itemize}
\item[(i)] $(S\leftarrow\b\arr M,M\arr P)$
\item[(ii)] $(S\leftarrow\b\arr\b\leftarrow M,M\arr\b\arr P)$
\item[(iii)] $(S\leftarrow \b\arr M,M\arr\b\leftarrow P)$
\item[(iv)] $(S\leftarrow\b\arr\b\leftarrow M,M\arr\b\arr\b\leftarrow P)$
\item[(v)] $(S\arr\b\leftarrow\b\arr M,M\arr P)$
\item[(vi)] $(S\arr\b\leftarrow\b\arr\b\leftarrow M,M\arr\b\arr P)$
\item[(vii)] $(S\arr\b\leftarrow\b\arr M, M\arr\b\leftarrow P)$
\item[(viii)] $(S\arr\b\leftarrow\b\arr\b\leftarrow M, M\arr\b\arr\b\leftarrow P)$
\end{itemize}
\end{lemma}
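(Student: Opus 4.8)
The plan is to follow the pattern of the proof of Lemma~\ref{aaa}, transposing the bullet-symbol bookkeeping of remark~\ref{doppo} and remark~\ref{eco} to the present configuration: the first premise is one of the four particular syllogistic diagrams $\ii SM = S\leftarrow\b\arr M$, $\ii Sm = S\leftarrow\b\arr\b\leftarrow M$, $\ii sM = S\arr\b\leftarrow\b\arr M$, $\ii sm = S\arr\b\leftarrow\b\arr\b\leftarrow M$ (carrying $1,2,2,3$ bullet symbols, respectively), the second premise is one of the four universal ones $\aa MP = M\arr P$, $\aa Mp = M\arr\b\leftarrow P$, $\aa mP = M\arr\b\arr P$, $\aa mp = M\arr\b\arr\b\leftarrow P$ (carrying $0,1,1,2$ bullet symbols, respectively), and the conclusion is one of the four particular diagrams $\ii SP = S\leftarrow\b\arr P$, $\ii Sp = S\leftarrow\b\arr\b\leftarrow P$, $\ii sP = S\arr\b\leftarrow\b\arr P$, $\ii sp = S\arr\b\leftarrow\b\arr\b\leftarrow P$ (carrying $1,2,2,3$ bullet symbols, respectively).

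For the direct half I would exhibit, for each of the eight listed pairs, a derivation in $\textup{SYLL}^{+*}$ of the appropriate particular conclusion. In the odd-numbered cases (i), (iii), (v), (vii) the middle term-variable $M$ occurs at the joint in non-complemented mode, so the concatenation of the two premises composes through $M$ in a single step; in the even-numbered cases (ii), (iv), (vi), (viii) the concatenation contains the part $\b\leftarrow M\arr\b$, so one first applies the rule $(\ast)_M$ and then composes through $M$. One checks that (i) and (ii) both yield $\ii SP$, that (iii) and (iv) both yield $\ii Sp$, that (v) and (vi) both yield $\ii sP$, and that (vii) and (viii) both yield $\ii sp$.

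For the converse half I would argue by cases on the four possible particular conclusions, and for each of them distinguish the derivations that do not invoke $(\ast)_M$ --- in which, by remark~\ref{doppo}, no bullet symbol is deleted, $M$ is eliminated by composition in non-complemented mode, and the two premises together therefore carry exactly as many bullet symbols as the conclusion --- from those that do invoke $(\ast)_M$ --- in which, by remark~\ref{eco}, exactly two bullet symbols are deleted, so that the premises carry two bullet symbols more than the conclusion. Intersecting these numerical constraints with the admissible bullet counts of a particular first premise and a universal second premise cuts the list of candidate premise shapes down to a handful in each case; two further checks then dispose of the spurious ones, namely that the arrow symbols immediately flanking $M$ (after the possible application of $(\ast)_M$) must agree in orientation for $M$ to be eliminable at all, and that the diagram obtained once $M$ is eliminated must itself be one of the four particular syllogistic diagrams above. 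One concludes that, for each target conclusion, exactly one candidate pair survives without $(\ast)_M$ and exactly one survives with $(\ast)_M$, and that these are precisely the pairs (i)--(viii).

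The derivations of the direct half are routine. The part that requires care --- and the main, though entirely elementary, obstacle --- is the converse bookkeeping, specifically making explicit that it is the orientation-agreement and conclusion-shape constraints, not bullet counting alone, that exclude pairs with the correct bullet totals: for instance $(S\leftarrow\b\arr M, M\arr\b\arr P)$, whose composition through $M$ would produce the illegitimate $S\leftarrow\b\arr\b\arr P$, or $(S\leftarrow\b\arr\b\leftarrow M, M\arr\b\leftarrow P)$, where after $(\ast)_M$ the arrows surrounding $M$ point in opposite directions so that $M$ cannot be eliminated. This is the exact analogue of the sub-case distinction (a)--(d) in the proof of Lemma~\ref{aaa}.
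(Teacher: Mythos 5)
Your proposal is correct and follows exactly the route the paper intends: the paper's own proof of this lemma is literally ``Similar to the proof of lemma~\ref{aaa}'', and your case analysis on the four particular conclusions, split according to whether $(\ast)_M$ is used, with the bullet-counting of remark~\ref{doppo} (adjusted by remark~\ref{eco}) plus the orientation and conclusion-shape checks, is precisely the adaptation of the (a)--(d) argument of lemma~\ref{aaa} to the \catebf{AII} configuration. The direct-half derivations and the sample exclusions you give (e.g.\ $S\leftarrow\b\arr\b\arr P$ not being a syllogistic diagram, and non-composable arrows around $M$ after $(\ast)_M$) are exactly the right checks.
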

\begin{proof}
Similar to the proof of lemma~\ref{aaa}.
\end{proof}
The next lemma classifies the pairs of premises with universal first component and particular second component toward a particular conclusion, namely those for the valid De Morgan syllogisms in the mood \catebf{IAI}.
\begin{lemma}\label{iai}
The pairs of premises toward a particular conclusion whose first, resp. second, component is universal, resp. particular, are exactly the following:
\begin{itemize}
\item[(i)] $(S\leftarrow M, M\leftarrow\b\arr P)$
\item[(ii)] $(S\leftarrow\b\leftarrow M, M\arr\b\leftarrow\b\arr P)$
\item[(iii)] $(S\arr\b\leftarrow M,M\leftarrow\b\arr P)$
\item[(iv)] $(S\arr\b\leftarrow\b\leftarrow M, M\arr\b\leftarrow\b\arr P)$
\item[(v)] $(S\leftarrow M, M\leftarrow\b\arr\b\leftarrow P)$
\item[(vi)] $(S\leftarrow\b\leftarrow M,M\arr\b\leftarrow\b\arr\b\leftarrow P)$
\item[(vii)] $(S\arr\b\leftarrow M,M\leftarrow\b\arr\b\leftarrow P)$
\item[(viii)] $(S\arr\b\leftarrow\b\leftarrow M,M\arr\b\leftarrow\b\arr\b\leftarrow P)$
\end{itemize}
\end{lemma}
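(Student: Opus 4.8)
The plan is to follow the pattern of the proof of Lemma~\ref{aaa}. For the ``if'' direction I would check, one pair at a time, that each of (i)--(viii) leads in $\textup{SYLL}^{+*}$ to a particular syllogistic diagram: concatenate the two diagrams along the shared extremal term-variable $M$; in the even-numbered cases (ii), (iv), (vi), (viii) the middle term $M$ occurs in complemented mode in both components, so the concatenation contains the part $\b\leftarrow M\arr\b$ and the rule $(\ast)_M$ applies; then eliminate $M$ by composition. The outcomes are, respectively, $\ii SP$, $\ii SP$, $\ii sP$, $\ii sP$, $\ii Sp$, $\ii Sp$, $\ii sp$, $\ii sp$, i.e.\ one of the four particular diagrams $S\leftarrow\b\arr P$, $S\leftarrow\b\arr\b\leftarrow P$, $S\arr\b\leftarrow\b\arr P$, $S\arr\b\leftarrow\b\arr\b\leftarrow P$. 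For instance $(S\leftarrow\b\leftarrow M,M\arr\b\leftarrow\b\arr P)$ concatenates to $S\leftarrow\b\leftarrow M\arr\b\leftarrow\b\arr P$, then $(\ast)_M$ gives $S\leftarrow M\leftarrow\b\arr P$, and composing $M$ away gives $\ii SP$.

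For the converse I would argue by cases on the particular conclusion $C$, which must be one of those four diagrams. Fix $C$. The first component of the pair is a universal (new) syllogistic diagram or its reversal, hence carries $0$, $1$ or $2$ bullets, while the second is particular, hence carries $1$, $2$ or $3$ bullets. Composition leaves the number of bullets unchanged, whereas one application of $(\ast)_M$ removes exactly two of them, by Remarks~\ref{doppo} and~\ref{eco}; and any derivation of $C$ from the pair eliminates $M$ either in non-complemented mode, through a part $\arr M\arr$ or $\leftarrow M\leftarrow$ with no use of $(\ast)_M$, or in complemented mode in both premises, through the part $\b\leftarrow M\arr\b$ with exactly one use of $(\ast)_M$. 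In the first situation the bullet count forces the universal premise to be bullet-free and the particular premise to carry exactly the bullets of $C$, and tracking the orientations of the arrows adjacent to $S$, to $M$ and to $P$ pins down a unique pair, namely the odd-numbered one attached to $C$. In the second situation the count forces the two premises to carry, together, the bullets of $C$ plus two, and the requirement that after $(\ast)_M$ the two arrows left next to $M$ be co-oriented (so that $M$ can then be composed away) again leaves exactly one pair, the even-numbered one. Since the assignments are $\ii SP\mapsto$(i),(ii); $\ii sP\mapsto$(iii),(iv); $\ii Sp\mapsto$(v),(vi); $\ii sp\mapsto$(vii),(viii), collecting over the four choices of $C$ yields precisely the eight pairs of the statement, and no other pair of a universal and a particular diagram produces a particular conclusion.

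The delicate point is the orientation bookkeeping in the second situation, where the premises now have three or more bullets. Unlike in Lemma~\ref{aaa}, the particular premise contributes a diverging bullet, i.e.\ a part $\b\arr$ or $\leftarrow\b$, so that several candidates survive the bare bullet count; the constraint that actually discards the spurious ones is that $(\ast)_M$ must leave $M$ flanked by two like-oriented arrows, for otherwise the composition rule cannot fire and the derivation produces no conclusion. For example, among the pairs with $M$ complemented in both components and whose concatenation has three bullets, $(S\arr\b\leftarrow M,M\arr\b\leftarrow\b\arr P)$ stalls after $(\ast)_M$ at $S\arr M\leftarrow\b\arr P$, whereas $(S\leftarrow\b\leftarrow M,M\arr\b\leftarrow\b\arr P)$ does not; only the latter is kept, as case (ii). This is the same phenomenon that, for universal conclusions, rules out pairs such as $(S\arr M,M\arr\b\arr P)$, as noted in the discussion preceding Lemma~\ref{aaa}; once it is handled, the four cases for $C$ go through in complete analogy with that lemma, and the remaining verifications are routine.
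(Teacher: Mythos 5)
Your forward direction (each of (i)--(viii) concatenates, possibly after one application of $(\ast)_M$, and composes to $\ii SP$, $\ii sP$, $\ii Sp$, $\ii sp$ as you list) is correct, and your analysis of the $(\ast)_M$ cases, including the stalling of $(S\arr\b\leftarrow M,M\arr\b\leftarrow\b\arr P)$ at $S\arr M\leftarrow\b\arr P$, is sound. The converse, however, contains two concrete errors. First, the claim that in the no-$(\ast)_M$ situation ``the bullet count forces the universal premise to be bullet-free'' is false: the pairs (iii) and (vii), which you must recover in exactly that situation, have the one-bullet universal component $S\arr\b\leftarrow M$ (the reversal of the diagram for $\aa Ms$), the particular component carrying only one, resp.\ two, of the two, resp.\ three, bullets of the conclusion; so your own bookkeeping would exclude precisely the pairs you need. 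Second, your closing claim that ``no other pair of a universal and a particular diagram produces a particular conclusion,'' with spurious candidates discarded because $(\ast)_M$ leaves $M$ between oppositely oriented arrows, is also false: the pair $(S\arr M,\,M\arr\b\leftarrow\b\arr P)$, in which $M$ occurs in complemented mode in the second component only, needs no $(\ast)_M$ at all and composes directly to $S\arr\b\leftarrow\b\arr P$, that is to $\ii sP$; likewise $(S\arr M,\,M\arr\b\leftarrow\b\arr\b\leftarrow P)$ yields $\ii sp$. No derivational obstruction removes these pairs; they fall outside the lemma only because the classification is restricted, as announced in the discussion preceding lemma~\ref{aaa} and built into the wording of its proof (``elimination of $M$ in non-complemented mode in both the premises''), to pairs in which the middle term occurs in complemented mode in both components or in neither, in accordance with table~(\ref{questat}). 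Your remark that this is ``the same phenomenon'' that rules out $(S\arr M,M\arr\b\arr P)$ conflates that convention with the orientation argument: that pair, too, is not blocked by any failure of composition, but set aside by the same stipulation.

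So the converse needs to be redone: either repeat the case analysis of lemma~\ref{aaa} with the correct bullet counts (the universal component may carry a bullet in the odd-numbered cases) and with the restriction to pairs that are not ``mixed'' in the above sense stated explicitly as a hypothesis, or follow the paper, whose proof of this lemma is simply the observation that the present list is obtained from that of lemma~\ref{aii} by exchanging the premises and passing to reversals, so that the whole classification is inherited from lemma~\ref{aii} rather than recomputed.
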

\begin{proof}
Similar to a proof of lemma~\ref{aii}, since the previous list has been obtained from the list in lemma~\ref{aii} by exchanging the premises and passing to a reversal.
\end{proof}
The next lemma classifies the $3$-tuples 
of premises made of an existential import and two universal components
that lead to a particular conclusion, namely those for the valid strengthened De Morgan syllogisms in the mood \catebf{AAII}.
\begin{lemma}~\label{aai}
The $3$-tuples of premises made of an existential import and two universal components that lead 
toward a particular conclusion, are exactly the following:
\begin{itemize}
\item[(i)] $(S\leftarrow\b\arr S,S\arr M, M\arr P)$
\item[(ii)] $(S\leftarrow\b\arr S, S\arr\b\leftarrow M,M\arr\b\arr P)$
\item[(iii)] $(S\leftarrow\b\arr S, S\arr M, M\arr\b\leftarrow P)$
\item[(iv)] $(S\leftarrow\b\arr S, S\arr\b\leftarrow M, M\arr\b\arr\b\leftarrow P)$
\item[(v)] $(S\arr\b\leftarrow\b\arr\b\leftarrow S, S\arr\b\arr M, M\arr P)$
\item[(vi)] $(S\arr\b\leftarrow\b\arr\b\leftarrow S, S\arr\b\arr\b\leftarrow M, M\arr\b\arr P)$
\item[(vii)] $(S\arr\b\leftarrow\b\arr\b\leftarrow S, S\arr\b\arr M, M\arr\b\leftarrow P)$
\item[(viii)] $(S\arr\b\leftarrow\b\arr\b\leftarrow S, S\arr\b\arr\b\leftarrow M, M\arr\b\arr\b\leftarrow P)$
\end{itemize}
\end{lemma}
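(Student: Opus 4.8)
The plan is to mirror the proof of lemma~\ref{aaa}, exploiting the structural observation that the eight $3$-tuples in the statement are precisely an existential-import diagram for the minor term-variable of the conclusion prefixed to the eight pairs of universal premises classified in lemma~\ref{aaa}: the $3$-tuple at point $(k)$ consists of the diagram for \ii{S}{S} (in the points (i)--(iv)) or for \ii{s}{s} (in the points (v)--(viii)) together with the pair of universal premises occurring at the corresponding point $(k)$ of lemma~\ref{aaa}. The import \ii{s}{s} is the diagram $S\arr\b\leftarrow\b\arr\b\leftarrow S$, and the two ranges (i)--(iv) versus (v)--(viii) of lemma~\ref{aaa} are exactly those whose universal $S$--$P$ outcome has $S$, respectively $s$, as subject.

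First I would establish the forward direction, that each listed $3$-tuple leads to a particular conclusion in $\textup{SYLL}^{+*}$. One concatenates the two universal components on the shared term-variable $M$ and reduces exactly as in lemma~\ref{aaa} --- invoking the rule $(\ast)_M$ in the even points (ii), (iv), (vi), (viii) --- to obtain the universal diagram \cate{D} on $S$ and $P$ predicted there, whose subject is $S$ in the points (i)--(iv) and $s$ in the points (v)--(viii). One then concatenates \cate{D} with the existential-import diagram on the duplicated occurrence of $S$ and composes, applying $(\ast)_S$ in the points (v)--(viii) where the import is $S\arr\b\leftarrow\b\arr\b\leftarrow S$, to eliminate that occurrence; the result is the required \ii{S}{P}, \ii{S}{p}, \ii{s}{P} or \ii{s}{p}. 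The worked proofs displayed in remark~\ref{uupp} already cover representative cases of both shapes, and the remaining ones follow by the same routine concatenation-and-composition manipulation.

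For the converse I would argue that, within the shape of a De Morgan syllogism, no $3$-tuple made of an existential-import diagram and two universal diagrams other than the eight listed can prove a particular $S$--$P$ conclusion in $\textup{SYLL}^{+*}$. An existential-import diagram is a loop \ii{X}{X} on a single term-variable, so it supplies no connection on the $S$--$P$ axis; hence the two universal components alone must already yield, after elimination of $M$, a universal diagram on $S$ and $P$, and by lemma~\ref{aaa} this forces that pair to be one of the eight there. It then remains to pin down the admissible import: since $M$ must disappear and cannot recur in the conclusion the import cannot be on $M$, and since the conclusion is built on $S$ and $P$ with $S$ or $s$ as subject, the only import absorbable by composition --- taking into account arrow orientations and, as in remark~\ref{eco}, the two bullet symbols removed by each application of $(\ast)_A$ --- is \ii{S}{S} when the universal core has subject $S$ and \ii{s}{s} when it has subject $s$. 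Matching subjects across the pairs (i)--(iv) and (v)--(viii) of lemma~\ref{aaa} then yields exactly the eight $3$-tuples listed. I expect this converse bookkeeping to be the main obstacle: one must check that no alternative order of concatenation and composition, and no alternative placement of the import relative to $M$ or to the major term-variable, produces a particular $S$--$P$ conclusion from a $3$-tuple outside the list, and that the bullet-counting criterion of remark~\ref{eco} is applied consistently once the two-bullet deletions of the rule $(\ast)_A$ are in play.
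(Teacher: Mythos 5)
Your proposal is correct and takes essentially the same route as the paper, whose entire proof of this lemma is a pointer back to the argument of lemma~\ref{aaa}: you mirror that argument, handling the forward direction by concatenation, composition and $(\ast)$-steps as in remark~\ref{uupp}, and the converse by the same bullet-and-orientation bookkeeping (adjusted for the two-bullet deletions as in remark~\ref{eco}). Your explicit observation that each listed $3$-tuple is just the import \ii{S}{S} or \ii{s}{s} prefixed to the corresponding pair of lemma~\ref{aaa}, with the import then absorbed on the subject end, is a clarifying but faithful reorganization of that same case analysis.
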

\begin{proof}
See the proof of lemma~\ref{aaa}.
\end{proof}
The next theorem extends theorems~\ref{propopropo} and~\ref{propos}. We provide a full proof for the mood \catebf{AAA} only. The remaining cases are left to be proved to the reader in a completely analogous way. The proof is purely syntactical and based on the lemma~\ref{aaa}. On one hand we proceed top-down constructing a scheme of formal proof for any syllogism in the mood \catebf{AAA}, from the syllogistic diagrams for its premises. On the other hand we proceed bottom-up by cases, showing that the provable syllogisms in the mood \catebf{AAA} are among those of table~(\ref{questat}).
\begin{theorem}~\label{uuu}
A (strengthened) De Morgan syllogism is valid if and only if it is provable in $\textup{SYLL}^{+*}$.
\end{theorem}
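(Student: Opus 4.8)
The plan is to reduce everything to the classification lemmas~\ref{aaa},~\ref{aii},~\ref{iai} and~\ref{aai}, proving the two implications mood by mood and giving the argument in full only for the mood \catebf{AAA}; the moods \catebf{AII}, \catebf{IAI} and the strengthened mood \catebf{AAII} are handled in exactly the same fashion, with lemmas~\ref{aii},~\ref{iai} and~\ref{aai} in place of lemma~\ref{aaa}. Since table~(\ref{questat}) lists precisely the valid (strengthened) De Morgan syllogisms, it is enough to show that a De Morgan syllogism of mood \catebf{AAA} occurs in table~(\ref{questat}) if and only if its associated sequent of well-formed diagrams is provable in $\textup{SYLL}^{+*}$. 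In fact each of the four lemmas is already a biconditional classification of the admissible pairs (resp. triples) of diagrammatic premises, so the theorem is essentially a packaging of those lemmas together with the observation that the listed pairs correspond exactly to the table entries; I would also note at the outset that the extension of definition~\ref{syllp} tacitly invoked in definition~\ref{speriamo} makes the new syllogistic diagrams and their reversals available as well-formed diagrams, so that a ``syllogism in $\textup{SYLL}^{+*}$'' is a sequent $\cate{D}_1,\cate{D}_2\models\cate{C}$ of the expected shape, and similarly for strengthened syllogisms.

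For the ``only if'' (top-down) direction I would take a mood-\catebf{AAA} syllogism occurring in table~(\ref{questat}) and observe that its premises translate into one of the eight pairs of diagrammatic premises listed in lemma~\ref{aaa}. In each case I would concatenate the two diagrams on the shared middle term-variable $M$, obtaining a single well-formed diagram; in the four cases (ii), (iv), (vi), (viii), where $M$ occurs in complemented mode in both premises and hence in a part $\b\leftarrow M\arr\b$ of the concatenation, I would apply the rule $(\ast)_M$ to replace that part by $M$; and in every case I would finish by composing away $M$ with the last pair of rules of definition~\ref{rules}, possibly after a reversal, reaching the diagram for the conclusion. These are precisely the derivations exhibited in the proof of lemma~\ref{aaa}, so this direction for \catebf{AAA} is immediate from that lemma; for the strengthened mood \catebf{AAII} one additionally uses $(\ast)_S$ to eliminate the term-variable carrying the existential import, exactly as in the proofs displayed in remark~\ref{uupp}.

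For the ``if'' (bottom-up) direction I would assume $\cate{D}_1,\cate{D}_2\models\cate{C}$ provable in $\textup{SYLL}^{+*}$ with mood \catebf{AAA}, so that $\cate{C}$ is one of the universal conclusion diagrams $S\arr P$, $S\arr\b\leftarrow P$, $S\arr\b\arr P$, $S\arr\b\arr\b\leftarrow P$, and argue by cases on $\cate{C}$ as in the proof of lemma~\ref{aaa}: the only inference steps of $\textup{SYLL}^{+*}$ that can delete the middle term-variable from the concatenation $\cate{D}_1 M\cate{D}_2$ are the two composition rules and the rule $(\ast)_M$, and --- keeping track of the orientation of the arrow symbols abutting $M$, $S$ and $P$, together with the bullet-count bookkeeping of remark~\ref{doppo} as corrected in remark~\ref{eco} for the two bullets erased by an application of $(\ast)$ --- the pair $(\cate{D}_1,\cate{D}_2)$ is forced to be one of the eight pairs of lemma~\ref{aaa}. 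Each of those pairs is the diagrammatic rendering of a syllogism occurring in table~(\ref{questat}), hence of a valid one; the strengthened case is identical, using lemma~\ref{aai} and allowing a further application of $(\ast)_S$. The main obstacle is ensuring that this case analysis is genuinely exhaustive: one must rule out every combination of premise diagrams in which $M$ occurs in complemented mode in only one of the premises --- so that invalid schemes such as $\aa{m}{P},\aa{S}{M}\therefore\aa{s}{P}$, and the corresponding unprovable sequents, are correctly excluded --- and this is exactly where the constraint that $(\ast)_A$ fires only on a part $\b\leftarrow A\arr\b$, combined with the refined bullet-count criterion of remark~\ref{eco}, does the work.
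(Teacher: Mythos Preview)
Your proposal is correct and follows essentially the same approach as the paper: both directions are handled mood by mood with full detail only for \catebf{AAA}, reducing the ``only if'' direction to the eight premise pairs of lemma~\ref{aaa} (with $(\ast)_M$ in the even cases) and the ``if'' direction to a case analysis on the four possible universal conclusion diagrams, exactly as in that lemma's proof. The paper packages the top-down direction slightly more schematically, writing the premise pairs as $(S\cate{A}\arr M,M\arr\cate{B}P)$ or $(S\cate{A}\arr\b\leftarrow M,M\arr\b\arr\cate{B}P)$ and displaying two generic proof trees, but the content is the same as yours.
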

\begin{proof}
On the base of lemma~\ref{aaa}, 
the syllogistic diagrams for the universal premises of a valid syllogism in the mood \catebf{AAA} from table~(\ref{questat}), 
form pairs $(S\cate{A}\arr M,M\arr\cate{B}P)$ or
$(S\cate{A}\arr\b\leftarrow M,M\arr\b\arr\cate{B}P)$. Lemma~\ref{aaa} ensures that the roots of the formal proofs
\[\AxiomC{$S\cate{A}\arr M$}
\AxiomC{$M\arr\cate{B}P$}
\BinaryInfC{$S\cate{A}\arr M\arr\cate{B}P$}
\UnaryInfC{$S\cate{A}\arr\cate{B}P$}
\DisplayProof
\qquad
\AxiomC{$S\cate{A}\arr\b\leftarrow M$}
\AxiomC{$M\arr\b\arr\cate{B}P$}
\BinaryInfC{$S\cate{A}\arr\b\leftarrow M\arr\b\arr\cate{B}P$}
\RightLabel{$(\ast)_M$}
\UnaryInfC{$S\cate{A}\arr M\arr\cate{B}P$}
\UnaryInfC{$S\cate{A}\arr\cate{B}P$}
\DisplayProof\]
are the syllogistic diagrams for the universal 
conclusion of any syllogism in the mood \catebf{AAA} from table~(\ref{questat}).\\

By (i) and (ii) of lemma~\ref{aaa}, the only ways to obtain $S\arr P$ as a conclusion of a formal proof are 
\[\AxiomC{$S\arr M\arr P$}
\UnaryInfC{$S\arr P$}
\DisplayProof\qquad
\AxiomC{$S\arr\b\leftarrow M\arr\b\arr P$}
\RightLabel{$(\ast)_M$}
\UnaryInfC{$S\arr M\arr P$}
\UnaryInfC{$S\arr P$}
\DisplayProof\]
that amount to the valid syllogisms $\aa{M}{P},\aa{S}{M}\therefore\aa{S}{P}$ and $\aa{m}{P},\aa{S}{m}\therefore\aa{S}{P}$, respectively.\\
By (iii) and (iv) of lemma~\ref{aaa}, the only ways to obtain $S\arr\b\leftarrow P$ as a conclusion of a formal proof are 
\[\AxiomC{$S\arr M\arr\b\leftarrow P$}
\UnaryInfC{$S\arr\b\leftarrow P$}
\DisplayProof\qquad
\AxiomC{$S\arr\b\leftarrow M\arr\b\arr\b\leftarrow P$}
\RightLabel{$(\ast)_M$}
\UnaryInfC{$S\arr M\arr \b\leftarrow P$}
\UnaryInfC{$S\arr \b\leftarrow P$}
\DisplayProof\]
that amount to the valid syllogisms $\aa{M}{p}, \aa{S}{M}\therefore\aa{S}{p}$ and $\aa{m}{p},\aa{S}{m}\therefore\aa{S}{p}$, respectively.\\
By (v) and (vi) of lemma~\ref{aaa}, the only ways to obtain $S\arr\b\arr P$ as a conclusion of a formal proof are
\[\AxiomC{$S\arr\b\arr M\arr P$}
\UnaryInfC{$S\arr\b\arr P$}
\DisplayProof
\qquad
\AxiomC{$S\arr\b\arr\b\leftarrow M\arr\b\arr P$}
\RightLabel{$(\ast)_M$}
\UnaryInfC{$S\arr\b\arr M\arr P$}
\UnaryInfC{$S\arr\b\arr P$}
\DisplayProof\]
that amount to the valid syllogisms $\aa{M}{P},\aa{s}{M}\therefore\aa{s}{P}$ and $\aa{m}{P}, \aa{s}{m}\therefore\aa{s}{P}$, respectively.\\ 
By (vii) and (viii) of lemma~\ref{aaa}, the only ways to obtain $S\arr\b\arr\b\leftarrow P$ as a conclusion ofa formal proof are
\[\AxiomC{$S\arr\b\arr M\arr\b\leftarrow P$}
\UnaryInfC{$S\arr\b\arr\b\leftarrow P$}
\DisplayProof\qquad
\AxiomC{$S\arr\b\arr\b\leftarrow M\arr\b\arr\b\leftarrow P$}
\RightLabel{$(\ast)_M$}
\UnaryInfC{$S\arr\b\arr M\arr\b\leftarrow P$}
\UnaryInfC{$S\arr\b\arr\b\leftarrow P$}
\DisplayProof\]
tha amount to valid syllogism $\aa{M}{p},\aa{s}{M}\therefore\aa{s}{p}$ and $\aa{m}{p},\aa{s}{m}\therefore\aa{s}{p}$, respectively. 
\end{proof}
The next theorem extends theorem~\ref{perquesta}.
\begin{theorem}\label{dsd}
A (strengthened) De morgan syllogism is provable in $\rll^{\bot}$ if and only if it is provable in $\textup{SYLL}^{+*}$.
\end{theorem}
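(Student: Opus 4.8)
The plan is to obtain Theorem~\ref{dsd} as an immediate corollary of the two characterizations of validity already established for De Morgan syllogisms, exactly as Theorem~\ref{ddd} was derived in the traditional case from Theorems~\ref{perquesta},~\ref{propopropo} and~\ref{propos}. Concretely, Theorem~\ref{pertutto} states that a (strengthened) De Morgan syllogism is valid if and only if it is provable in $\rll^{\bot}$, while Theorem~\ref{uuu} states that it is valid if and only if it is provable in $\textup{SYLL}^{+*}$. Chaining these two biconditionals through the common middle condition ``valid'' yields precisely the desired equivalence between provability in $\rll^{\bot}$ and provability in $\textup{SYLL}^{+*}$.

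So the first direction would run: if the syllogism is provable in $\rll^{\bot}$, then it is valid by Theorem~\ref{pertutto}, hence provable in $\textup{SYLL}^{+*}$ by Theorem~\ref{uuu}. For the converse one runs the same two theorems backwards: provability in $\textup{SYLL}^{+*}$ gives validity by Theorem~\ref{uuu}, and validity gives provability in $\rll^{\bot}$ by Theorem~\ref{pertutto}. No new combinatorial analysis of the diagrams or of the sequent calculus is needed, since all of the work has already been absorbed into the case analyses underlying lemmas~\ref{aaa}--\ref{aai} and the proof of Theorem~\ref{uuu}, and into the case analysis (modelled on that of Theorem~\ref{perquesta}) behind Theorem~\ref{pertutto}.

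Since the argument is a pure transitivity of biconditionals there is essentially no obstacle; the only point that deserves a moment's attention is that ``(strengthened) De Morgan syllogism'' denotes the same class of objects on both sides --- that is, that the sequent of $\rll^{\bot}$ associated with a given De Morgan syllogism in Theorem~\ref{pertutto} and the $\textup{SYLL}^{+*}$ diagram associated with it in Theorem~\ref{uuu} are the ones fixed by the notational conventions for the new categorical formulas and the new syllogistic diagrams. Both identifications are built into those conventions, so the match is automatic, and the proof reduces to the single remark that the result follows from Theorem~\ref{pertutto} and Theorem~\ref{uuu}.
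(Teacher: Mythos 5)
Your proposal is correct and matches the paper's own argument: Theorem~\ref{dsd} is obtained by chaining Theorem~\ref{pertutto} and Theorem~\ref{uuu} through the common notion of validity, exactly as the paper does. No further work is needed.
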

\begin{proof}
The result follows from the theorems~\ref{pertutto} and~\ref{uuu}.
\end{proof}

\section{Syllogisms and proof-nets}\label{comm}
Proof nets were introduced in~\cite{MR899269}. They are a graphical device that captures the essential geometric content of the proofs in the linear sequent calculus independently from their construction. The proofs of the same sequent that differ by the order of application of the rules of inference have the same proof-net. More to the point, proof-nets were introduced for the multiplicative fragment of classical linear logic, which we will describe below. Their introduction for the multiplicative intuitionistic fragment RLL, see definition~\ref{infrule}, is possible provided a suitable translation of this fragment into the classical one. Moreover, they need dedicated correcteness criteria. See~\cite{MR1714792, DanosRegnier, lamarche:inria-00347336}.

Focusing on the linear intuitionistic sequent calculus of syllogisms built on the (new) categorical formulas introduced in sections~\ref{syllrll} and~\ref{demorgan}, the aim of this section is that of arguing in favour of the employment of the diagrammatic sequent calculus of syllogisms developed in sections~\ref{sec4} and~\ref{demorgan} as a more direct diagrammatic proof method for the syllogistics in intuitionistic linear logic.

We took the syntax of classical linear logic and proof-nets from~\cite{MR899269, MR1356006, MR1003608}. 
\begin{definition}\label{cmll}
The system CMLL, for Classical Multiplicative Linear Logic, has 
formulas inductively constructed from atomic formulas $\alpha, \beta,\ldots,\alpha^{\bot},\beta^{\bot},\ldots$ in accordance with the grammar
$A\mathrel{\mathop :}=\bot\,|\,\catebf{1}\,|\,\alpha\,|\,\alpha^{\bot}\,|\,A\otimes A\,|\,A\parr A$, in which $(~)^{\bot}$ is \emph{linear negation}, $\otimes$ is \emph{multiplicative conjunction} and $\parr$ is \emph{multiplicative disjunction}. For every atomic formula $\alpha$, the atomic formula $\alpha^{\bot}$ is its linear negation and viceversa. In particular, $\bot^{\bot}=\catebf{1}$, $\catebf{1}^{\bot}=\bot$ and moreover, linear negation extends to composed formulas in accordance with the following De Morgan's laws:
\[(A\otimes B)^{\bot} \doteq A^{\bot}\parr B^{\bot}\qquad (A\parr B)^{\bot} \doteq A^{\bot}\otimes B^{\bot}\]
A sequent of CMLL is a formal expression $\Arr\Sigma$ where $\Sigma$ is a finite multiset of formulas.
The rules of CMLL are
\[\AxiomC{}
\RightLabel{(identity)}
\UnaryInfC{$\Arr A^{\bot},A$}
\DisplayProof\]
\vspace{.3cm}
\[\AxiomC{}
\RightLabel{(one)}
\UnaryInfC{$\Arr\catebf{1}$}
\DisplayProof
\hspace{3cm}
\AxiomC{$\Arr\Gamma$}
\RightLabel{(false)}
\UnaryInfC{$\Arr\Gamma,\bot$}
\DisplayProof\]
\vspace{.3cm}
\[\AxiomC{$\Arr\Gamma,A$}
\AxiomC{$\Arr\Delta, B$}
\RightLabel{(times)}
\BinaryInfC{$\Arr\Gamma,\Delta, A\otimes B$}
\DisplayProof
\hspace{3cm}
\AxiomC{$\Arr\Gamma,A,B$}
\RightLabel{(parr)}
\UnaryInfC{$\Arr\Gamma, A\parr B$}
\DisplayProof\]
A \emph{proof} in CMLL of a sequent $\Arr\Sigma$ is a finite tree whose vertices are sequents in CMLL, such that leaves are instances of the rule (identity), the root is $\Arr\Sigma$, and each branching is an instance of a rule of inference. A sequent of CMLL is \emph{provable} if there is a proof for it. Two formulas $A,B$ are said to be \emph{equivalent} if the sequents $\Arr A^{\bot},B$ and $\Arr B^{\bot},A$ are both provable, in which case we write $A\equiv B$.
\end{definition}
\begin{remark}
Typically, the sequents of classical linear logic are right-sided. This is just a 
notational convention which is adoptable thanks to the presence of linear negation, in virtue of which a two-sided sequent $A_1,\ldots,A_n\Arr B_1,\ldots,B_m$ can be equivalently replaced by a right-sided sequent $\Arr A_1^{\bot},\ldots,A_n^{\bot},B_1,\ldots,B_m$.
\end{remark}
\begin{proposition}\label{ggg}
For every formula $A$ of CMLL, the following facts hold:
\begin{itemize}
\item[(i)] $A\equiv A^{\bot\bot}$.
\item[(ii)] $1\otimes A\equiv A$.
\item[(iii)] $\bot\parr A \equiv A$.
\end{itemize}
\end{proposition}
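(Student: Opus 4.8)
The plan is to unfold the definition of formula equivalence and produce the six small CMLL derivations that it requires. Recall from Definition~\ref{cmll} that $A\equiv B$ holds exactly when the two sequents $\Arr A^{\bot},B$ and $\Arr B^{\bot},A$ are both provable, so each of (i), (ii), (iii) amounts to two sequents. Throughout I will use the fact that $\Arr A^{\bot},A$ is provable for \emph{every} formula $A$: this is immediate if the (identity) rule is read without a restriction to atoms, and otherwise follows by a routine induction on $A$ ($\eta$-expansion), using (times) in the case $A=B\otimes C$ and (parr) in the case $A=B\parr C$, and being careful to keep the operands of $\otimes$ and $\parr$ in the order dictated by the De Morgan clauses.

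For (i), the key observation is that linear negation has been \emph{defined} on composite formulas by the clauses $(B\otimes C)^{\bot}\doteq B^{\bot}\parr C^{\bot}$ and $(B\parr C)^{\bot}\doteq B^{\bot}\otimes C^{\bot}$, and is involutive on the atoms and on the constants ($\alpha^{\bot\bot}=\alpha$, $(\alpha^{\bot})^{\bot\bot}=\alpha^{\bot}$, $1^{\bot\bot}=1$, $\bot^{\bot\bot}=\bot$). A straightforward structural induction then shows that $A^{\bot\bot}$ is \emph{syntactically identical} to $A$. Consequently both sequents $\Arr A^{\bot},A^{\bot\bot}$ and $\Arr (A^{\bot\bot})^{\bot},A$ collapse to $\Arr A^{\bot},A$, which is provable by the previous paragraph.

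For (ii), using $(1\otimes A)^{\bot}=1^{\bot}\parr A^{\bot}=\bot\parr A^{\bot}$, the two sequents to be derived are $\Arr A^{\bot},1\otimes A$ and $\Arr\bot\parr A^{\bot},A$. The first is obtained by applying (times) to the axiom $\Arr 1$ (rule (one)) and to $\Arr A^{\bot},A$. The second is obtained from $\Arr A^{\bot},A$ by first applying (false) to adjoin $\bot$, then applying (parr) to the pair $\bot,A^{\bot}$. For (iii), using $(\bot\parr A)^{\bot}=\bot^{\bot}\otimes A^{\bot}=1\otimes A^{\bot}$, the two sequents are $\Arr 1\otimes A^{\bot},A$ and $\Arr A^{\bot},\bot\parr A$; the first is again an application of (times) to $\Arr 1$ and $\Arr A^{\bot},A$, and the second is $\Arr A^{\bot},A$ followed by (false) and then (parr) applied to the pair $\bot,A$.

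There is no genuine obstacle here; the only point that wants a moment's attention is the order of the operands of $\parr$, since $\parr$ is not syntactically commutative and one must produce exactly $\bot\parr A^{\bot}$ (respectively $\bot\parr A$) so as to match the De Morgan dual of $1\otimes A$ (respectively to match the statement as given). Because sequents are multisets, this is handled automatically once one selects the correct two formulas on which to apply the (parr) rule. If one adopts the atomic-identity convention, the only nontrivial ingredient is the $\eta$-expansion lemma noted at the outset, and its proof is entirely routine.
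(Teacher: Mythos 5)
Your proof is correct and is exactly the routine verification the paper leaves implicit behind its one-word proof (``Straightforward''): unfolding $\equiv$, noting that $(-)^{\bot}$ is a defined, syntactically involutive operation so that (i) collapses to $\Arr A^{\bot},A$, and deriving the four remaining sequents for (ii) and (iii) from $\Arr\catebf{1}$, (false), (times) and (parr). Your side remark on the general identity sequent ($\eta$-expansion if (identity) were restricted to atoms) is also accurate, so there is nothing to correct.
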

\begin{proof}
Straightforward.
\end{proof}
\begin{notation}
In CMLL linear implication is defined. For every formulas $A,B$ of CMLL, put $A\multimap B\doteq A^{\bot}\parr B$. As a particular case of the rule (parr) in definition~\ref{cmll} one has the following derived rule
\[\AxiomC{$\Arr\Gamma,A^{\bot},B$}
\UnaryInfC{$\Arr\Gamma,A\multimap B$}
\DisplayProof\]
\end{notation}
The translation of the multiplicative intuitionistic fragment of linear logic into the classical one is in two pieces which are referred to as \emph{positive} and \emph{negative} with reference to the introduction of the so called \emph{polarized formulas}, positive and negative, which form a proper subset of the formulas of classical multiplicative linear logic. See~\cite{MR1714792}. 
The following definition introduces an adaptation of the positive-negative translation employed in loc. cit., suitable for the purposes of the present section.
\begin{definition}
The intuitionistic sytem $\rll^{\bot}$ translates into the classical system CMLL in accordance with the following positive and negative translations, \pos and \neg:
\[\left.\begin{array}{lllll}
\bot\pos\bot&&&&\bot\neg\catebf{1}\\
\alpha\pos\alpha&&&&\alpha\neg\alpha^{\bot}\\
A\otimes B\pos A\otimes B&&&&A\otimes B\neg B\multimap A^{\bot}\\
A\multimap B\pos A\multimap B&&&&A\multimap B\neg B^{\bot}\otimes A
\end{array}\right.\]
Every intuitionistic sequent $\Gamma\vdash A$ is translated into a classical sequent $\Arr\Gamma^{-},A^{+}$, where $\Gamma^{-}$ is the multiset of the classical formulas which are the negative translation of each intuitionistic formula in the multiset $\Gamma$ and $A^{+}$ is the classical formula which is the positive translation of the intuitionistic formula $A$.
\end{definition}

\begin{remark}
Because of points (ii) and (iii) in proposition~\ref{ggg}, using also that $\bot^{\bot}=\catebf{1}$ in CMLL,
it can verified that in particular $A^{\bot}\doteq A\multimap \bot$ in $\rll^{\bot}$, see notation~\ref{botbot}, translates in CMLL positively to $A^{\bot}$ and negatively to $A$.
\end{remark}
\begin{proposition}
If a sequent $\Gamma\vdash A$ is provable in $\rll^{\bot}$ then $\Arr\Gamma^-,A^+$ is provable in CMLL.
\end{proposition}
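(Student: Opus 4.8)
The plan is to proceed by induction on the derivation of $\Gamma\vdash A$ in $\rll^{\bot}$ (see definition~\ref{infrule}), showing at each step that applying the corresponding rule of CMLL to the already translated premises produces exactly the translated conclusion $\Arr\Gamma^{-},A^{+}$. Everything will hold only up to associativity and commutativity of $\otimes$ and $\parr$, which is harmless: such a rearrangement inside a formula merely permutes the relevant subproofs and leaves CMLL-provability unchanged.

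First I would record the elementary observation, by induction on the formula $A$ of $\rll^{\bot}$, that $A^{-}=(A^{+})^{\bot}$ up to associativity and commutativity of the multiplicative connectives; this uses $\bot^{\bot}=\catebf{1}$, the De Morgan laws and the involutivity of linear negation from definition~\ref{cmll}, and the defining abbreviation $A\multimap B\doteq A^{\bot}\parr B$. Alongside it I would establish the \emph{generalized identity} $\Arr A^{-},A^{+}$, again by induction on $A$: for an atom $\alpha$ it is the axiom (identity); for $\bot$ it is $\Arr\catebf{1},\bot$, obtained by (one) followed by (false); and for $A=B\otimes C$ or $A=B\multimap C$ it follows from the inductive hypotheses by one use of (times) and one of (parr), reordering the multiset as needed. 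Note that the rules (one) and (false) of CMLL enter only in this $\bot$-clause.

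The main induction then has one case per rule of $\rll^{\bot}$. The leaf case (Id), $A\vdash A$, is exactly the generalized identity $\Arr A^{-},A^{+}$. For ($\otimes$R), from $\Arr\Gamma^{-},A^{+}$ and $\Arr\Delta^{-},B^{+}$ the rule (times) gives $\Arr\Gamma^{-},\Delta^{-},A^{+}\otimes B^{+}$, and $(A\otimes B)^{+}=A^{+}\otimes B^{+}$. For ($\otimes$L), from $\Arr\Gamma^{-},A^{-},B^{-},C^{+}$ the rule (parr) gives $\Arr\Gamma^{-},A^{-}\parr B^{-},C^{+}$, and $A^{-}\parr B^{-}$ is $(A\otimes B)^{-}$. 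For ($\multimap$R), from $\Arr\Gamma^{-},A^{-},B^{+}$, rewriting $A^{-}$ as $(A^{+})^{\bot}$ and applying (parr) yields $\Arr\Gamma^{-},(A^{+})^{\bot}\parr B^{+}$, which is $(A\multimap B)^{+}$. For ($\multimap$L), from $\Arr\Gamma^{-},A^{+}$ and $\Arr\Delta^{-},B^{-},C^{+}$ the rule (times), joining the occurrences $B^{-}$ and $A^{+}$, gives $\Arr\Gamma^{-},\Delta^{-},B^{-}\otimes A^{+},C^{+}$, and $B^{-}\otimes A^{+}$ is $(A\multimap B)^{-}$.

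No step is individually difficult; what takes the most care is the bookkeeping in the preliminary observation — that the positive and negative translations of a formula form a complementary pair up to associativity and commutativity — since it is precisely this that makes each (times) or (parr) step land on the intended translated formula. This is most conspicuous in the ($\multimap$L) and ($\multimap$R) cases and in the generalized-identity lemma underlying the leaf case.
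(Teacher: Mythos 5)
Your proposal is correct: the induction on the $\rll^{\bot}$-derivation, supported by the generalized identity $\Arr A^{-},A^{+}$ and the observation that $A^{-}$ coincides with $(A^{+})^{\bot}$ up to associativity and commutativity of the multiplicatives, is exactly the routine argument the paper has in mind when it dismisses the proof as ``Straightforward''. Each rule case matches the corresponding CMLL rule as you describe, so there is nothing to add beyond the bookkeeping you already acknowledge.
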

\begin{proof}
Straightforward.
\end{proof}
\begin{example}\label{fyl}
The syllogism $M\multimap P, S\multimap P\vdash S\multimap P$ in $\rll^{\bot}$ translates into the sequent $\Arr P^{\bot}\otimes M,M^{\bot}\otimes S,S\multimap P$. Here is their respective intuitionistic and classical proofs:
\begin{eqnarray}\label{lala}
\AxiomC{}
\UnaryInfC{$S\vdash S$}
\AxiomC{}
\UnaryInfC{$M\vdash M$}
\AxiomC{}
\UnaryInfC{$P\vdash P$}
\BinaryInfC{$M,M\multimap P\vdash P$}
\BinaryInfC{$M\multimap P, S\multimap M, S\vdash P$}
\UnaryInfC{$M\multimap P, S\multimap M\vdash S\multimap P$}
\DisplayProof
\end{eqnarray}
\begin{eqnarray}\label{pol}
\AxiomC{}
\UnaryInfC{$\Arr S^{\bot},S$}
\AxiomC{}
\UnaryInfC{$\Arr M^{\bot},M$}
\AxiomC{}
\UnaryInfC{$\Arr P^{\bot},P$}
\BinaryInfC{$\Arr P^{\bot}\otimes M,M^{\bot},P$}
\BinaryInfC{$\Arr P^{\bot}\otimes M,M^{\bot}\otimes S, S^{\bot},P$}
\UnaryInfC{$\Arr P^{\bot}\otimes M,M^{\bot}\otimes S, S\multimap P$}
\DisplayProof
\end{eqnarray}
A full translation of every syllogistic argument $P_1,P_2\therefore C$ into a  right-sided sequent in CMLL can be found in~\cite{Abrusci}.
\end{example}
\begin{definition}
A \emph{proof structure} in CMLL is a graph built from the following components:
\begin{itemize}
\item[-] \emph{link}: \[\begin{tikzpicture}
\draw (0,0) node [text=black] {$A$};
\draw (2,0) node [text=black] {$A^{\bot}$};
\draw (0,.2) -- (0,.5);
\draw (0,.5) -- (2,.5);
\draw (2,.2) -- (2,.5);
\end{tikzpicture}\]
\item[-] \emph{logical rules}:
\[\AxiomC{$A$}
\AxiomC{$B$}
\BinaryInfC{$A\otimes B$}
\DisplayProof
\hspace{3cm}
\AxiomC{$A$}
\AxiomC{$B$}
\BinaryInfC{$A\parr B$}
\DisplayProof\]
\vspace{.3cm}
\[\AxiomC{}
\UnaryInfC{$\catebf{1}$}
\DisplayProof
\hspace{3cm}
\AxiomC{}
\UnaryInfC{$\bot$}
\DisplayProof\]
with in particular
\[\AxiomC{$A^{\bot}$}
\AxiomC{$B$}
\BinaryInfC{$A\multimap B$}
\DisplayProof\]
\end{itemize}
Each formula must be the conclusion of exactly one rule and a premise of at most one rule. Formulas which are not premises are called \emph{conclusion of the proof structure}, which are not ordered.

A \emph{proof-net} is a proof structure which is constructed in accordance with the rules of the sequent calculus:
\begin{itemize}
\item[-] Links are proof nets.
\item[-] If $A$ is a conclusion of a proof-net $\nu$ and $B$ is a conclusion of a proof-net $\nu'$, then 
\[\AxiomC{$\xymatrix@R=4ex{\nu\ar@{--}[d]\\A}$}
\AxiomC{$\xymatrix@R=4ex{\nu'\ar@{--}[d]\\B}$}
\BinaryInfC{$A\otimes B$}
\DisplayProof\]
is a proof-net.
\item[-] If $A$ and $B$ are conclusions of the same proof net $\nu$, then 
\[\begin{tikzpicture}
\draw (0,.9) node {$\nu$};
\draw (0,0) node {\AxiomC{$\xymatrix@R=4ex{
\ar@{--}[d]&\ar@{--}[d]\\
A&B}$}
\UnaryInfC{$A\parr B$}
\DisplayProof};
\end{tikzpicture}\]
is a proof net.
\item[-] $\AxiomC{}\UnaryInfC{$\catebf{1}$}\DisplayProof$ is a proof net.
\item[-] If $\nu$ is a proof net, then
\[\AxiomC{$\nu$}
\UnaryInfC{$\bot$}
\DisplayProof\]
is a proof net.
\end{itemize}
\end{definition}
The following statement is the essential content of theorems 2.7 and 2.9 in~\cite{MR899269}.
\begin{theorem}
If $\pi$ is a proof in the multiplicative fragment of classical linear sequent  calculus without cut of $\Arr A_1,\dots, A_n$, then one can associate with $\pi$ a proof-net $\pi^*$ whose terminal formulas are exactly one occurrence of $A_1$,\ldots, one occurrence of $A_n$. Viceversa, if $\nu$ is a proof-net, one can find a proof $\pi$ in the sequent calculus such that $\nu=\pi^*$.
\end{theorem}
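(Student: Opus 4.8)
The plan is to establish the two directions separately, in each case by a straightforward induction that matches an inductive clause defining proof-nets against the corresponding rule of the sequent calculus of CMLL. The point to keep in mind throughout is that the end-sequent $\Arr A_1,\dots,A_n$ is a multiset and that the conclusions of a proof-net are unordered, so every correspondence below is understood up to permutation; the invariant I would propagate is that the multiset of conclusion formulas of the proof-net equals the multiset of formulas in the sequent.

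For the forward direction I would define the association $\pi\mapsto\pi^*$ by induction on the height of the cut-free proof $\pi$. If $\pi$ is an instance of (identity) with end-sequent $\Arr A^{\bot},A$, I set $\pi^*$ to be a single link, whose two conclusions are exactly $A^{\bot}$ and $A$. If $\pi$ ends in (times), combining subproofs $\pi_1$ of $\Arr\Gamma,A$ and $\pi_2$ of $\Arr\Delta,B$ into $\Arr\Gamma,\Delta,A\otimes B$, then by induction $\pi_1^*$ and $\pi_2^*$ are proof-nets with $A$, respectively $B$, among their conclusions; applying the $\otimes$-clause of the definition to these two distinguished conclusions yields a proof-net $\pi^*$ whose conclusions are those of $\pi_1^*$ and $\pi_2^*$ with $A,B$ replaced by the single new conclusion $A\otimes B$, matching the end-sequent. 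If $\pi$ ends in (parr), passing from $\Arr\Gamma,A,B$ to $\Arr\Gamma,A\parr B$, then by induction $\pi_1^*$ has both $A$ and $B$ among its conclusions, and the $\parr$-clause — which requires exactly that both premises be conclusions of the \emph{same} proof-net — applies to produce $\pi^*$. The rules (one) and (false) are handled by the $\catebf{1}$-clause and the $\bot$-clause, respectively. In each case one checks that $\pi^*$ is a legitimate proof structure (every formula is the conclusion of exactly one rule and the premise of at most one) and that its conclusions are exactly $A_1,\dots,A_n$, so the invariant is preserved.

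For the converse I would argue by induction on the construction of the proof-net $\nu$, exploiting precisely the fact that in this paper proof-nets are \emph{defined} inductively by the five clauses rather than by a correctness criterion: this makes the last clause used in building $\nu$ play the role of the last rule of the sequent proof, so that no independent splitting lemma is needed. If $\nu$ is a link with conclusions $A^{\bot},A$, take $\pi$ to be the corresponding (identity) instance. If $\nu$ was obtained by the $\otimes$-clause from proof-nets $\nu_1,\nu_2$ with distinguished conclusions $A,B$, then by induction $\nu_1=\pi_1^*$ and $\nu_2=\pi_2^*$ for sequent proofs ending in $\Arr\Gamma,A$ and $\Arr\Delta,B$; applying (times) yields $\pi$, and comparing the two constructions shows $\pi^*=\nu$. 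The $\parr$-clause is inverted by the single-premise (parr) rule, the $\catebf{1}$- and $\bot$-clauses by (one) and (false); the inductive hypothesis on the strictly smaller subnets closes the argument.

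The part that requires care, and which I would treat as the main obstacle, is not a deep sequentialization argument but the bookkeeping of the correspondence between clauses and rules: I must verify that the $\otimes$-clause, which joins conclusions of two \emph{distinct} subnets, can only come from the binary (times) rule, while the $\parr$-clause, which joins two conclusions of a \emph{single} subnet, can only come from the unary (parr) rule, so that the inversion in the backward direction is unambiguous; and I must carry the multiset and unordered-conclusions identification through every step so that the terminal formulas of $\pi^*$ genuinely coincide with $A_1,\dots,A_n$. Finally, I would note that the forward map is not injective — proofs differing only by the order of application of independent rules yield the same $\pi^*$ — which is exactly the geometric content the statement records, although this is a remark rather than something the stated theorem obliges me to prove.
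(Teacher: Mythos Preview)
Your argument is correct. The paper does not give its own proof of this statement: its entire proof reads ``See~\cite{MR899269}'', i.e.\ it defers to Girard's original paper. So there is nothing to compare at the level of proof details; you have simply supplied what the paper omits.

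One point worth making explicit, which you already identify but which deserves emphasis: your backward direction works \emph{only} because this paper defines proof-nets inductively by the five closure clauses, rather than as those proof structures satisfying a correctness criterion (long trips, Danos--Regnier switchings, etc.). With the inductive definition, sequentialization is indeed the triviality you describe --- unwind the construction clause by clause --- and no splitting lemma is needed. In Girard's original formulation cited by the paper, proof-nets are carved out of proof structures by a correctness criterion, and the backward direction (Theorem~2.9 there) is the substantive sequentialization theorem. So your proof is adequate for the statement \emph{as set up in this paper}, but it would not transfer to the reference the paper actually cites; you might add a sentence flagging this, since otherwise a reader following the citation will be puzzled by the mismatch in difficulty.
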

\begin{proof}
See~\cite{MR899269}.
\end{proof}
\begin{example}
The proof net
\[\begin{tikzpicture}
\draw (0,0) node {\AxiomC{$P^{\bot}$}
\AxiomC{$M$}
\BinaryInfC{$P^{\bot}\otimes M$}
\DisplayProof};
\draw (3,0) node {\AxiomC{$M^{\bot}$}
\AxiomC{$S$}
\BinaryInfC{$M^{\bot}\otimes S$}
\DisplayProof};
\draw (6,0) node {\AxiomC{$S^{\bot}$}
\AxiomC{$P$}
\BinaryInfC{$S\multimap P$}
\DisplayProof};

\draw (-.56,.35) -- (-.56,1.2);
\draw (-.56,1.2) -- (6.6,1.2);
\draw (6.6,1.2) -- (6.6,.35);

\draw (.56,.35)--(.56,.7);
\draw (.56, .7)--(2.44,.7);
\draw (2.44,.7)--(2.44,.35);

\draw (3.65,.35)--(3.65,.7);
\draw (3.65,.7)--(5.4,.7);
\draw (5.4,.7)--(5.4,.35);
\end{tikzpicture}\]
is the one associated with proof~(\ref{pol}) and captures its essential geometric content. It has to be observed that it is a \emph{planar graph}, namely one without crossing edges. In~\cite{Abrusci} it is proved that this is a general feature of the proof-nets for the syllogisms in the first figure, which characterizes them.  Actually, this is a purely geometrical justification of the reason why 
such syllogisms deserve a privileged status and were referred to as \emph{perfect} in section~\ref{redrules}. 
The proof-nets of the valid syllogisms in the remaining figures are necessarily not planar. For instance, the following is the proof-net associated with the syllogism $P\multimap M^{\bot},M\otimes S\vdash S\otimes P^{\bot}$ in the fourth figure, see table~(\ref{questa}), whose classical counterpart is $\Arr M\otimes P,S\multimap M^{\bot},S\otimes P^{\bot}$:
\[\begin{tikzpicture}
\draw (0,0) node {\AxiomC{$M$}
\AxiomC{$P$}
\BinaryInfC{$M\otimes P$}
\DisplayProof};
\draw (3,0) node {\AxiomC{$S^{\bot}$}
\AxiomC{$M^{\bot}$}
\BinaryInfC{$S\multimap M^{\bot}$}
\DisplayProof};
\draw (6,0) node {\AxiomC{$S$}
\AxiomC{$P^{\bot}$}
\BinaryInfC{$S\multimap P$}
\DisplayProof};

\draw (-.56,.35) -- (-.56,.7);
\draw (-.56,.7) -- (3.65,.7);
\draw (3.65,.35)--(3.65,.7);

\draw (2.25,.35)--(2.25,1);
\draw (2.25,1)--(5.4,1);
\draw (5.4,1)--(5.4,.35);

\draw (.56,.35)--(.56,1.3);
\draw(.56,1.3)--(6.4,1.3);
\draw(6.4,1.3)--(6.4,.35);
\end{tikzpicture}\]
In loc. cit., the reduction rules for the syllogistics are shown to be rules for the geometrical transformation of a non-planar proof-net into a planar one.

Now, in favour of the diagrammatic sequent calculus of syllogisms that we developed in the previous sections we can say that it applies to the intuitionistic sequents which express the syllogisms directly, rather than applying to a required translation of them in a classical framework. Because of this it maintains a close syntactical resemblance with the formulas occurring in such intuitionistic sequents. One can compare the intuitionistic and classical sequents for the syllogism in example~\ref{fyl}, for instance. Furthermore, our diagrammatic calculus captures the essential geometric content of the intuitionistic syllogistic proofs, getting rid of the order of application of the inference rules in them, like proof-nets do for the translated corresponding classical proofs. The essential geometric content of proof~(\ref{lala}) is captured by the single line of inference 
\[\AxiomC{$S\arr M\arr P$}
\UnaryInfC{$S\arr P$}
\DisplayProof\]
\end{example}
Finally, we introduced a diagrammatic calculus which allows to easily circumvent the writing of possibly long and tedious, although easy, proofs, simultaneously providing a criterion for their correctness, see theorems~\ref{ddd} and~\ref{dsd}, as well a criterion for their rejection, see remark~\ref{doppo}.

\section{Conclusions and further work}
In this paper we considered a natural reading of the traditional Aristotelian syllogistics within a well-known intuitionistic fragment of multiplicative propositional linear logic and put it in connection with the diagrammatic logical calculus of syllogisms that we introduced in~\cite{Pagnan2012-PAGADC}. Subsequently, we focused our interest on the nineteenth century modern De Morgan style syllogistics with complemented terms and have been able to extend the previously mentioned diagrammatic calculus to cope with it and with a linear logic reading of it, naturally extending the one for the traditional case. 

Hence, this paper presents itself as a contribution on an extended 
diagrammatic calculus for the syllogistic, an extended syllogistics in intuitionistic linear logic, and the existing connections between the two.

We believe that the existing connections between the diagrammatic formalism in this paper and the linear logic deserve a deeper investigation. A bidimensional extension of the former towards the representation of more general schemes of reasoning in linear logic, than the syllogistic ones, will constitute in the future the matter for further work.

\bibliographystyle{plain}
\bibliography{BiblioTeX}

\begin{thebibliography}{10}

\bibitem{Abrusci}
V.~M. Abrusci.
\newblock Syllogisms and linear logic.
\newblock \\http://iml.univ-mrs.fr/ldp/Seminaire/SemLog1011.html, 2010.

\bibitem{275772}
G.~Allwein and J.~Barwise, editors.
\newblock {\em Logical reasoning with diagrams}.
\newblock Oxford University Press, Inc., New York, NY, USA, 1996.

\bibitem{MR0317886}
J.~Corcoran.
\newblock Completeness of an ancient logic.
\newblock {\em J. Symbolic Logic}, 37:696--702, 1972.

\bibitem{MR0497848}
J.~Corcoran.
\newblock Aristotle's natural deduction system.
\newblock In {\em Ancient logic and its modern interpretations ({P}roc.
  {S}ympos., {S}tate {U}niv. {N}ew {Y}ork, {B}uffalo, {N}.{Y}., 1972)}, pages
  85--131. Synthese Historical Library, Vol. 9. Reidel, Dordrecht, 1974.

\bibitem{DanosRegnier}
V.~Danos and L.~Regnier.
\newblock The structure of multiplicatives.
\newblock {\em Arch. Math. Logic}, 28:181--203, December 1998.

\bibitem{Dau06mathematicallogic}
F.~Dau.
\newblock Mathematical logic with diagrams. {B}ased on the existential graphs
  of {P}eirce, {H}abilitation thesis.
\newblock {http://www.dr-dau.net}, 2006.

\bibitem{Dau:2009:AFD:1560325.1560329}
F.~Dau.
\newblock The advent of formal diagrammatic reasoning systems.
\newblock In {\em Proceedings of the 7th International Conference on Formal
  Concept Analysis}, ICFCA '09, pages 38--56, Berlin, Heidelberg, 2009.
  Springer-Verlag.

\bibitem{Dau2009a}
F.~Dau.
\newblock Formal, diagrammatic logic with conceptual graphs.
\newblock In Pascal Hitzler and Henrik Scharfe, editors, {\em Conceptual
  Structures in Practice}, pages 17--44. Chapman and Hall/CRC, 2009.

\bibitem{DBLP:conf/iccs/DauF08}
F.~Dau and A.~Fish.
\newblock Conceptual spider diagrams.
\newblock In Peter~W. Eklund and Ollivier Haemmerl{\'e}, editors, {\em
  Proceedings of the 16th International Conference on Conceptual Structures
  (ICCS 2008)}, volume 5113 of {\em Lecture Notes in Computer Science}, pages
  104--118. Springer, 2008.

\bibitem{MR1714792}
P.~de~Groote.
\newblock An algebraic correctness criterion for intuitionistic multiplicative
  proof-nets.
\newblock {\em Theoret. Comput. Sci.}, 224(1-2):115--134, 1999.
\newblock Logical foundations of computer science (Yaroslavl, 1997).

\bibitem{ADeMorgan}
A.~De~Morgan.
\newblock On the symbols of logic, the theory of the syllogism, and in
  particular of the copula, and the application of the theory of probabilities
  to some questions of evidence.
\newblock {\em Transactions of the Cambridge Philosophical Society}, 9:79--127,
  1850.

\bibitem{de1911formal}
A.~De~Morgan.
\newblock {\em Formal logic (1847)}.
\newblock London: The Open Court Co., 1911.

\bibitem{MR1149957}
G.~Englebretsen.
\newblock Linear diagrams for syllogisms (with relationals).
\newblock {\em Notre Dame J. Formal Logic}, 33(1):37--69, 1992.

\bibitem{euler1843lettres}
L.~Euler and J.A.N. de~Caritat~Condorcet.
\newblock {\em Lettres {\`a} une princesse d'Allemagne: sur divers sujets de
  physique et de philosophie}.
\newblock Charpentier, 1843.

\bibitem{Gardner}
M.~Gardner.
\newblock {\em Logic Machines and Diagrams}.
\newblock University of Chicago Press, Chicago, 1982.

\bibitem{GilHK99}
J.~Gil, J.~Howse, and S.~Kent.
\newblock Constraint diagrams: A step beyond uml.
\newblock In Donald Firesmith, Richard Riehle, Gilda Pour, and Bertrand Meyer,
  editors, {\em TOOLS 1999: 30th International Conference on Technology of
  Object-Oriented Languages and Systems, Delivering Quality Software - The Way
  Ahead, 1-5 August 1999, Santa Barbara, CA, USA}, pages 453--463. IEEE
  Computer Society, 1999.

\bibitem{Gil1999a}
J.~Gil, J.~Howse, and S.~Kent.
\newblock Formalising spider diagrams.
\newblock In {\em Proceedings of IEEE Symposium on Visual Languages}, pages
  209--212. IEEE, 1999.

\bibitem{GilHK01}
J.~Gil, J.~Howse, and S.~Kent.
\newblock Towards a formalization of constraint diagrams.
\newblock In {\em 2002 IEEE CS International Symposium on Human-Centric
  Computing Languages and Environments (HCC 2001), September 5-7, 2001 Stresa,
  Italy}, page~72. IEEE Computer Society, 2001.

\bibitem{MR899269}
J.-Y. Girard.
\newblock Linear logic.
\newblock {\em Theoret. Comput. Sci.}, 50(1):101, 1987.

\bibitem{MR1356006}
J.-Y. Girard.
\newblock Linear logic: its syntax and semantics.
\newblock In {\em Advances in linear logic ({I}thaca, {NY}, 1993)}, volume 222
  of {\em London Math. Soc. Lecture Note Ser.}, pages 1--42. Cambridge Univ.
  Press, Cambridge, 1995.

\bibitem{MR1157804}
J.-Y. Girard, A.~Scedrov, and P.~J. Scott.
\newblock Bounded linear logic: a modular approach to polynomial-time
  computability.
\newblock {\em Theoret. Comput. Sci.}, 97(1):1--66, 1992.

\bibitem{MR1003608}
J.-Y. Girard, P.~Taylor, and Y.~Lafont.
\newblock {\em Proofs and types}, volume~7 of {\em Cambridge Tracts in
  Theoretical Computer Science}.
\newblock Cambridge University Press, Cambridge, 1989.

\bibitem{MR0107598}
K.~G{\L}azowska.
\newblock The structure of valid $n$-term syllogisms.
\newblock {\em Studia Logica}, 8:249--257, 1958.

\bibitem{MR1271699}
E.~Hammer.
\newblock Reasoning with sentences and diagrams.
\newblock {\em Notre Dame J. Formal Logic}, 35(1):73--87, 1994.

\bibitem{hammer1995logic}
E.~Hammer.
\newblock {\em Logic and Visual Information}.
\newblock Studies in Logic, Language, and Information. Center for the Study of
  Language and Inf, 1995.

\bibitem{MR1408441}
E.~Hammer and N.~Danner.
\newblock Towards a model theory of {V}enn diagrams.
\newblock In {\em Logical reasoning with diagrams}, volume~6 of {\em Stud.
  Logic Comput.}, pages 109--127. Oxford Univ. Press, New York, 1996.

\bibitem{hammer:evl}
E.~Hammer and S.~J. Shin.
\newblock {Euler's Visual Logic}.
\newblock {\em History and Philosophy of Logic}, 19:1--29, 1998.

\bibitem{MR2464674}
M.~E. Hobart and J.~L. Richards.
\newblock De {M}organ's logic.
\newblock In {\em Handbook of the history of logic. {V}ol. 4. {B}ritish logic
  in the nineteenth century}, volume~4 of {\em Handb. Hist. Log.}, pages
  283--329. Elsevier/North-Holland, Amsterdam, 2008.

\bibitem{howse:sasacdrs}
J.~Howse, F.~Molina, and J.~Taylor.
\newblock {SD2}: A sound and complete diagrammatic reasoning system.
\newblock In {\em Proceedings VL 2000: IEEE Symposium on Visual Languages,
  Seattle, USA}, pages 127--136. IEEE Computer Society Press, 2000.

\bibitem{Howse99reasoningwith}
J.~Howse, O.~Molina, and J.~Taylor.
\newblock Reasoning with spider diagrams.
\newblock In {\em In Proceedings of IEEE Symposium on Visual Languages (VL99)},
  pages 138--147. IEEE Computer Society Press, 1999.

\bibitem{Howse00onthe}
J.~Howse, O.~Molina, and J.~Taylor.
\newblock On the completeness and expressiveness of spider diagram systems.
\newblock In {\em Proc. Diagrams 2000, Edinburgh, Sept 2000. LNAI 1889}, pages
  26--41. Springer-Verlag, 2000.

\bibitem{Howse01spiderdiagrams:}
J.~Howse, O.~Molina, J.~Taylor, S.~Kent, and J.~Gil.
\newblock Spider diagrams: A diagrammatic reasoning system.
\newblock {\em Journal of Visual Languages and Computing}, 12:299--324, 2001.

\bibitem{Howse:2005:SD}
J.~Howse, G.~Stapleton, and J.~Taylor.
\newblock Spider diagrams.
\newblock {\em LMS Journal of Computation and Mathematics}, 8:145--194, 2005.

\bibitem{Kent_97_Constraint}
S.~Kent.
\newblock Constraint diagrams: visualizing invariants in object-oriented
  models.
\newblock In {\em OOPSLA '97: Proceedings of the 12th ACM SIGPLAN conference on
  Object-oriented programming, systems, languages, and applications}, pages
  327--341, New York, NY, USA, 1997. ACM.

\bibitem{lamarche:inria-00347336}
F.~Lamarche.
\newblock {Proof Nets for Intuitionistic Linear Logic: Essential Nets}.
\newblock Research report, 2008.
\newblock http://hal.inria.fr/inria-00347336/PDF/prfnet1.pdf.

\bibitem{MR1776228}
O.~Lemon and I.~Pratt.
\newblock On the insufficiency of linear diagrams for syllogisms.
\newblock {\em Notre Dame J. Formal Logic}, 39(4):573--580, 1998.

\bibitem{Lukasiewicz}
J.~{\L}ukasiewicz.
\newblock {\em Aristotle's syllogistic from the standpoint of modern formal
  logic}.
\newblock Oxford, at the Clarendon Press, 1951.

\bibitem{Stapleton_incorporatingnegation}
J.~Masthoff and G.~Stapleton.
\newblock Incorporating negation into visual logics: A case study using {E}uler
  diagrams.
\newblock
  \\http://homepages.abdn.ac.uk/j.masthoff/pages/Publications/VLC07.pdf, 2007.

\bibitem{Meredith}
C.A. Meredith.
\newblock The figures and moods of the $n$-term aristotelian syllogism.
\newblock {\em Dominican Studies}, 6:42--47, 1953.

\bibitem{Mineshima:2008:DRS:1432522.1432547}
K.~Mineshima, M.~Okada, Y.~Sato, and R.~Takemura.
\newblock Diagrammatic reasoning system with {E}uler circles: Theory and
  experiment design.
\newblock In {\em Proceedings of the 5th international conference on
  Diagrammatic Representation and Inference}, Diagrams '08, pages 188--205,
  Berlin, Heidelberg, 2008. Springer-Verlag.

\bibitem{Mineshima2012-MINADI}
K.~Mineshima, M.~Okada, and R.~Takemura.
\newblock A diagrammatic inference system with {E}uler circles.
\newblock {\em Journal of Logic, Language and Information}, 21(3):365--391,
  2012.

\bibitem{SatoMT10}
K.~Mineshima, Y.~Sato, and R.~Takemura.
\newblock The efficacy of {E}uler and {V}enn diagrams in deductive reasoning:
  Empirical findings.
\newblock In Ashok~K. Goel, Mateja Jamnik, and N.~Hari Narayanan, editors, {\em
  Diagrammatic Representation and Inference, 6th International Conference,
  Diagrams 2010, Portland, OR, USA, August 9-11, 2010. Proceedings}, volume
  6170 of {\em Lecture Notes in Computer Science}, pages 6--22. Springer, 2010.

\bibitem{SatoCognEff}
K.~Mineshima, Y.~Sato, R.~Takemura, and M.~Okada.
\newblock On the cognitive efficacy of {E}uler diagrams in syllogistic
  reasoning: a relational perspective.
\newblock In {\em Proceedings of the 3rd International Workshop on Euler
  Diagrams (Euler Diagrams 2012), CEUR Workshop Proceedings, vol. 854}, pages
  17--31, 2012.

\bibitem{molina2001reasoning}
F.~Molina.
\newblock {\em Reasoning with Extended Venn-Peirce Diagrammatic Systems}.
\newblock Brighton: University of Brighton, 2001.

\bibitem{Pagnan}
R.~Pagnan.
\newblock A diagrammatic calculus of syllogisms.
\newblock In A.~Moktefi and S.~J. Shin, editors, {\em Visual Reasoning with
  diagrams}, Logica Universalis.
\newblock To appear
  http://www.springer.com/birkhauser/mathematics/book/978-3-0348-0599-5.

\bibitem{Pagnan2012-PAGADC}
R.~Pagnan.
\newblock A diagrammatic calculus of syllogisms.
\newblock {\em Journal of Logic, Language and Information}, 21(3):347--364,
  2012.

\bibitem{roberts1973existential}
D.~D. Roberts.
\newblock {\em The Existential Graphs of Charles S. Peirce}.
\newblock Approaches to Semiotics. Mouton, 1973.

\bibitem{MR1312613}
S.~J. Shin.
\newblock {\em The logical status of diagrams}.
\newblock Cambridge University Press, Cambridge, 1994.

\bibitem{MR1408440}
S.~J. Shin.
\newblock Situation-theoretic account of valid reasoning with {V}enn diagrams.
\newblock In {\em Logical reasoning with diagrams}, volume~6 of {\em Stud.
  Logic Comput.}, pages 81--108. Oxford Univ. Press, New York, 1996.

\bibitem{shin2002iconic}
S.~J. Shin.
\newblock {\em The Iconic Logic of Peirce's Graphs}.
\newblock Bradford Books. MIT Press, 2002.

\bibitem{Smiley1973-SMIWIA}
T.~J. Smiley.
\newblock What is a syllogism?
\newblock {\em Journal of Philosophical Logic}, 2(1):136--154, 1973.

\bibitem{Smyth}
M.B. Smyth.
\newblock A diagrammatic treatment of syllogistic.
\newblock {\em Notre Dame Journal of Formal Logic}, 12(4):483--488, 1971.

\bibitem{Sowa:1984:CSI:4569}
J.~F. Sowa.
\newblock {\em Conceptual structures: information processing in mind and
  machine}.
\newblock Addison-Wesley Longman Publishing Co., Inc., Boston, MA, USA, 1984.

\bibitem{Stapleton04reasoningwith}
G.~Stapleton.
\newblock Reasoning with constraint diagrams.
\newblock Technical report, School of Computing, Mathematical and Information
  Sciences, 2004.
\newblock http://eprints.brighton.ac.uk/3276/1/detaileddescription.pdf.

\bibitem{stapleton:a}
G.~Stapleton.
\newblock A survey of reasoning systems based on {E}uler diagrams.
\newblock {\em Electr. Notes Theor. Comput. Sci.}, 134:127--151, 2005.

\bibitem{Stapleton03aconstraint}
G.~Stapleton, J.~Howse, and J.~Taylor.
\newblock A constraint diagram reasoning system.
\newblock In {\em Proc. Distributed Multimedia Systems, International
  Conference on Visual Languages and Computing (VLC '03}, page 263270, 2003.

\bibitem{Stapleton:2004:ESD:1094471.1094501}
G.~Stapleton, J.~Howse, J.~Taylor, and S.~Thompson.
\newblock The expressiveness of spider diagrams.
\newblock {\em J. Log. and Comput.}, 14(6):857--880, December 2004.

\bibitem{Stapleton04whatcan}
G.~Stapleton, J.~Howse, J.~Taylor, S.~Thompson, and Visual~Modelling Group.
\newblock What can spider diagrams say.
\newblock In {\em Proc. 2004, LNAI 2980, pp 112–127}, pages 112--127.
  Springer-Verlag, 2004.

\bibitem{swoboda:ievrs}
N.~Swoboda.
\newblock Implementing {E}uler/{V}enn reasoning systems.
\newblock In M.~Anderson, B.~Meyer, and P.~Olivier, editors, {\em Diagrammatic
  Representation and Reasoning}, pages 371--386. Springer-Verlag, 2001.

\bibitem{swoboda:acsotdaiohrs}
N.~Swoboda and G.~Allwein.
\newblock {A case study of the design and implementation of heterogeneous
  reasoning systems}.
\newblock In L.~Magnani, N.~J. Neressian, and C.~Pizzi, editors, {\em Logical
  and Computational Aspects of Model-Based Reasoning}, pages 1--18. Kluwer
  Academic Publishers, 2002.

\bibitem{swoboda:mhs}
N.~Swoboda and G.~Allwein.
\newblock {Modeling Heterogeneous Systems}.
\newblock In {\em Proceedings of 2nd International Conference on the Theory and
  Application of Diagrams}, volume 2317 of {\em LNAI}, pages 131--145, Gardens,
  Georgia, USA, 2002. Springer-Verlag.

\bibitem{swoboda:udttvevhaevfhroi:02}
N.~Swoboda and G.~Allwein.
\newblock Using {DAG} transformations to verify {E}uler/{V}enn homogeneous and
  {E}uler/{V}enn {FOL} heterogeneous rules of inference.
\newblock {\em Journal on Software and System Modeling}, 3(2):136--149, 2004.

\bibitem{Swoboda:2005:HRE:1705531.1705841}
N.~Swoboda and G.~Allwein.
\newblock Heterogeneous reasoning with {E}uler/{V}enn diagrams containing named
  constants and {FOL}.
\newblock {\em Electron. Notes Theor. Comput. Sci.}, 134:153--187, June 2005.

\bibitem{swoboda:ticoevd}
N.~Swoboda and J.~Barwise.
\newblock The information content of {E}uler/{V}enn diagrams.
\newblock In {\em Proceedings LICS workshop on Logic and Diagrammatic
  Information}, 1998.

\bibitem{Venn1880}
J.~Venn.
\newblock On the diagrammatic and mechanical representation of propositions and
  reasonings.
\newblock {\em The London, Edinburgh and Dublin Philosophical Magazine and
  Journal of Science}, 10(58):1--18, 1880.

\bibitem{zeman1964graphical}
J.~J. Zeman.
\newblock {\em The graphical logic of C. S. Peirce}.
\newblock University of Chicago., 1964.

\end{thebibliography}

\end{document}